\newtheorem{theorem}{Theorem}
\newtheorem{Proposition}[theorem]{Proposition}
\newcommand{\Var}{\mathrm{Var}}
\newcommand{\Cov}{\mathrm{Cov}}
\newcommand{\Corr}{\mathrm{corr}}
\newcommand{\indep}{\perp \!\!\! \perp}
\title{Peak Inference for Gaussian Random Fields on a Lattice}
\date{\today}
\newcommand{\nt}[1]{\color{black} #1}
\newcommand{\ntb}[1]{\color{black} #1}
\author{
  Tuo~Lin \\
  Department of Biostatistics and UF Health Cancer Center\\
  University of Florida\\
  Gainesville, FL 32603 \\
  \texttt{tuolin@ufl.edu} \\
   \And
 Armin~Schwartzman \\
  Division of Biostatistics and Hal{\i}c{\i}o\u{g}lu Data Science Institute\\
  University of California, San Diego\\
  San Diego, CA 92122 \\
  \texttt{armins@ucsd.edu} \\
  \And
  Samuel~Davenport\\
  Division of Biostatistics\\
  University of California, San Diego\\
  San Diego, CA 92122 \\
  \texttt{sdavenport@health.ucsd.edu}
}
\begin{document}
\maketitle

\begin{abstract}
In this work we develop a Monte Carlo method to compute the height distribution of local maxima of a stationary Gaussian or \ntb{Gaussian-related} random field that is observed on a regular lattice. We show that our method can be used to provide valid peak based inference in \ntb{datasets with low levels of smoothness, where existing formulae derived for continuous domains are} not accurate. We also extend the methods in \cite{worsley2005improved} and \cite{taylor2007maxima} to compute the \nt{peak} height distribution and compare them with our approach. Lastly, we apply our method to a task fMRI dataset to show how it can be used in practice. 
\end{abstract}

\keywords{peak inference\and peak height \and fMRI \and connectivity \and discrete lattice \and local maxima}

\section{Introduction}
Statistical parametric mapping (SPM) is widely used as a tool to conduct statistical inference on neuroimaging data \citep{friston1989localisation, worsley1992three, worsley1996unified}. Recently, \cite{eklund2016cluster, eklund2019cluster} investigated the validity of cluster size and voxelwise inference based on random field theory (RFT) and found that a number of the assumptions that have been traditionally made do not hold in practice. \ntb{One important assumption}, which we address in this work, is that the data is sufficiently smooth so that it can be treated as a continuous random field. Inference based on peaks or local maxima, recognized as topological features of the statistical summary maps \citep{chumbley2009false, chumbley2010topological, friston1989localisation, schwartzman2011multiple, cheng2017multiple} strongly relies on this assumption. In this paper, we circumvent this assumption and develop a method for performing peak inference that is valid for data observed on a regular lattice.

The traditional approach to obtaining peak $p$-values in fMRI analysis has been to assume that the data is distributed as a smooth stationary Gaussian random field. Given this, \cite{nosko1969local, adler1981geometry, cheng2015distribution} showed that the distribution of the height of peaks above a peak-defining threshold $u \in \mathbb{R}$ is asymptotically exponential (as $u \rightarrow \infty$). The choice of $u$ is somewhat arbitrary and this result only holds in practice for reasonably large choices of $u$. Recently, \cite{cheng2015distribution} obtained a more general formula to calculate the exact height distribution of local maxima in an isotropic Gaussian random field, that is valid for all peak heights and does not require a pre-threshold. This distribution can be used to compute a $p$-value at each local maximum based on its height. The formula has a single parameter $\kappa$, which only depends on the shape of the auto-correlation function near the origin, and is invariant under spatial scaling. While elegant, the formula is only accurate when the Gaussian random field is sampled on a continuous domain, instead of a discrete lattice grid, which in practice can require a high level of applied smoothing. To give context, \cite{schwartzman2019peak} suggest that this formula is imprecise when data has an intrinsic FWHM that is lower than 7 voxels. However, since the typical smoothing kernel in an fMRI study has an FWHM of 3 voxels \nt{\citep{eklund2016cluster}}, using this formula provides conservative $p$-values in practice. Moreover, the isotropic assumption is rather strong and is unlikely to hold in practice. Thus it is desirable to directly calculate the height distribution of local maxima sampled on a discrete lattice, which we shall refer to as discrete local maxima (DLM).


In order to address the difference between a discrete lattice and a smooth random field, \cite{worsley2005improved} and \cite{taylor2007maxima} introduced a method that targets the distribution of the global maximum on a lattice in order to provide control of the voxelwise family-wise error rate. Although this method aims to infer on the global maximum, it can also be used, after some modifications which we develop here, to compute the height distribution of local maxima. However, their approach is limited in that it is only valid for a narrow class of Gaussian random fields, namely the ones that arise as the result of convolving Gaussian white noise with a separable kernel. In addition, they require local maxima to be defined as those voxels with height values larger than \nt{their} immediate neighbors along the coordinate axes, excluding diagonally adjacent neighbors. Figure \ref{fig1} gives a rough idea of why this assumption is restrictive in practice by comparing density plots from peak height distributions calculated from both Worsley and Taylor's analytical DLM approach (which we shall refer to as ADLM) and Cheng and Schwartzman's continuous RFT approach. 

\begin{figure}[!htp]
  \centering
  \includegraphics[trim=50 50 50 50, clip, width = 50mm]{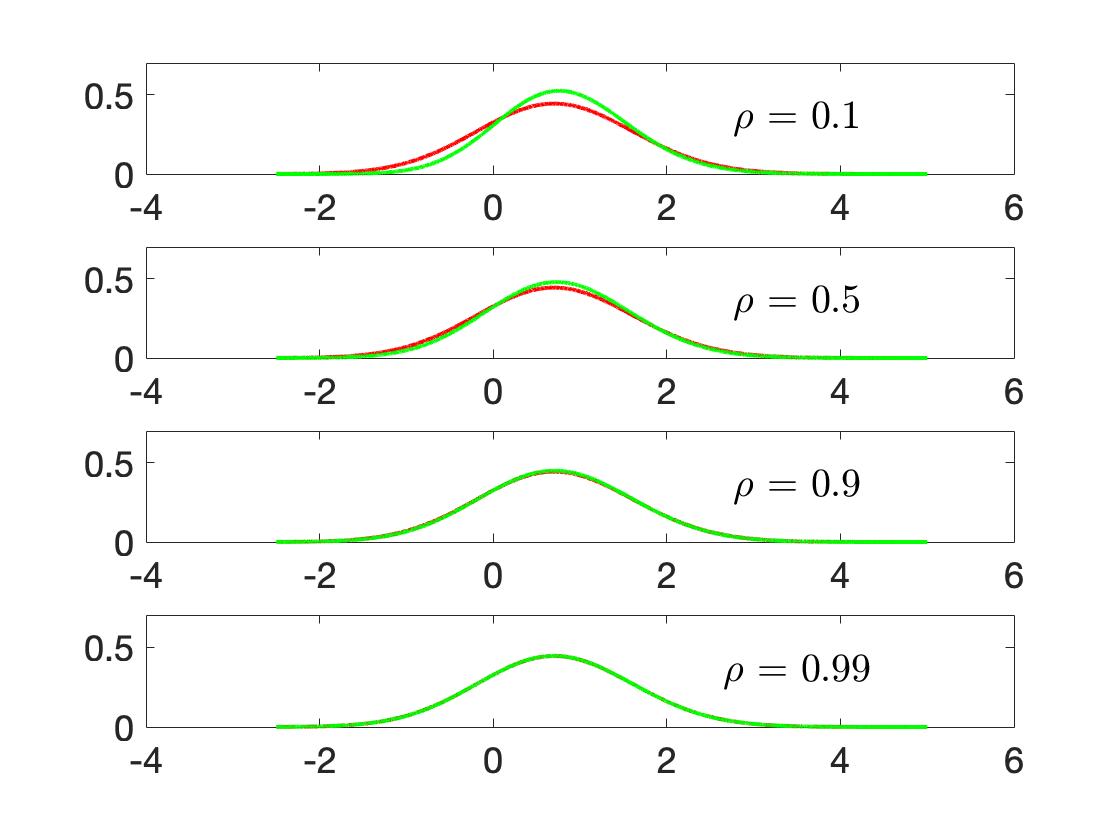}
  \includegraphics[trim=50 50 50 50, clip, width = 50mm]{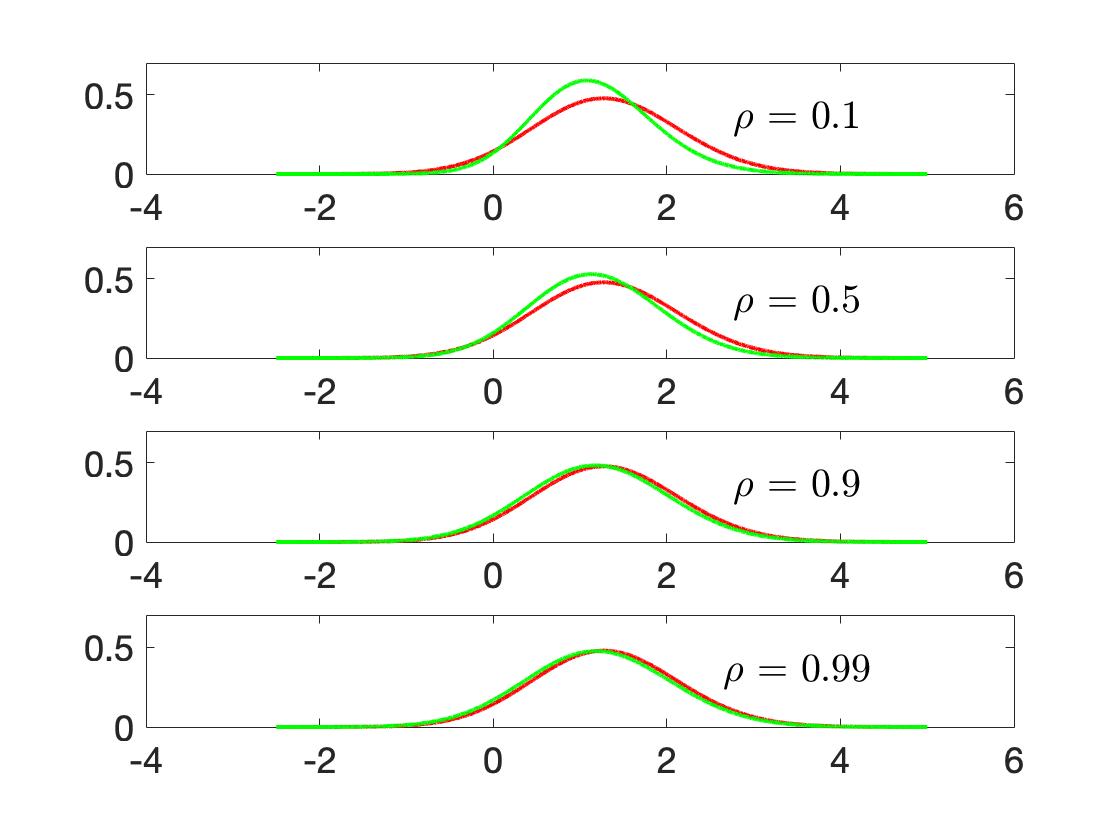}
  \includegraphics[trim=50 50 50 50, clip, width = 50mm]{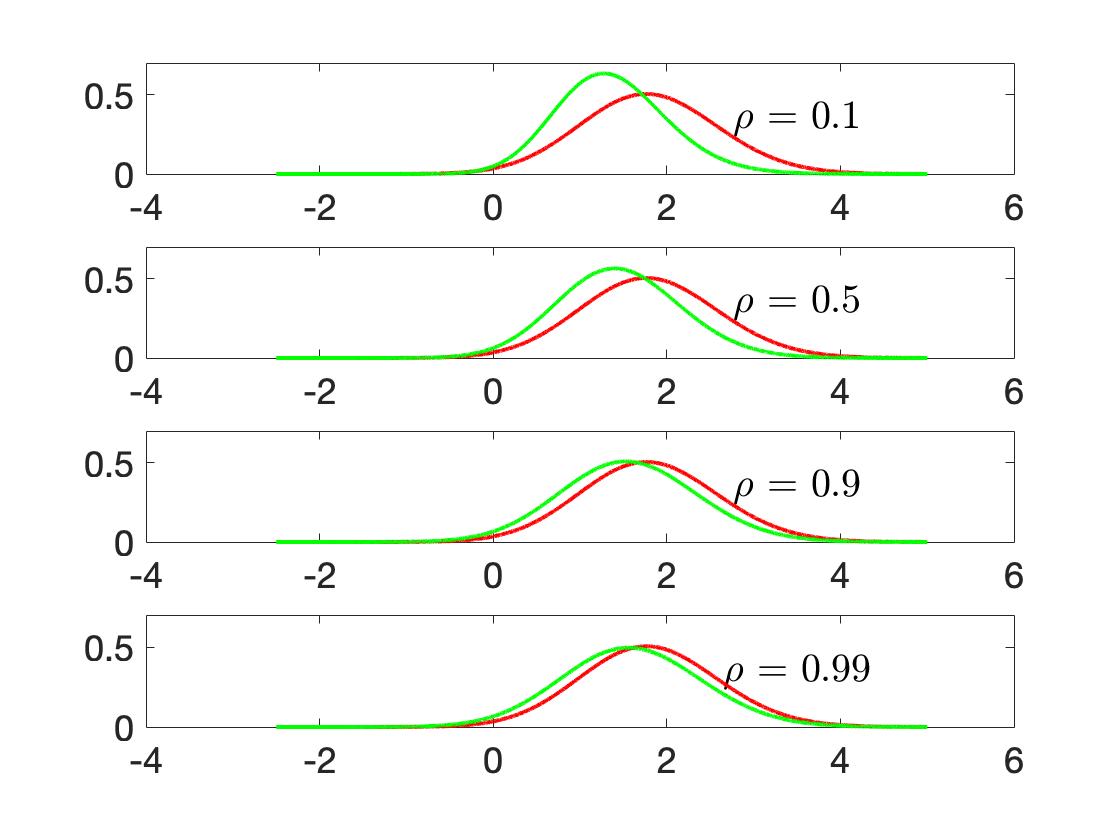}
  \caption{Theoretical peak height density function for local maxima. Left: 1D, Middle: 2D, Right: 3D. Each row is calculated using a different correlation between adjacent voxels. In each plot the green curve is from ADLM and the red curve is from the continuous RFT method. In \ntb{1D (left)}, when $\rho$ is small, the ADLM density is narrower, but as $\rho$ increases, the discrepancy disappears and the two methods converge. In 2D \ntb{(middle)} and 3D \ntb{(right)}, the differences between the two methods remain for high $\rho$, with the ADLM density shifted to the left. This occurs because ADLM does not consider diagonal voxels as neighbors, so \ntb{the} distribution of local maxima obtained from this method consists of smaller height values and \ntb{thus the left shift relative to the continuous method increases as the dimension increases}. Note that in 1D there are no diagonal voxels and so convergence occurs.}
\label{fig1}
\end{figure}

To address these issues, we propose a computation-based method called Monte Carlo DLM (MCDLM) that works for any stationary Gaussian random field under arbitrary connectivity (i.e. where local maxima are defined with respect to any desired neighborhood). This improves upon ADLM in that it  allows the accurate computation of the height distribution of local maxima on a lattice without assuming a separable covariance function and permitting a range of neighborhood \ntb{structures}. Our approach works by calculating the joint covariance of a voxel and its neighbors either theoretically or via empirical estimation, and then generating random samples from a multivariate Gaussian distribution with the calculated covariance and storing the samples that have larger height values than their neighbors. This provides an empirical cdf for the height of local maxima via numerical integration. A $p$-value for an observed peak in data can be computed \ntb{from the empirical cdf}. We also extend this approach to calculate the height distribution of local maxima of $t$-fields, by generating the random samples from a multivariate $t$-distribution.

The structure of this paper is as follows. Section \ref{sec2} provides details about how to calculate the peak height distribution using continuous RFT, the ADLM and the MCDLM method. \ntb{Sections} \ref{sec3} and \ref{sec4} apply MCDLM to isotropic Gaussian random fields, $t$-fields and stationary Gaussian fields with known nonseparable and unknown covariance and \ntb{compare} its performance with ADLM and continuous RFT. Section \ref{sec3} provides the simulation setup and Section \ref{sec4} \ntb{reports} all the simulation results. Section \ref{sec5} discusses a real \ntb{data} example of applying the proposed methods. Section \ref{sec6} gives the concluding remarks. Code to reproduce the results of this paper are available on GitHub (https://github.com/tuolin123/DLM-Code) and in the RFTtoolbox (https://github.com/sjdavenport/RFTtoolbox).

\section{Theory and methods for calculating the height distribution of local maxima}
\label{sec2}
Let $\{Z(s), s\in \mathcal{L}\}$ be a real-valued stationary Gaussian random field parametrized on a $D$-dimensional set $\mathcal{L}$, where $D \in \mathbb{Z}^{+}$. We assume that $\mathcal{L}$ is a regularly spaced discrete lattice, in particular that 
\begin{align*}
    \mathcal{L} \subset \left\{\sum_{d=1}^D k_dv_de_d: k_d \in \mathbb{Z}, v_d \in \mathbb{R}^{+} \text{ for } 1 \leq d \leq D\right\},
\end{align*}
where $(e_d)_{1 \leq d \leq D}$ is the standard basis in $\mathbb{R}^D$, $v_d$ represents the step size, and \ntb{$k_d$ is the scale of image} in the $d$th direction. Our interest lies in calculating the peak height distribution, which for $u \in \mathbb{R}$, is defined as
\begin{align}
P( Z(s) > u | s\ \text{is a local maximum}) = P[Z(s)>u| \ntb{Z(s) > Z(t)}, \forall t \in \mathcal{N}(s)], \label{eqn2.2}
\end{align}
where $\mathcal{N}(s)$ denotes the set of neighbors of $s\in \mathcal{L}$ in the discrete lattice. The most relevant neighborhoods are the partially connected (PC) and fully connected (FC) \ntb{neighborhoods} that are respectively defined as 
\begin{align}
    \quad\mathcal{N}_{PC}(s) &= \left\lbrace s + k_d v_d e_d: k_d \in \lbrace -1,1\rbrace \text{ for } 1 \leq d \leq D\right\rbrace \text{ and }\label{partial}\\ 
    \quad\mathcal{N}_{FC}(s) &= \left\lbrace s +  \sum_{d=1}^D k_d v_d e_d: k_d \in \lbrace -1,0,1\rbrace \text{ for } 1 \leq d \leq D \right\rbrace \mathbin{\bigg\backslash} \left\{s\right\}. \label{connectivity}
\end{align}

Figure \ref{fig0} illustrates the two types of neighborhoods for a point, $s_5$, on a 2D regular lattice. If $s_5$ is partially connected to the adjacent pixels in the horizontal and vertical directions, then $\mathcal{N}_{PC}(s_5) = \left\{s_2,s_4,s_6,s_8\right\}$, shown in the left of Figure \ref{fig0}. If $s_5$ is fully connected, meaning it is connected to pixels in the horizontal, vertical and diagonal directions, then $\mathcal{N}_{FC}(s_5) = \left\{s_1,s_2,s_3,s_4,s_6,s_7,s_8,s_9\right\},$ shown in the right of Figure \ref{fig0}.
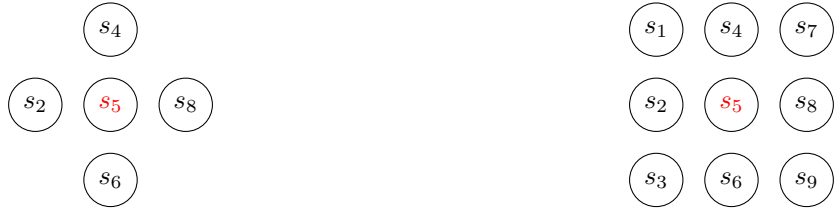
\begin{figure}[!htp]
\begin{minipage}{0.5\textwidth}
\centering
\begin{tikzpicture}[main/.style = {draw, circle}] 
\node[main] (2) at (1,2) {$s_2$}; 
\node[main] (4) at (2,3) {$s_4$}; 
\node[main] (5) at (2,2) {$\color{red}s_5$}; 
\node[main] (6) at (2,1) {$s_6$}; 
\node[main] (8) at (3,2) {$s_8$}; 
\end{tikzpicture} 
\end{minipage}\hfill
\begin{minipage}{0.5\textwidth}
\centering
\begin{tikzpicture}[main/.style = {draw, circle}] 
\node[main] (1) at (1,1) {$s_3$}; 
\node[main] (2) at (1,2) {$s_2$}; 
\node[main] (3) at (1,3) {$s_1$}; 
\node[main] (4) at (2,3) {$s_4$}; 
\node[main] (5) at (2,2) {$\color{red}s_5$}; 
\node[main] (6) at (2,1) {$s_6$}; 
\node[main] (7) at (3,3) {$s_7$}; 
\node[main] (8) at (3,2) {$s_8$}; 
\node[main] (9) at (3,1) {$s_9$}; 
\end{tikzpicture} 
\end{minipage}
\caption{Local pixel neighborhood in 2D. The partially and fully connected neighborhoods are shown on the left and right respectively. The point $s_5$ in red is considered a local maximum if its value is larger than its neighbors.}\label{fig0}
\end{figure}

\subsection{Analytical DLM \ntb{(ADLM)} method} 
\label{sec2.1}
The DLM approach of \cite{worsley2005improved} and \cite{taylor2007maxima} provides closed form expressions for the family-wise error rate in testing the \ntb{presence of} signals in data \nt{where the noise is assumed to arise from Gaussian white noise smoothed with a separable} Gaussian kernel. They do not explicitly focus on the peak height distribution. \ntb{However,} in inferring on the global maximum they calculate probabilities of the form $ P\left[\{Z(s) > u\}\cap_{t\in\mathcal{N}_{PC}(s)}\{Z(t) < Z(s)\}\right] $. These probabilities can be used to calculate a peak height distribution for the partially connected neighborhood since \eqref{eqn2.2} can be written as
\begin{equation}
P [Z(s) > u|Z(t) < Z(s), \forall t \in \mathcal{N}_{PC}(s)] = \frac{P\left[\{Z(s) > u\}\cap_{t\in\mathcal{N}_{PC}(s)}\{Z(t) < Z(s)\}\right]}{P[Z(t) < Z(s), \forall t \in \mathcal{N}_{PC}(s)]}. \label{eqn2.9}
\end{equation}
Using \nt{the} results of \cite{worsley2005improved} and \cite{taylor2007maxima}, we \nt{can} expand the left hand side of (\ref{eqn2.9}) as $\int_u^{\infty} f_{\text{DLM}}(z) dz$, where $f_{\text{DLM}}(z)$ is the density function of the \ntb{peak} height distribution. Under the assumption that $Z$ \nt{has a separable covariance function}, \nt{as we show in Appendix \ref{appendix.a2}}, $f_{\text{DLM}}(z)$ has the form
\begin{align}
f_{\text{DLM}}(z) = \frac{\prod_{d=1}^DQ(\rho_d,z)\phi(z)}{\int_{-\infty}^\infty\bigg(\prod_{d=1}^DQ(\rho_d,z)\bigg)\phi(z)dz} \label{eqn2.22},
\end{align}
where
\begin{align}
&h_d = \sqrt{\frac{1-\rho_d}{1+\rho_d}},\quad \alpha_d = \sin^{-1}\left(\sqrt{(1-\rho_d^2)/2}\right),\quad z^+ = \max(z,0) \nonumber\\
&Q(\rho_d, z) = 1 - 2\Phi(h_dz^+) + \frac{1}{\pi}\int_{0}^{\alpha_d}\exp\left(-\frac{1}{2}h^2_dz^2/\sin^2\theta\right)d\theta, \label{sdfsdf}
\end{align}
and $\rho_d$ is the correlation between \ntb{the values of $Z$ at} two voxels along each axis direction $d$, given by $\rho_d = \rho(s, s+v_de_d)$, where $\rho(\cdot,\cdot)$ is defined as 
\begin{align}
\rho(s,t) = \exp[-(s-t)'\Lambda(s-t)/2] \label{eqn2.3},
\end{align}
where $\Lambda = \operatorname{diag}(1/(2\eta_1^2),..., 1/(2\eta_D^2))$, and $(\eta_d)_{d=1,...,D}$ is the standard deviation of the Gaussian kernel in the $d$th direction. The correlation function in \eqref{eqn2.3} arises, for example, from integration of continuous white noise against a Gaussian kernel. This approach also allows for the calculation of the height of local maxima on the boundary of the image or a mask by substituting $Q(\rho_d, z)$ in \eqref{eqn2.22} with $1-\Phi(h_dz)$ if a voxel on the boundary only has one neighbor, and with 1 if it has no neighbors. Further details regarding the derivation of (\ref{eqn2.22}) are provided in Appendix \ref{appendix.a2}. Since this \nt{approach} provides a closed form density function, \nt{we shall refer to it as the} analytical DLM (ADLM) method.  

One critical assumption of ADLM is that the correlation function has a specific separable structure. Under this assumption things simplify because conditioned on the center voxel, the distribution of the height of neighboring voxels along a given axis are conditionally independent of the distribution of the height at neighboring voxels along the other (perpendicular) axis directions \nt{as described formally} in the proposition below.

\begin{Proposition}
\label{prop1}
Given data $\{Z(s), s\in \mathcal{L}\}$ such that the spatial correlation between $s$ and $t$ is defined as \eqref{eqn2.3}, \nt{and $d_1, d_2\in\left\{1,...,D\right\}$ such that $s\pm v_{d_1}e_{d_1}, s\pm v_{d_2}e_{d_2} \in S$ and letting $\indep$ denote independence, it follows that,}
\begin{align*}
    \begin{pmatrix}
    Z(s-v_{d_1}e_{d_1}) \\
    Z(s+v_{d_1}e_{d_1})           
    \end{pmatrix}
    \indep 
    \begin{pmatrix}
    Z(s-v_{d_2}e_{d_2}) \\
    Z(s+v_{d_2}e_{d_2})           
    \end{pmatrix}\bigg| Z(s).
\end{align*}
\end{Proposition}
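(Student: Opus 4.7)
The plan is to invoke the standard characterization of conditional independence for jointly Gaussian vectors: if $(X, Y, W)$ is jointly Gaussian, then $X \indep Y \mid W$ if and only if the conditional cross-covariance
\[
\Cov(X, Y \mid W) = \Sigma_{XY} - \Sigma_{XW}\Sigma_{WW}^{-1}\Sigma_{WY}
\]
vanishes. I take $W = Z(s)$ (scalar), $X = (Z(s - v_{d_1}e_{d_1}), Z(s + v_{d_1}e_{d_1}))'$, and $Y = (Z(s - v_{d_2}e_{d_2}), Z(s + v_{d_2}e_{d_2}))'$. Since $Z$ is a Gaussian random field, the five-dimensional vector $(W, X, Y)$ is jointly Gaussian, so the task reduces to verifying that each of the four entries of the $2\times 2$ matrix $\Sigma_{XY} - \Sigma_{XW}\Sigma_{WW}^{-1}\Sigma_{WY}$ equals zero.

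Writing a generic entry in terms of the correlation function, it has the form $\sigma^2\bigl(\rho(s+a, s+b) - \rho(s+a, s)\rho(s, s+b)\bigr)$, where $\sigma^2$ is the common variance of $Z$, $a \in \{\pm v_{d_1}e_{d_1}\}$, and $b \in \{\pm v_{d_2}e_{d_2}\}$. The key algebraic step is then the identity $\rho(s+a, s+b) = \rho(s+a, s)\rho(s, s+b)$. Since $\Lambda$ is diagonal and $d_1 \neq d_2$, we have $e_{d_1}'\Lambda e_{d_2} = 0$, hence $a'\Lambda b = 0$, and so
\[
(a-b)'\Lambda(a-b) = a'\Lambda a + b'\Lambda b.
\]
Substituting into \eqref{eqn2.3} gives $\rho(s+a, s+b) = \exp[-a'\Lambda a/2]\exp[-b'\Lambda b/2] = \rho(s+a, s)\rho(s, s+b)$, as required.

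Since all four entries of the conditional cross-covariance matrix are zero, $\Cov(X, Y \mid Z(s)) = 0$, which for jointly Gaussian vectors yields $X \indep Y \mid Z(s)$. The only non-trivial ingredient is the factorization of $\rho$, which is a direct consequence of the separable structure built into \eqref{eqn2.3}; I therefore expect no real obstacle beyond keeping track of the bookkeeping in the $2\times 2$ conditional cross-covariance matrix, since the result is essentially the hallmark property that motivates the separability assumption underlying ADLM.
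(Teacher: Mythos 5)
Your proof is correct and follows essentially the same route as the paper's: both reduce conditional independence of the jointly Gaussian vectors to the vanishing of the Schur-complement conditional cross-covariance $\Sigma_{XY}-\Sigma_{XW}\Sigma_{WW}^{-1}\Sigma_{WY}$, and verify this using the product structure of \eqref{eqn2.3} across perpendicular lattice directions. Your write-up is marginally tidier in handling the full $2\times 2$ cross-block at once and working directly with the diagonal $\Lambda$ (so the anisotropic case is covered cleanly), whereas the paper argues entrywise via $\rho^{\|s_a-s_b\|^2}$ and a Pythagorean identity, but the underlying idea is identical.
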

The result in Proposition \ref{prop1} is briefly noted in \cite{taylor2007maxima} and we provide a \nt{formal} proof in Appendix \ref{appendix.a1}. \nt{To help understand this result with a visual example note that for the neighborhood structures shown in} Figure \ref{fig0}, under the required assumptions, we have
\begin{align*}
\begin{pmatrix}
Z(s_4) \\
Z(s_6)           
\end{pmatrix}
\indep 
\begin{pmatrix}
Z(s_2) \\
Z(s_8)           
\end{pmatrix}\bigg| Z(s_5).
\end{align*}

This conditional independence result holds along the horizontal and vertical axes and allows for an expansion for the distribution of partially connected local maxima. However, it does not imply independence when the diagonals are included, i.e.,
\begin{align*}
\left(Z(s_1), Z(s_3), Z(s_7), Z(s_9) \right)^\top
\not\!\perp\!\!\!\perp
\left(Z(s_2), Z(s_4), Z(s_6), Z(s_8) \right)^\top | Z(s_5).
\end{align*}
Thus, this ADLM method can only be used to calculate the height distribution of peaks that are greater than their directly adjacent neighbors. 

\nt{Under \ref{eqn2.3}, the correlation $\rho_d$ between two adjacent voxels along each lattice axis $d$ can be written as} 
\begin{align}
\rho_d = \rho(s, s+ v_de_d) = \exp\left[-\frac{1}{2}\left(\frac{v_d^2}{2\eta_d^2}\right)\right] = \exp\left[-\frac{v_d^2}{4\eta_d^2}\right], \label{eqn2.32}
\end{align}
and if we assume $Z(s)$ is isotropic with a common standard deviation of the Gaussian kernel $\eta_d = \eta$ and $v_d = 1$, $\rho_d$ can be further simplified to $\rho_d = \rho = \exp[-1/4\eta^2]$, a function which does not depend on $d$.

The ADLM approach allows the calculation of the height distribution of local maxima on a discrete lattice. However, the method makes restrictive assumptions and its validity is limited to partial connectivity.

\subsection{The correlation function on the lattice}\label{SScorrfn}
The methods which we will develop in what follows strongly rely on the correlation function.  In this section we provide some explicit expansions of this function under the assumption that the fields are derived by smoothing i.i.d white noise with a kernel (we will relax this assumption later on).

Define the correlation function $\rho(s, t): \mathcal{L} \times \mathcal{L} \rightarrow \mathbb{R}$ to be the function that maps $s, t\in \mathcal{L}$ to $\Corr(Z(s),Z(t))$. As a step toward our goal of calculating peak $p$-values for a Gaussian random field on a regular discrete lattice, we shall calculate the spatial correlation \nt{analytically for fields generated by smoothing noise, before extending to the more general setting in Section \ref{sec2.5}}. Assume that $W:\mathcal{L} \rightarrow \mathbb{R}$ is a Gaussian random field consisting of i.i.d. unit variance white noise and for some kernel $K:\mathbb{R}^D  \rightarrow \mathbb{R}$,
\begin{align*}
    Z(s) = \sum_{l \in \mathcal{L}} K(s-l)W(l) \quad \text{for each}\quad s\in \mathcal{L}.
\end{align*}
The correlation function $\rho(s,t)$ is then
\begin{align*}
    \rho(s, t)  &= \frac{E\left[\sum_{l \in \mathcal{L}} K(s-l)W(l)\sum_{l'\in \mathcal{L}}K(t-l')W(l')\right]}{\sqrt{\Var\left[\sum_{l \in \mathcal{L}} K(s-l)W(l)\right]\Var\left[\sum_{l'\in \mathcal{L}}K(t-l')W(l')\right]}}\nonumber\\
    &=\frac{ E\left[\sum_{l \in \mathcal{L}}\sum_{l'\in \mathcal{L}}K(s-l)K(t-l')W(l)W(l')\right]}{\Var\left[\sum_{l \in \mathcal{L}} K(s-l)W(l)\right]}\nonumber\\
    &= \frac{\sum_{l\in \mathcal{L}}K(s-l)K(t-l)}{\sum_{l \in \mathcal{L}} K(s-l)^2}
\end{align*}

since $E[W(l)W(l')] = 0$ for $l \neq l'$ and $EW(l)^2 = 1$ for all $l$. In particular when $K$ is an isotropic Gaussian kernel, i.e., $K(s) = \frac{1}{\eta^D}\phi_D\left(\frac{||s||}{\eta}\right)$, for some $\eta > 0$ and each $s, t \in \mathcal{L}$, 
\begin{align}
    \rho(s, t) = \frac{\sum_{l\in \mathcal{L}}\frac{1}{\eta^{2D}}\phi_D\left(\frac{||s-l||}{\eta}\right)\phi_D\left(\frac{||t-l||}{\eta}\right)}{\sum_{l\in \mathcal{L}}\frac{1}{\eta^{2D}}\left[\phi_D\left(\frac{||s-l||}{\eta}\right)\right]^2},\label{eqn2.7}
\end{align}

where $\phi_D$ is the density function for the $D$ dimensional standard Gaussian distribution. As is common in fMRI analysis we will typically refer to this kernel using its full width at half maximum (FWHM) which is defined as $\text{FWHM} = 2\sqrt{2\ln2}\eta$. Using (\ref{eqn2.7}) as the correlation function to ADLM defined in (\ref{eqn2.22}) improves the performance of the ADLM approach relative to using (\ref{eqn2.3}) - the correlation function used in \cite{worsley2005improved} - as discussed in Section \ref{appendix.d2}.

More generally if $K$ is an elliptical Gaussian kernel, i.e., $K(s) = \prod_{j=1}^D\frac{1}{\eta_j}\phi_1\left(\frac{[s]_j}{\eta_j}\right)$, then 
\begin{align}
    \rho(s, t) = \frac{\sum_{l\in \mathcal{L}}\prod_{j=1}^D\frac{1}{\eta_j^2}\phi_1\left(\frac{[s-l]_j}{\eta_j}\right)\phi_1\left(\frac{[t-l]_j}{\eta_j}\right)}{\sum_{l\in \mathcal{L}}\prod_{j=1}^D\frac{1}{\eta_j^2}\left[\phi_1\left(\frac{[s-l]_j}{\eta_j}\right)\right]^2}, \label{eqn2.8}
\end{align}
where $[s-w]_j$ and $[s+v-w]_j$ refer to the $j$th \nt{elements of the vectors} $s-w$ and $s+v-w$ respectively. 

\subsection{Monte Carlo DLM \ntb{(MCDLM)} method}
\label{sec2.5}
In this section we introduce a new method based on Monte Carlo simulation to calculate the height distribution of local maxima on a discrete lattice. Our approach is based on the observation that the probability that $s \in \mathcal{L}$ is a local maximum based entirely on the distribution of $s$ with its neighbors, \nt{c.f.} \eqref{eqn2.2}. Define 
\begin{equation*}
    \mathbf{Z}(s) = \left(Z(s), Z(n_1(s)), ..., Z(n_k(s))\right)^\top,
\end{equation*}
where we \nt{write} the neighborhood of $s$ as 
\nt{$\mathcal{N}(s) = \left\lbrace n_1(s), \dots, n_k(s)  \right\rbrace \subseteq \mathcal{N}_{FC}(s)$} with \ntb{$ k = |\mathcal{N}(s)|$}. \nt{For the} partially connected neighborhood $k = 2D$ and for the fully connected neighborhood, $k = 3^D-1$. \nt{Under stationarity}, $\mathbf{Z}(s) \sim N(\mathbf{0}, \Sigma)$ for each $s$ where the covariance, $\Sigma = \mathrm{Cov}(\mathbf{Z}(s))$, is constant over the domain. The covariance matrix $\Sigma$ can be derived analytically under certain assumptions or estimated from the data, see below. Given $\Sigma$, the method calculates the peak height distribution via Monte Carlo experiments \ntb{by generating $M$ multivariate Gaussian vectors $\mathbf{Z}_m = \left(Z_{m1}, Z_{m2}, ..., Z_{m(k+1)}\right)^\top, \ 1\leq m \leq M$ of size $k+1$ and recording the height of the central voxel in each realization if it is higher than its neighbors.} Details of these experiments are described in Algorithm \ref{alg1}.

\begin{algorithm}[!htp]
\caption{MCDLM}
\label{alg1}
\begin{algorithmic}[1]
\Require The number of iterations $M\in \mathbb{N}$, and the \nt{$(k+1) \times (k+1)$} covariance matrix $\Sigma$
\Procedure{simulateLocmax}{$M, \Sigma$}
  \State \ntb{Initialize an empty vector $\boldsymbol{h} \leftarrow []$}
  \For{$m = 1, \dots, M$}
    \State \texttt{Generate $\mathbf{Z}_m \sim N(\mathbf{0}, \Sigma)$}, independent of $\mathbf{Z}_1,...,\mathbf{Z}_{m-1}$
    \If{$Z_{m1} > \max_{2 \leq j \leq k+1} Z_{mj}$}
        \State $\boldsymbol{h} = \left[\boldsymbol{h}, Z_{m1}\right]$
    \EndIf
  \EndFor
  \State \textbf{return} $\boldsymbol{h}$
\EndProcedure
\end{algorithmic}
\end{algorithm}

After obtaining the vector $\boldsymbol{h} = \left( h_1, h_2,..., h_N \right)^\top$, \ntb{$N\le M$, where $N$ is the number of MC realizations that satisfies the condition in step 5 of Algorithm \ref{alg1}}, for $u \in \mathbb{R}$, the MCDLM approximation to the peak height distribution can be calculated as 
\begin{align*}
    \hat{F}_N(u) = \frac{1}{N}\sum_{i=1}^N \mathbbm{1}\{h_i \leq u\},
\end{align*}
where $\mathbbm{1}\{\cdot\}$ denotes the indicator function and $N$ is the length of $\boldsymbol{h}$. For an observed peak of height $u$, a peak $p$-value can be computed as $1 - \hat{F}_N(u)$. In order to make this empirical $p$-value as accurate as possible, $N$ should be taken to be as large as possible. 

Now we address the question of how to calculate $\Sigma$. We will first do this for the particular isotropic subcase, discussed in the previous sections, in which $Z$ is obtained by integrating continuous white noise against \ntb{an isotropic (spherically symmetric)} Gaussian kernel. In that case, assuming a fully connected neighborhood, it can be shown that, \ntb{after proper reordering} of $\bf Z$,
\begin{align}
\Sigma = \underbrace{A\otimes A\otimes... \otimes A}_{D\ \text{terms of}\ A} = A^{\otimes D} \label{eqn2.5}
\end{align}
where
\begin{align*}
A = \begin{pmatrix}
1 & \rho & \rho^4\\
\rho & 1 & \rho\\
\rho^4 & \rho & 1
\end{pmatrix}.
\end{align*}
with $\rho$ the correlation between adjacent voxels as defined in \eqref{eqn2.32}. \nt{See Theorem \ref{thm:thm} in Appendix \ref{appendix.dd} for \ntb{a} formal statement and proof of this result.}

Equation (\ref{eqn2.5}) holds under isotropy and form of covariance stated in \eqref{eqn2.32}. For a general \nt{mean-zero} stationary field we can instead use the data to estimate $\Sigma$. To do so, \nt{without loss of generality, assume we observe and standardized} i.i.d Gaussian random fields $Z_1, \dots, Z_n$ on $\mathcal{L}$. We wish to infer on the distribution of peaks of the mean $\frac{1}{n}\sum_{i = 1}^n Z_n$. \nt{We can then} estimate $\Sigma$ from the data as follows. For each $s, t \in \mathcal{L}$, 
\begin{equation}
  \Cov\left( \frac{1}{n} \sum_{i = 1}^n Z_i(s), \frac{1}{n} \sum_{i = 1}^n Z_i(t)\right) = \frac{1}{n}\Cov(Z_1(s), Z_1(t)),
\end{equation}
as such, using the assumption of stationarity, we can estimate this covariance as   
\begin{align}
    \widehat{\mathrm{Cov}}(Z_1(s), Z_1(t)) = \frac{1}{n|L|}\sum_{i=1}^n\sum_{(s',t')\in L}Z_i(s')Z_i(t') \label{eqn5.1}
\end{align}
where $L = \{(s', t') \in \mathcal{L} \times \mathcal{L}: s' - t' = s-t\}$. \nt{This allows us to build an estimate of $\Sigma$ by calculating \eqref{eqn5.1} between relevant pairs of voxels.} If \nt{further} we assume that the fields are isotropic, then we can improve the accuracy of this estimate by taking $L =  \{(s', t') \in S \times S: ||s' - t'|| = ||s-t||\}$.

\subsection{MCDLM for $t$-fields}
\label{sec2.4}
We can also use our MCDLM approach to calculate the height distribution of local maxima of a $t$-field. The $t$-fields are generated by voxelwise calculation of $t$-statistic using
\begin{align}
    T(s) = \frac{\varepsilon(s)}{\sqrt{\sum_{i=1}^N Z_i^2/N}},\quad s \in \mathcal{L},\label{eqn.t}
\end{align}
where $Z_1,...,Z_N$ and $\varepsilon(s)$ are i.i.d \nt{stationary} Gaussian random fields observed on the lattice $\mathcal{L}$. In practice, the $t$-statistic is typically used as the test statistic for regression coefficients. 

In this setting the local neighborhood has a multivariate $t$-distribution according to the definition in \cite{roth2012multivariate}. Thus, given an estimate of the neighborhood covariance we can still apply MCDLM using Algorithm \ref{alg1} by changing the simulation in line 3 from a multivariate Gaussian distribution to a multivariate $t$-distribution.

This approach works well in practice (see Section \ref{sec4.3}) however it is somewhat computational expensive (especially as $\rho$ and the degrees of freedom increase). To get around this we consider \ntb{as an alternative approach} a voxelwise Gaussianization transformation of the $t$-fields (as in \cite{schwartzman2019peak}) which acts using the distribution function as follows:
\begin{align}
    Z(s) = -\Phi^{-1}[F_{t,\nu}(-T(s))], \label{eqn.t2Gauss}
\end{align}
where $F_{t,\nu}$ is the cdf of the $t$-distribution with $\nu$ degrees of freedom. We then apply the MCDLM method for the Gaussian field to the transformed $t$-field resulting in improved speed without signficantly impacting performance, see Section \ref{SS:gauss}.


\subsection{Continuous Gaussian random field theory} \label{sec2.3}
Historically \citep{chumbley2009false, chumbley2010topological,schwartzman2019peak} it has been common to use the results of continuous Gaussian random fields \citep{adler1981geometry} to perform inference on the lattice. \ntb{For comparison, we} here briefly outline how this works and explain how it can be used to provide height distributions for local maxima in smooth random fields. \nt{In order to apply Random Field Theory we must assume that the noise is sufficiently smooth for the good lattice assumption to hold \citep{davenport2023robust}. Then given a domain $S \supset \mathcal{L}$, which is compact with non-empty interior $\mathring{S}$, the good lattice assumption \ntb{presumes} that $Z$ extends to a $C^3$ random field $Z$ on $S$ the peaks of which can be identified with and have similar heights to the peaks of $Z$ on the lattice $\mathcal{L}$.} Let 
\begin{equation*}
	\nabla{Z(s)} = \left(\frac{\partial{Z(s)}}{\partial{s_1}},...,\frac{\partial{Z(s)}}{\partial{s_D}}\right),\quad \text{and}\quad
	\nabla^2{Z(s)} = \left(\frac{\partial{Z(s)}}{\partial{s_{ij}}}\right)_{1\leq i,j\leq D}.
\end{equation*}
Then the local maxima of $Z$ on $S$ are defined to be the points $s\in \mathring{S}$ such that $\nabla{Z}(s) = 0 \text{ and }\nabla^2{Z}(s)<0$.

\ntb{Because the domain is assumed to be continuous,} event that a local maximum is observed at a given $s \in \mathring{S}$ has probability zero. As such, in order to obtain a conditional peak height distribution, Palm distributions must be used (see \cite{cheng2015distribution} for details). For $ u \in \mathbb{R}$, they provide formulae to calculate
\begin{align}
	\mathbb{P}[Z(s)>u\mid \nabla{Z}(s) = 0\ \text{and}\ \nabla^2{Z}(s)<0]. \label{eqn2.1}
\end{align}
In general these expressions are difficult to evaluate. However, under the assumption of isotropy, \cite{cheng2015explicit} showed that they can be obtained explicitly. Recently, \cite{cheng2020critical} extended these results to the case where the field arises as a diffeomorphic transformation of an isotropic field. Details of how to apply these methods to perform peak inference in fMRI data can be found in \cite{schwartzman2019peak}.

\section{Simulation Setup}
\label{sec3}
In this section we describe the different simulation settings we have considered in order to compare the performance of the three methods introduced in Section \ref{sec2}, i.e. ADLM, MCDLM \ntb{and} continuous RFT method. For each simulation setting we generate a large number of stationary Gaussian random fields (or $t$-fields), collect the heights of the peaks across all fields and combine these to obtain a reference peak height distribution, which will allow us to test the validity of each of the methods.

For each method, we calculate a $p$-value at each peak \ntb{with respect to the corresponding peak height distribution}. Local maxima \ntb{are} selected based on the criteria that their height values are
larger than their neighbors - we \nt{shall} consider both the fully connected and partially
connected neighborhoods. We compare the validity and accuracy of these $p$-values using $pp$ plots which compare the sorted $p$-values to the tail probability of the true peak height distribution. These are formally defined in Appendix \ref{appendix.d} for clarity. The closer \nt{the curve in each} plot is to the identity function, the closer the approximation is to the true distribution. \ntb{Curves} lying below the identity correspond to conservative $p$-values and \nt{lines} above the identity function correspond to liberal $p$-values. We use
these $pp$ plots to compare the performance of the three approaches in all of our simulation studies.

\subsection{Isotropic Gaussian random fields}
\label{sec3.1}
Our first set of simulations consists of isotropic Gaussian fields which are obtained by convolving Gaussian white noise with \ntb{an isotropic} Gaussian kernel with specified FWHM and normalizing so that the resulting fields have unit variance. To avoid any boundary effects the fields \ntb{are} initially generated on a $D$-dimensional large grid of size $\left(50 + 2\times \lceil 4*\eta\rceil\right)$ at each direction and the central subset is taken, as described in \cite{davenport2020selective}. To simulate imaging data, we generate simulations in 2D and 3D. The resulting 2D images are of size $50 \times 50$ and the resulting 3D images are of size $50 \times 50 \times 50$. In the simulations we choose the FWHM so that the correlation between adjacent voxels in each perpendicular direction is equal to $\rho \in \left\lbrace 0.01, 0.5, 0.99 \right \rbrace$ - \nt{which correspond to FWHM of $0.7$, $1.5$ and $11.7$ voxels respectively. For now we assume that $\rho$ is known and calculate the neighbourhood covariance $\Sigma$, needed for MCDLM, using \eqref{eqn2.5}.} In each setting we generate $10{,}000$ random fields and compare the different approaches using $pp$ plots - as described above and in Appendix \ref{appendix.d}. See Section \ref{sec4.2} for the corresponding results. 

\nt{In our examples here and in the following sections, when applying MCDLM, we choose $M$ large enough to ensure that the number of resulting empirical samples is at least $N = 10^6$ for FWHM $< 11.7$ and at least $N = 2\times 10^5$ for FWHM $= 11.7$. We also generate a look-up table, which provides the same results under reduced computation time (for the case of Gaussian white noise smoothed with an isotropic Gaussian kernel). The results of using the lookup table are shown in Appendix \ref{appendix.d3}.}

\subsection{Isotropic $t$-fields}
\label{sec3.2}
\nt{For our second \ntb{set} of simulations}, we consider the performance of the different approaches when it comes to evaluating the height distribution of peaks of $t$-fields. To do so we generate fields with $\nu$ degrees of freedom (taking $\nu = 20, 50, 100$) by simulating i.i.d isotropic Gaussian fields $Z_1,...,Z_\nu$ and $\varepsilon(s)$ in (\ref{eqn.t}) as in Section \ref{sec3.1}. In each setting we generate $10{,}000$ $t$-fields in both 2D and 3D and calculate peak-height $p$-values using the  MCDLM approach for $t$-fields discussed in Section \ref{sec2.3}. 

We also calculate $p$-values using the continuous RFT approach; note that this is designed for Gaussian random fields so we would not expect it to work as well in this setting. \nt{To help understand how close continuous RFT is to simulations from a Gaussian random field we compute a third set of $p$-values using simulated Gaussian random fields. As the number of degrees of freedom of the $t$-field increases the $t$-field converges to a Gaussian random field with the original covariance function so this comparison also allows us to understand how close the fields are to the theoretical limit. Note that in practice this third approach is not a viable measure for calculating a peak height distribution as it requires generating a large number of fields which is extremely computationally expensive but we include it in the plots as a reference. We do not compare to ADLM in this setting as it is only designed for Gaussian fields.} We compare the $p$-values obtained using these different approaches using $pp$-plots. The results are described in Section \ref{sec4.3}.

As discussed in Section \ref{sec2.4}, we also consider \nt{the use of a} Gaussianization transformation of the $t$-field to improve the computational efficiency. We perform the same set of simulations but where each of the $t$-fields is Gaussianized, with the same $t$-field as described above. We then calculate $p$-values using the MCDLM for Gaussian fields and the continuous RFT approach. The results are described in Appendix \ref{SS:gauss}.

\subsection{Stationary Gaussian fields with known nonseparable covariance}
Since MCDLM works for a general mean-zero stationary field, in this third set of simulations, we examine the performance of MCDLM on stationary Gaussian fields with known nonseparable covariance. To obtain such fields, we start by smoothing 2D Gaussian white noise using an elliptical Gaussian kernel with different smoothing FWHM along each axis. Next, we smooth a second 2D Gaussian white noise using another elliptical Gaussian kernel, this time with the FWHM values swapped between the two axes. Finally, we calculate the voxelwise average of the two smoothed fields to obtain our desired fields. Let the correlations corresponding to the smoothing FWHM in the vertical and horizontal directions be denoted by $\rho_1$ and $\rho_2$. This pair of correlations can be interchangeable because of the two smoothing steps introduced. We consider two sets of parameters for this set of simulations, the first with $\rho_1 = 0.01$ and $\rho_2 = 0.5$, and the second with $\rho_1 = 0.5$ and $\rho_2 = 0.99$. For the neighborhood covariance, we calculate it using $\operatorname{Var}\left(\frac{1}{2}\left\{Z_1(s)+Z_2(s)\right\}\right) = \frac{1}{4}\left(\Sigma_1 + \Sigma_2\right)$, where $\Sigma_1$ and $\Sigma_2$ are the neighborhood covariances for the two fields obtained in intermediate steps and can be calculated using equation \eqref{eqn2.8}. Similar to the previous experiments, we generate $10,000$ fields and validate the MCDLM performance using $pp$ plots. The results are shown in Section \ref{sec.nonseparable}.

\subsection{Stationary Gaussian fields with unknown covariance}
\label{sec3.3}
\begin{figure}[!htp]
\centering
\includegraphics[width = 8cm]{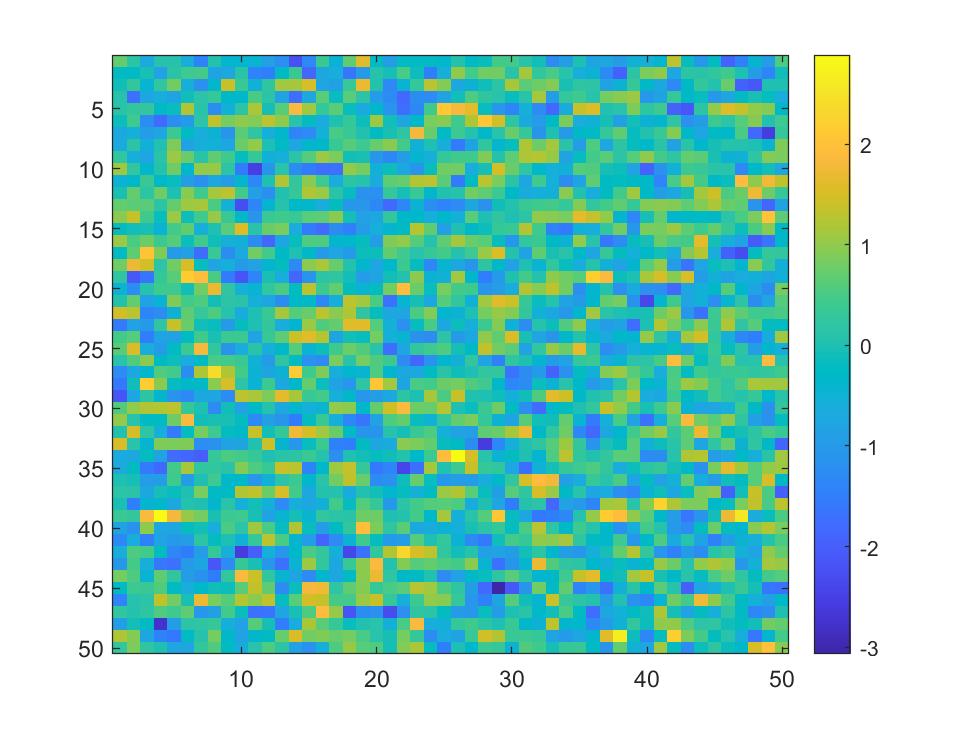}
\includegraphics[width = 8cm]{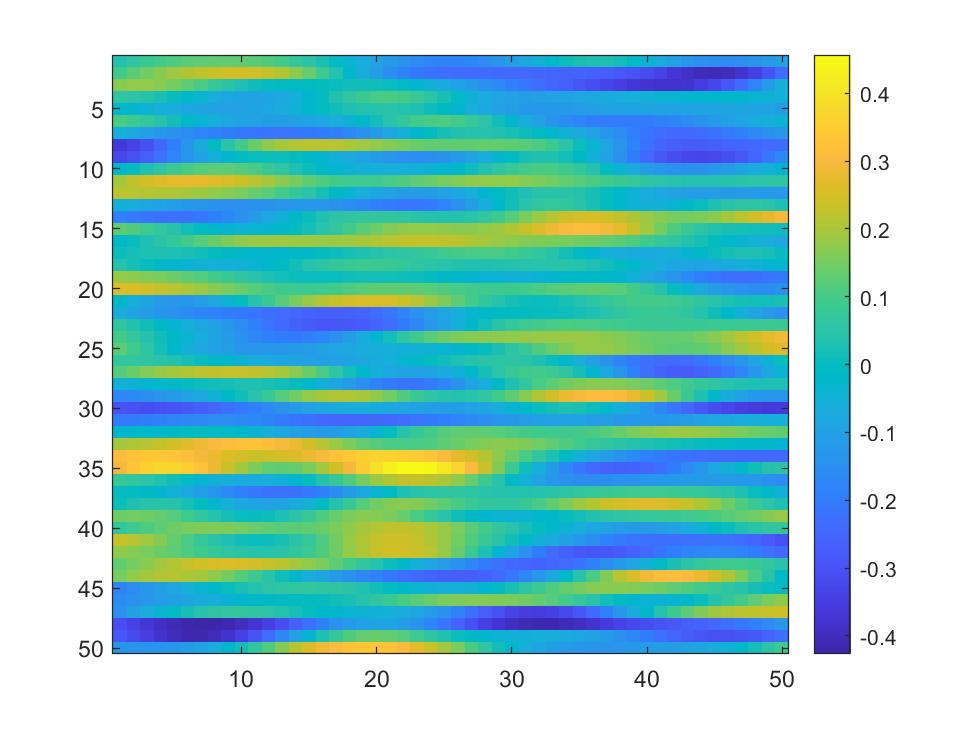}
\caption{Examples of stationary Gaussian random fields which are obtained by convolving white noise with an elliptical Gaussian kernel with $\rho_1 = 0.01, \rho_2 = 0.5$ in the left plot and $\rho_1 = 0.5, \rho_2 = 0.99$ in the right plot.\label{fig16}}
\end{figure}
\nt{To study the performance of MCDLM when the covariance is unknown, and must therefore be estimated we consider several further simulation settings.} Firstly we use $n$ fields ($n = 20, 50, 100, 200$ for small $\rho$ and $n = 20, 50, 100, 200, 1000$ for large $\rho$) to estimate the neighborhood covariance using the isotropic version of equation (\ref{eqn5.1}). We study the performance of MCDLM with this estimated covariance across different sample sizes (\nt{running $10{,}000$ simulations} in each of the settings described in Section \ref{sec3.1}). Secondly we consider 2D non-isotropic Gaussian fields. To generate these we smooth Gaussian white noise with an elliptical Gaussian kernel with smoothing FWHM in each direction chosen such that the correlation between adjacent voxels in the vertical and horizontal directions is $\rho_1$ and $\rho_2$ respectively. We estimate the neighborhood covariance using equation (\ref{eqn2.8}) using $n$ fields ($n = 20, 50, 100, 200$) \nt{and then apply MCDLM to generate peak $p$-values.} We consider two \nt{2D non-isotropic scenarios}, one where $\rho_1 = 0.01$ and $\rho_2 = 0.5$ and a second where $\rho_1 = 0.5$ and $\rho_2 = 0.99$ (example realizations of these fields are \nt{shown} in Figure \ref{fig16}). The results are shown in Section \ref{sec4.4}. 

\nt{For each of these simulations we estimate $\Sigma$ from the data as described in Section \ref{sec2.5}.  We take advantage of stationarity in order to do so as there is a lot of structure that can be taken advantage of when estimating $\Sigma$. In particular the neighborhood covariance matrix has a block Toeplitz structure which makes it easier to estimate, see the examples in Appendix \ref{SS:cov}.  As a pracitcal note, we observe that once the covariance function has been estimated we must ensure that it is positive semi-definite (p.s.d) in order to simulate from it. In order to ensure this we push the negative eigenvalues of estimated covariance matrix to a small positive value, \num{1e-10}.}

\section{Simulation Results}
\label{sec4}
\subsection{Results for isotropic Gaussian random fields}
\label{sec4.2}

\begin{figure}[!htp]
\centering
\begin{sideways}
\phantom{------------------}2D
\end{sideways}
\includegraphics[trim=80 5 80 5, clip,width=0.3\textwidth]{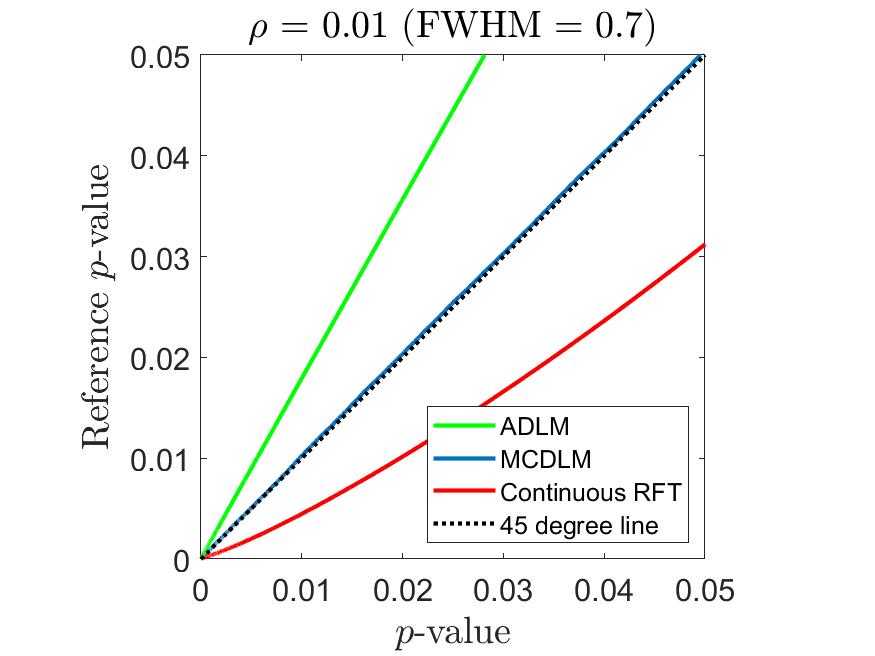}
\includegraphics[trim=80 5 80 5, clip,width=0.3\textwidth]{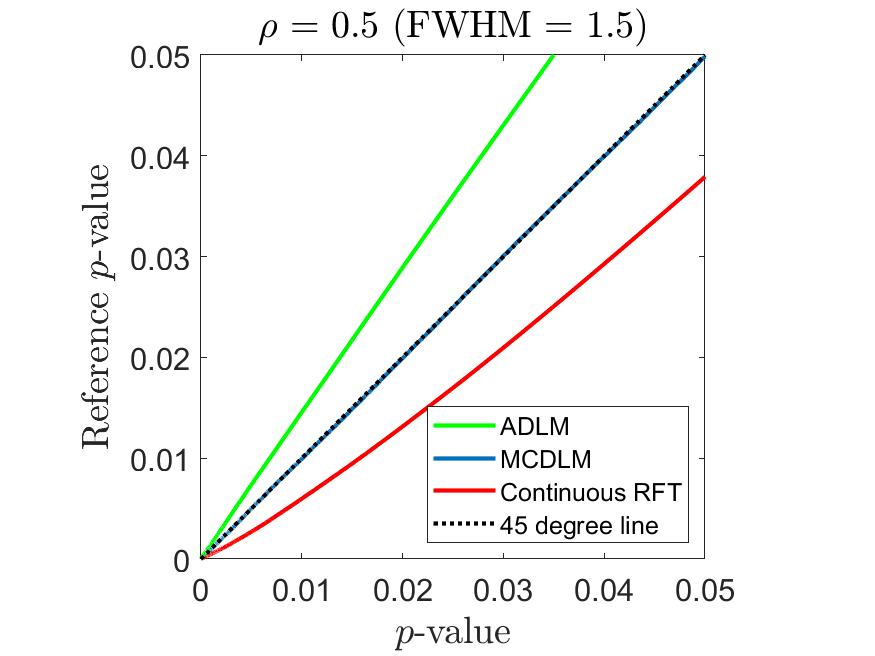}
\includegraphics[trim=100 5 100 5, clip,width=0.3\textwidth]{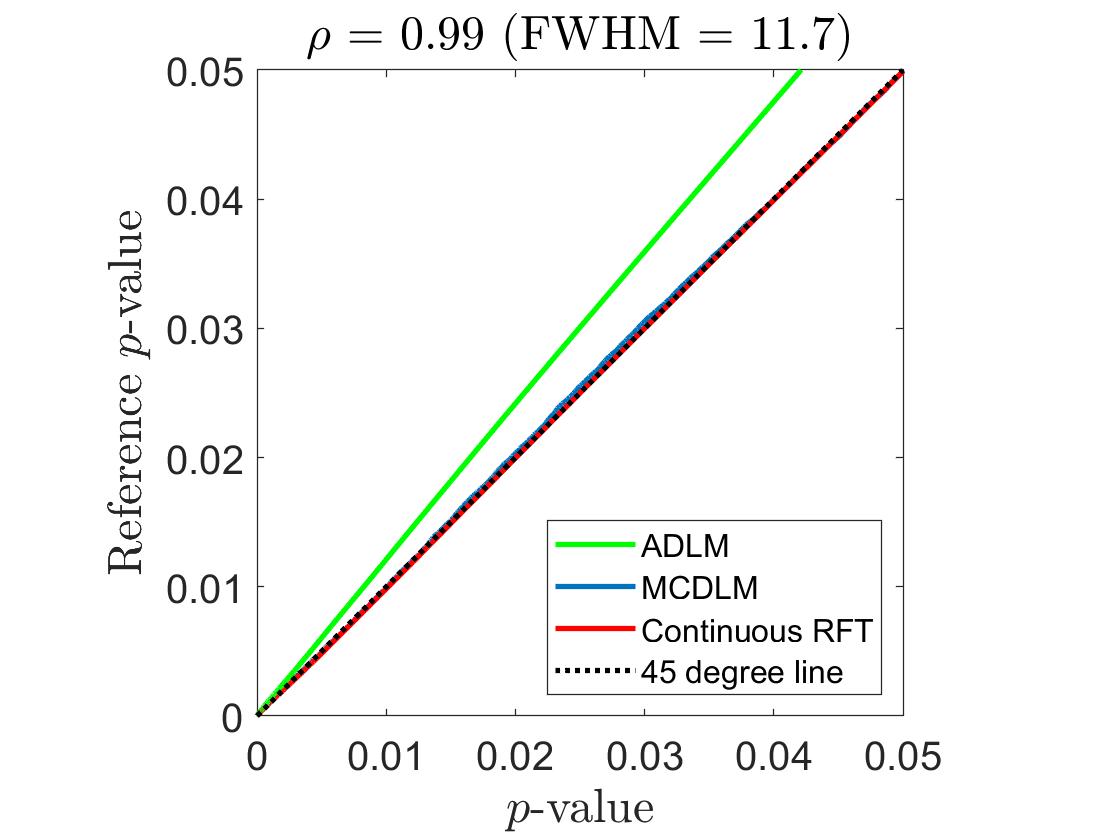}

\begin{sideways}
\phantom{------------------}3D
\end{sideways}
\includegraphics[trim=80 5 80 5, clip,width=0.3\textwidth]{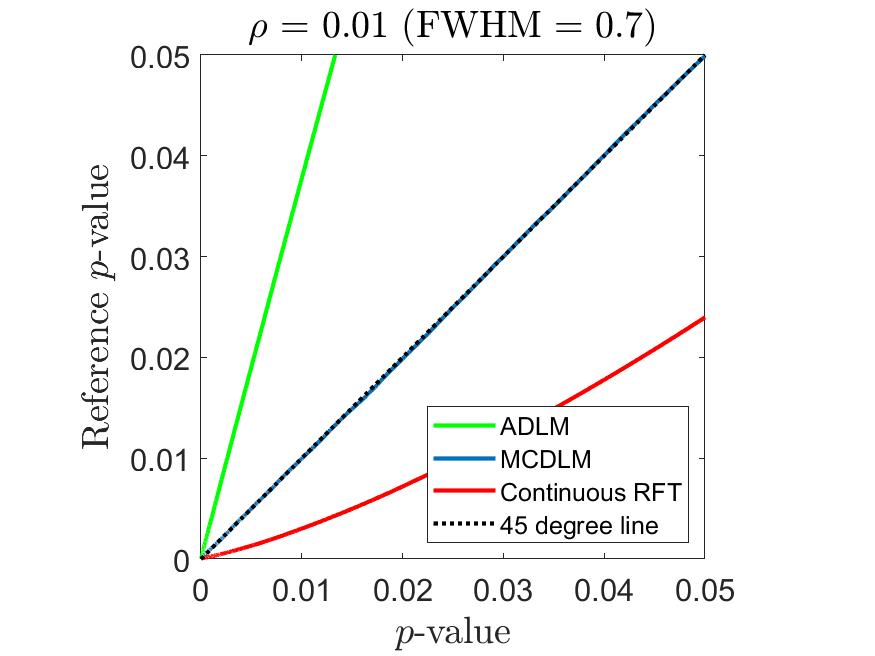}
\includegraphics[trim=80 5 80 5, clip,width=0.3\textwidth]{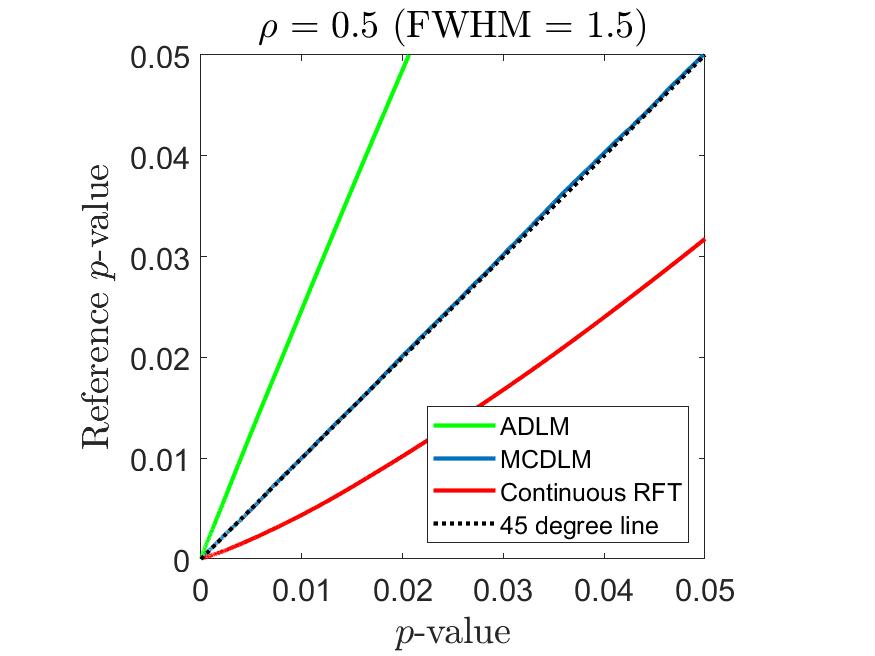}
\includegraphics[trim=100 5 100 5, clip,width=0.3\textwidth]{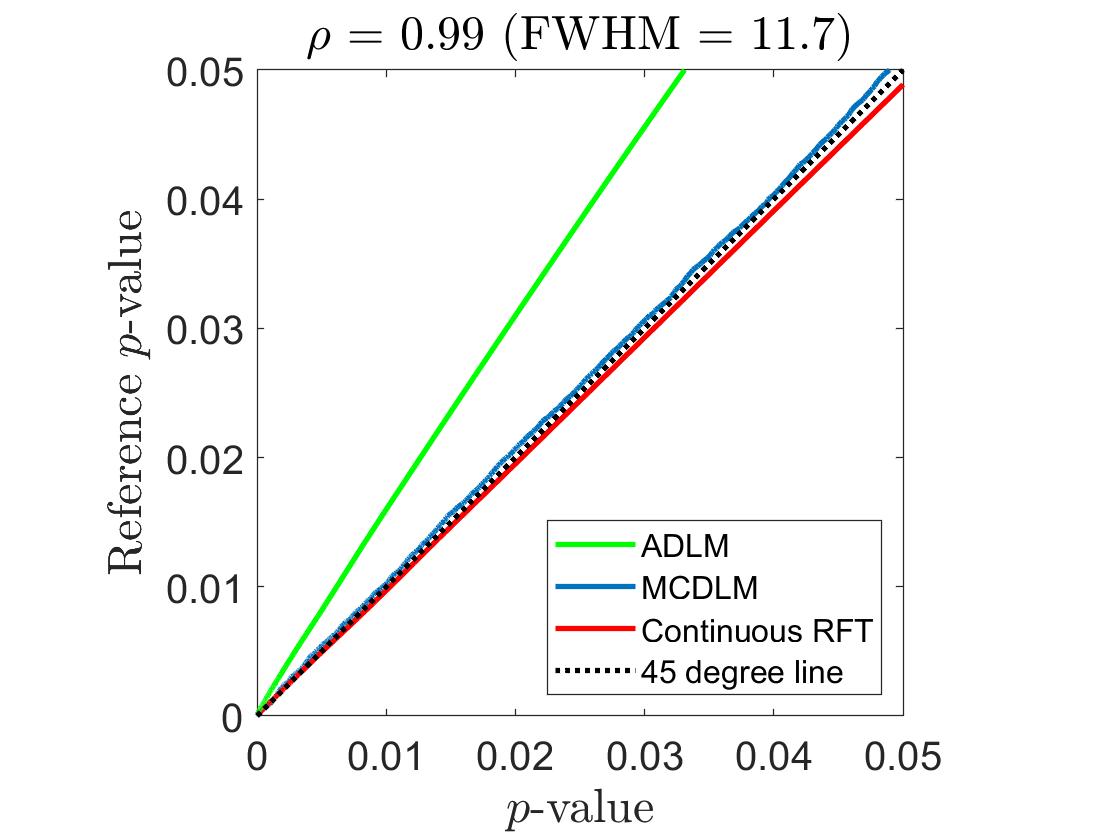}
\caption{$pp$ plots which compare the different methods of computing peak height $p$-values in the isotropic Gaussian random field scenario. 2D and 3D results are displayed in the first and second rows respectively. The correlations between adjacent voxels are $\rho = 0.01, 0.5, 0.99$. The plots compare the performance of ADLM, MCDLM and the continuous RFT approach.\label{fig7}}
\end{figure}

Results comparing all three methods in the isotropic Gaussian random fields setting (described in Section \ref{sec3.1}) are presented in Figure \ref{fig7} for the fully connected neighborhood. From this figure we see that the MCDLM method obtains $p$-values which are uniformly distributed and thus provides accurate and valid inference
at all smoothness levels. The continuous RFT method is valid but conservative unless the data is very smooth, i.e., with a large smoothing FWHM. The ADLM method gives liberal $p$-values at all \nt{considered} smoothness levels though the severity of this reduces when the smoothness is very large. In 3D the results are similar though at the highest smoothness level the curve corresponding to the MCDLM method is slightly rough, this could be made sharper if desired by increasing the number of Monte Carlo runs used. In these results we use the covariance function (\ref{eqn2.7}) for both MCDLM and ADLM distributions since it is the actual covariance of the data, as discussed in Section \ref{SScorrfn}. The results are slightly worse if \eqref{eqn2.32} is used instead (as was done in \cite{worsley2005improved, taylor2007maxima}), see Appendix \ref{appendix.d2}. \nt{Results for partially connected neighborhoods, in which ADLM works comparably to MCDLM are presented in Appendix \ref{appendix.d1}.}

To quantify the difference of $p$-values from all three approaches, we use the mean ratio between the $p$-value curves from each method and the identity line to measure their difference. Our comparison \ntb{focuses} on the region of $p$-value $\in(0.001, 0.05]$ because of research interest and computation precision. The mean ratio results calculated from 2D isotropic Gaussian random fields with different $\rho$ are shown in Table \ref{tab0}. From the table, MCDLM outperforms ADLM and continuous RFT in all low smoothness cases (FWHM $< 6.7$). However continuous RFT does better when smoothness level is high (FWHM $= 6.7, 8.3$ and $11.7$). We also perform a similar analysis using root mean squared error (RMSE) between the $p$-value curves from each method and the identity line, which shows the same conclusion. The results are shown in Appendix \ref{sec.rmse}.

\begin{table}[!htp]
\caption{Mean ratio results from 2D isotropic Gaussian random fields for comparing the $p$-values from MCDLM, ADLM and continuous RFT approaches. The smallest value in each row is highlighted in red color. \label{tab0}}
\centering
\begin{tabular}{llll}
\hline
 & MCDLM & ADLM & Continuous RFT \\ \hline
$\rho = 0.01$ (FWHM $ = 0.7$) & \color{red}\num{0.98} & \num{0.55} & \num{2.05} \\ 
$\rho = 0.1\ $ \ (FWHM $ = 1$) & \color{red}\num{0.96} & \num{0.57} & \num{1.98} \\ 
$\rho = 0.3\ $ \ (FWHM $ = 1.2$) & \color{red}\num{0.99} & \num{0.62} & \num{1.82} \\ 
$\rho = 0.5\ $ \ (FWHM $ = 1.5$) & \color{red}\num{1.01} & \num{0.68} & \num{1.62} \\ 
$\rho = 0.7\ $ \ (FWHM $ = 2$) & \color{red}\num{1.01} & \num{0.72} & \num{1.36} \\ 
$\rho = 0.9\ $ \ (FWHM $ = 3.6$) & \color{red}\num{0.99} & \num{0.78} & \num{1.10} \\ 
$\rho = 0.95$ (FWHM $ = 5.2$) & \color{red}\num{0.99} & \num{0.80} & \num{1.05} \\ 
$\rho = 0.96$ (FWHM $ = 5.8$) & \color{red}\num{0.98} & \num{0.80} & \num{1.03} \\ 
$\rho = 0.97$ (FWHM $ = 6.7$) & \num{0.97} & \num{0.78} & \color{red}\num{0.99} \\
$\rho = 0.98$ (FWHM $ = 8.3$) & \num{0.97} & \num{0.80} & \color{red}\num{0.10} \\ 
$\rho = 0.99$ (FWHM $ = 11.7$) & \num{0.96} & \num{0.80} & \color{red}\num{0.98} \\ \hline
\end{tabular}
\end{table}

\subsection{Isotropic $t$-fields}
\label{sec4.3}

The $pp$ plots comparing the different methods in the setting of isotropic $t$-fields (described in Section \ref{sec3.2}) are presented in Figure \ref{fig11} (2D) and Figure \ref{fig12} (3D). As shown in these two figures, MCDLM obtains $p$-values which are uniformly distributed at all smoothness levels and different degrees of freedom for 2D and lower smoothness levels for 3D. In the 3D case at high smoothness levels, the MCDLM approach becomes a rough approximation because the number of peaks generated is not sufficient. As such the height distribution computed is slightly inaccurate, as shown in the noisy subfigure in the bottom right of Figure \ref{fig12}. \nt{The continuous method is designed for Gaussian fields rather than $t$-fields, so it is liberal when $\nu$ is small. Because it targets continuous peaks, it is conservative when $\nu$ is large and $\rho$ is small. MCDLM outperforms the continuous method in all considered scenarios as long as sufficeintly many local maxima are used in the Monte Carlo simulations.}

As discussed in Section \ref{sec3.2}, a Gaussianization approach \nt{can be used} to save computation time (see Appendix \ref{appendix.g} for details of computation time). The results for the Gaussianization approach are shown in Appendix \ref{SS:gauss}. They show that the MCDLM method performs well when the number of degrees of freedom is sufficiently large \nt{at each smoothness level} whereas continuous method requires both the degrees of freedom and FWHM to be large.

\begin{figure}[!htp]
\centering
\begin{sideways}
\phantom{------------------}$\nu = 20$
\end{sideways}
\includegraphics[trim=70 5 100 5, clip,width=0.3\textwidth]{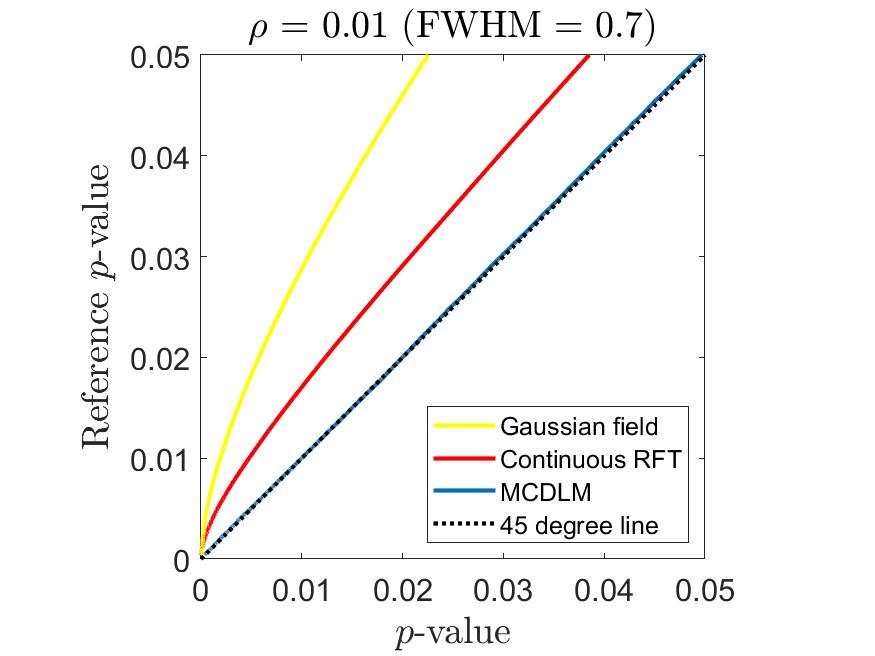}
\includegraphics[trim=70 5 100 5, clip,width=0.3\textwidth]{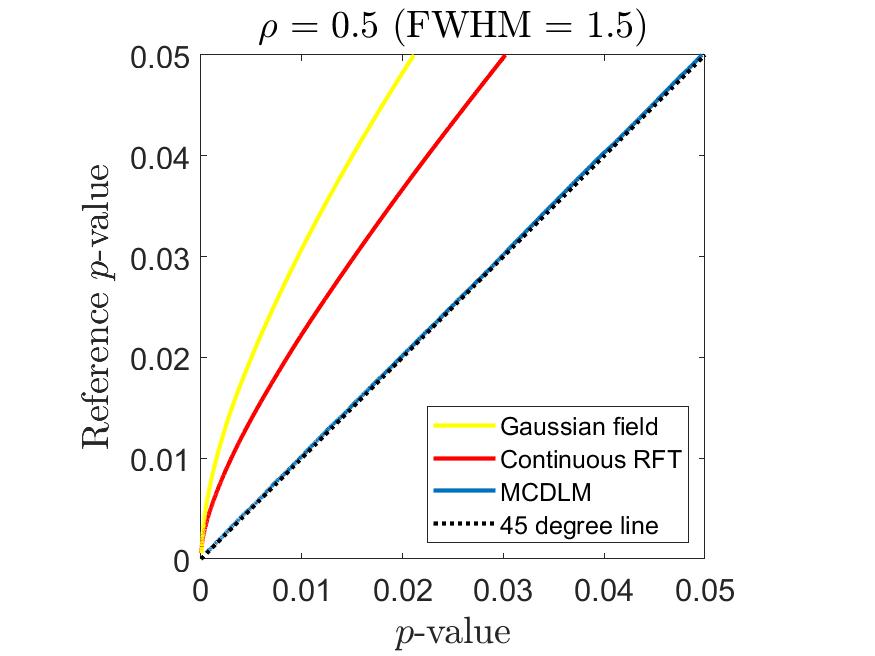}
\includegraphics[trim=70 5 100 5, clip,width=0.3\textwidth]{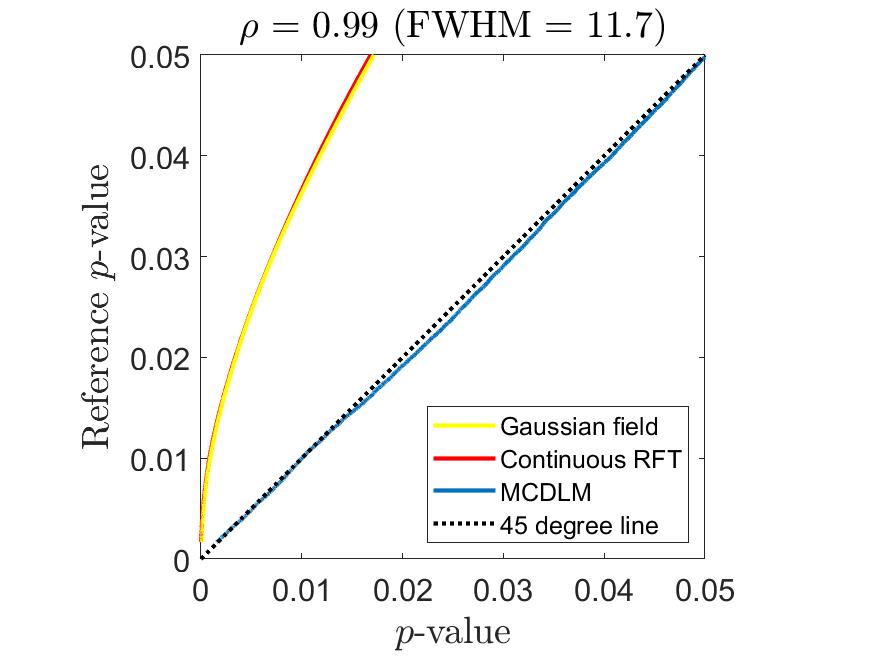}

\begin{sideways}
\phantom{------------------}$\nu = 50$
\end{sideways}
\includegraphics[trim=70 5 100 5, clip,width=0.3\textwidth]{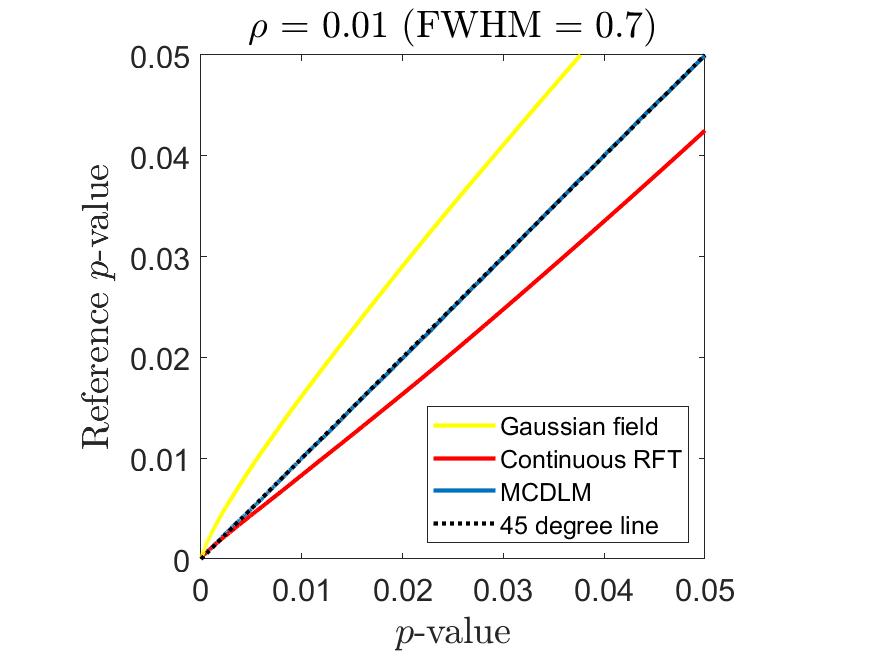}
\includegraphics[trim=70 5 100 5, clip,width=0.3\textwidth]{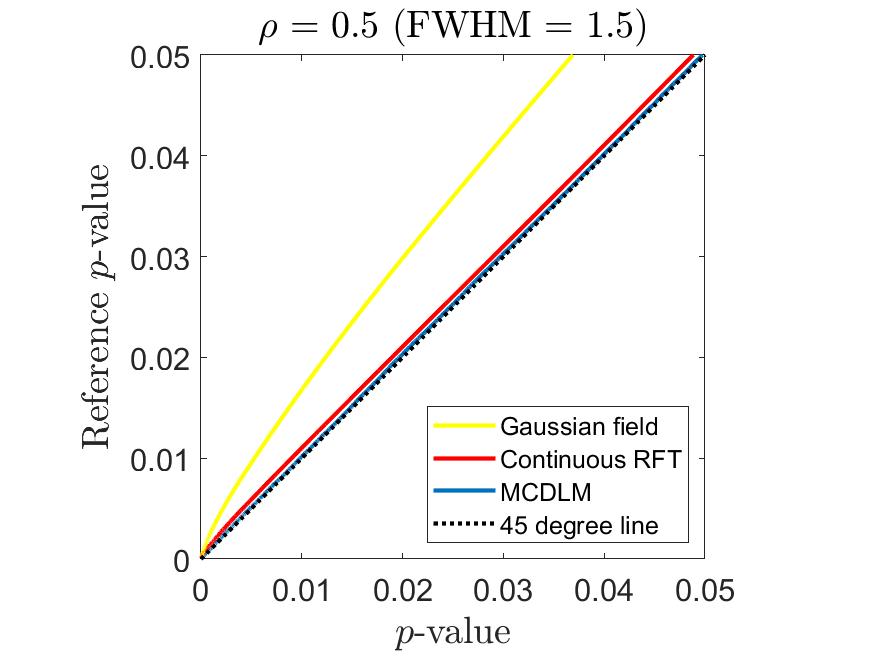}
\includegraphics[trim=70 5 100 5, clip,width=0.3\textwidth]{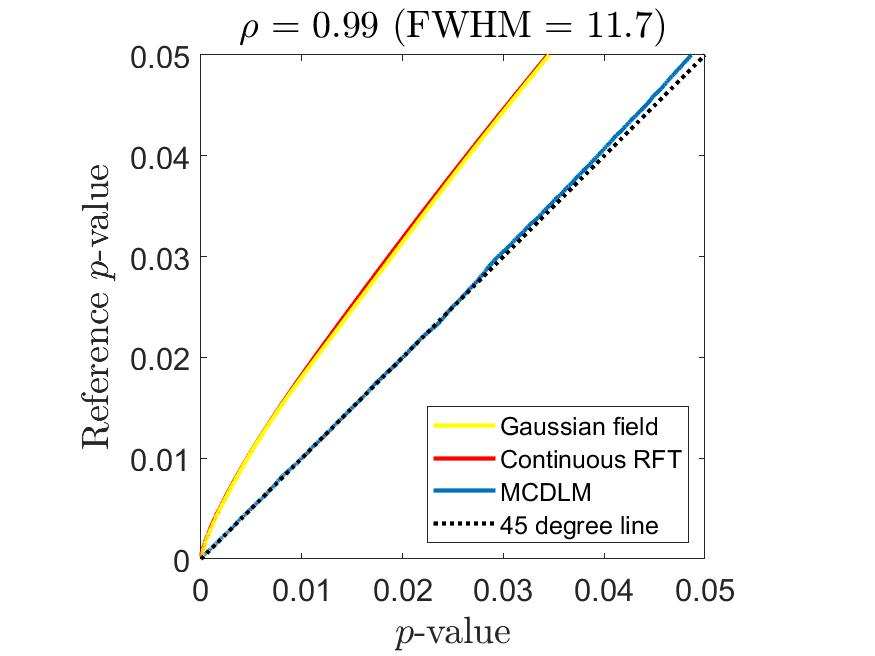}

\begin{sideways}
\phantom{------------------}$\nu = 200$
\end{sideways}
\includegraphics[trim=70 5 100 5, clip,width=0.3\textwidth]{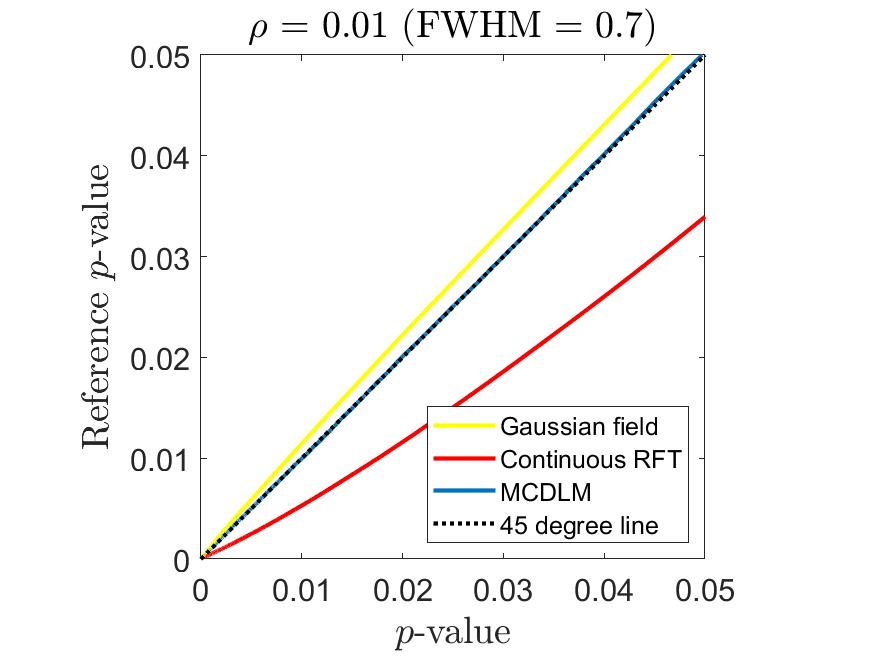}
\includegraphics[trim=70 5 100 5, clip,width=0.3\textwidth]{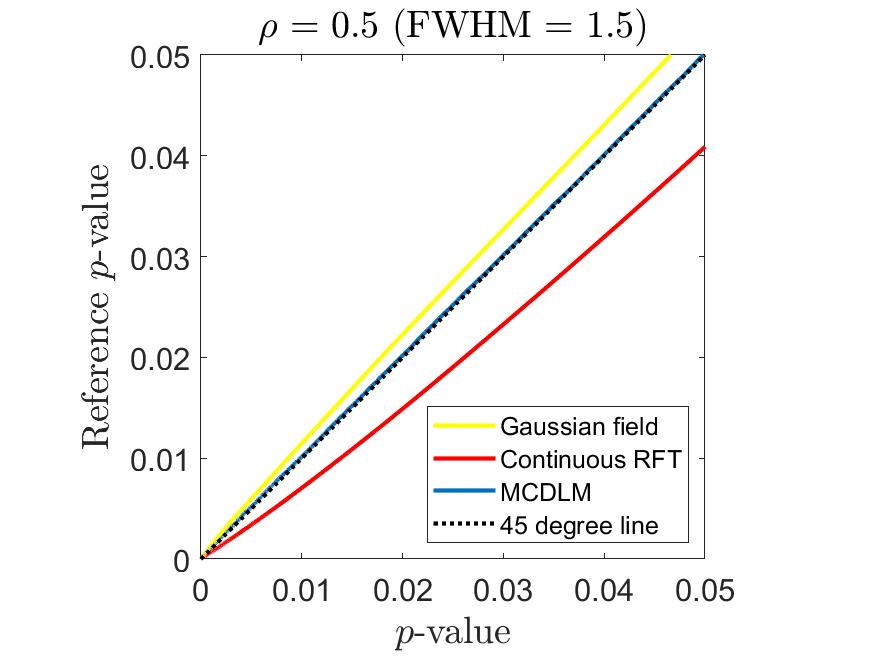}
\includegraphics[trim=70 5 100 5, clip,width=0.3\textwidth]{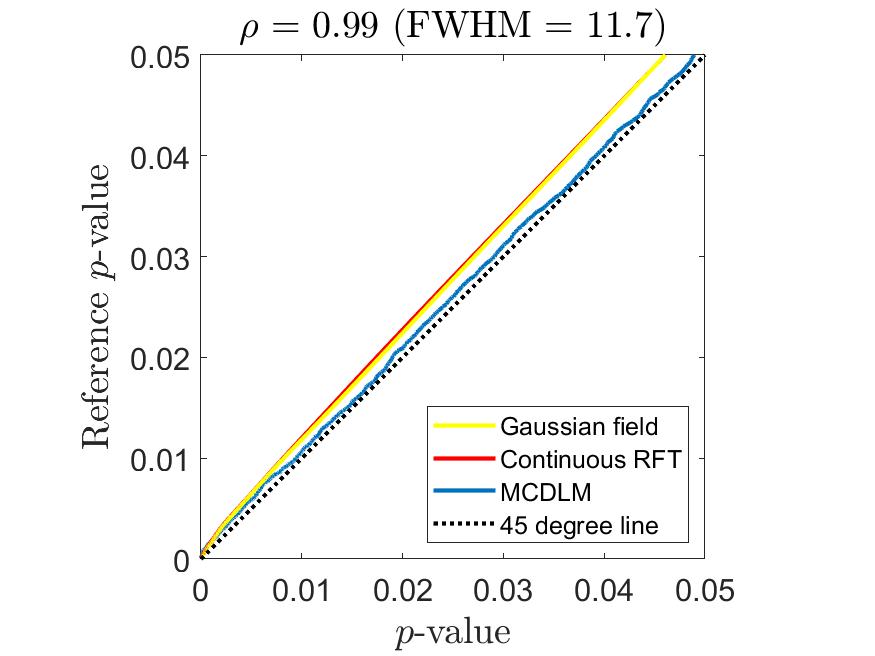}
\caption{Comparing methods for calculating the peak height distribution of a 2D $t$-field with $\nu$ degrees of freedom. From left to right: spatial correlation $\rho = 0.01, 0.5, 0.99$. From top to bottom: $\nu = 20, 50, 200$. The figure is generated based on the comparison of the $p$-values calculated using continuous RFT, MCDLM and Gaussian random fields with the same distribution as the Gaussian fields used to generate the $t$-fields. The reference for computing $p$-values is the true peak height distribution generated from the $t$-fields.\label{fig11}}
\end{figure}

\begin{figure}[!htp]
\centering
\begin{sideways}
\phantom{------------------}$\nu = 20$
\end{sideways}
\includegraphics[trim=70 5 100 5, clip,width=0.3\textwidth]{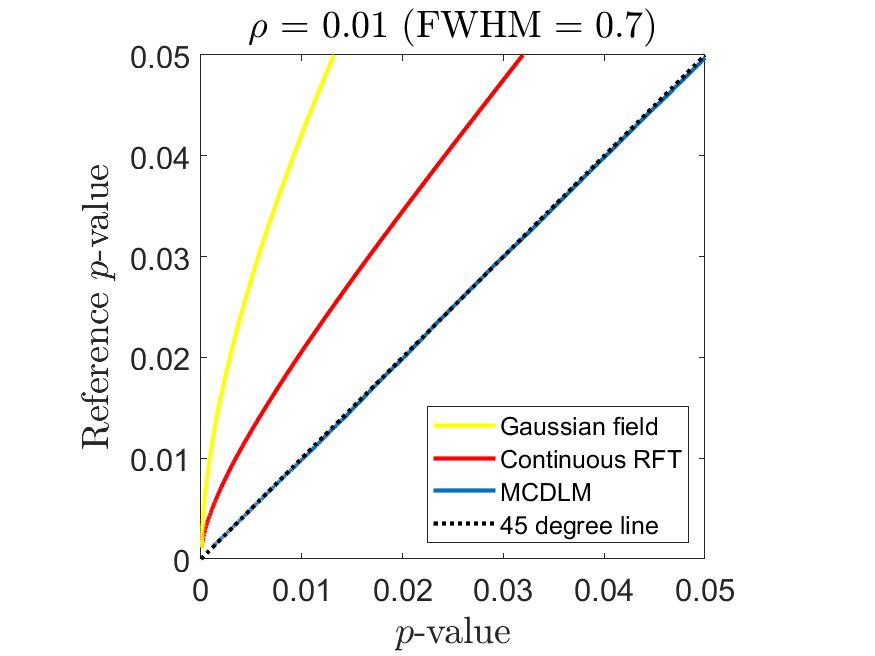}
\includegraphics[trim=70 5 100 5, clip,width=0.3\textwidth]{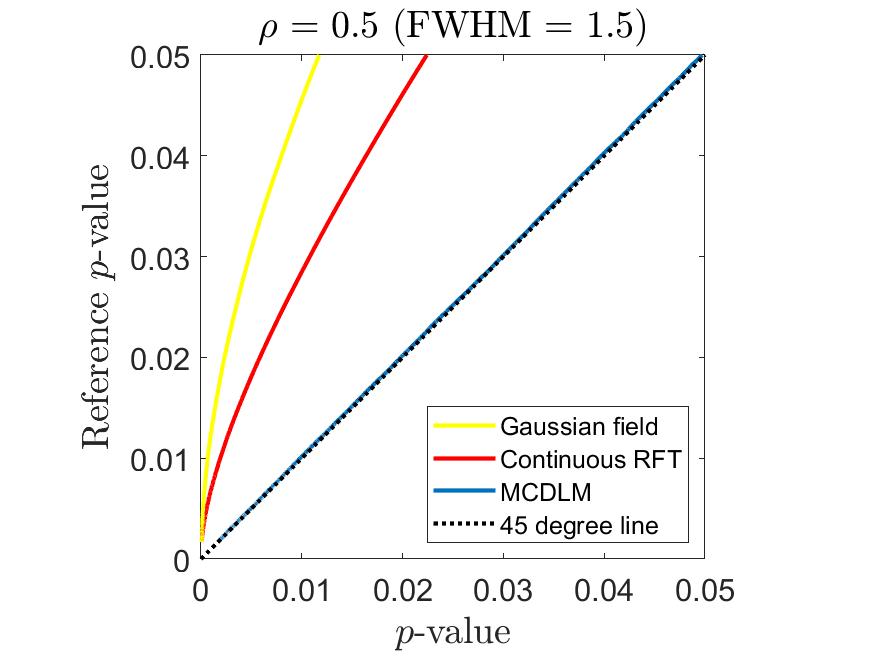}
\includegraphics[trim=70 5 100 5, clip,width=0.3\textwidth]{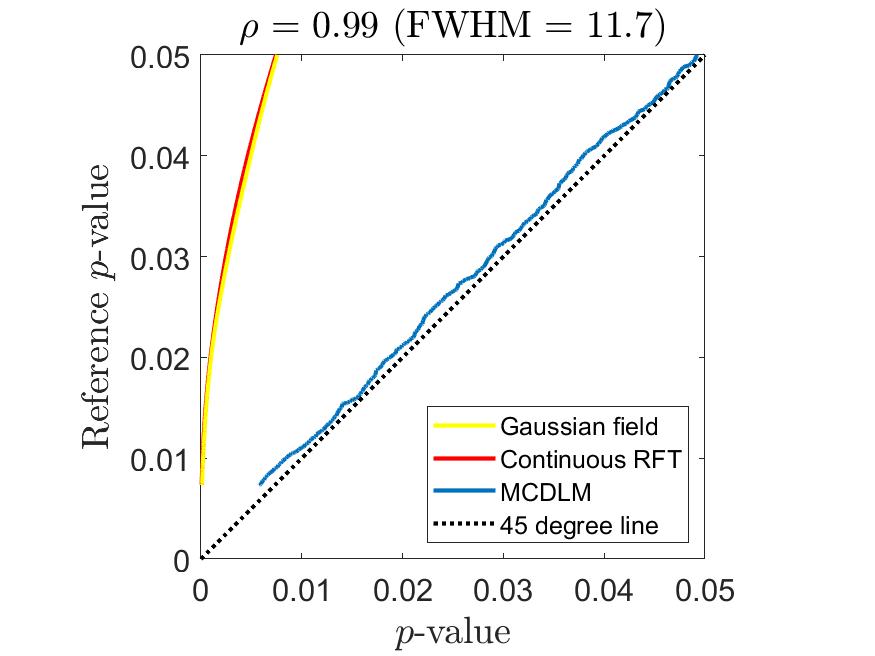}

\begin{sideways}
\phantom{------------------}$\nu = 50$
\end{sideways}
\includegraphics[trim=70 5 100 5, clip,width=0.3\textwidth]{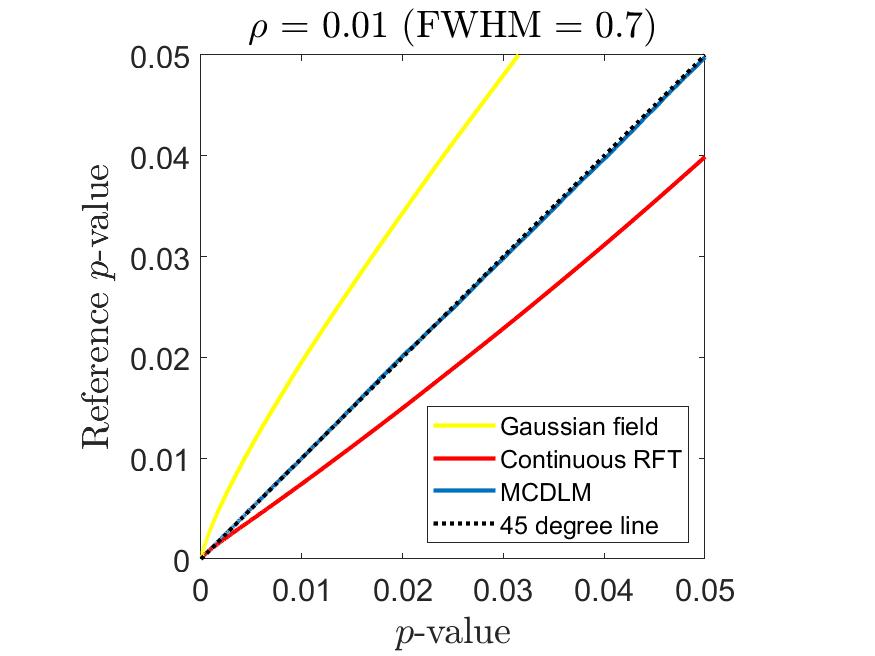}
\includegraphics[trim=70 5 100 5, clip,width=0.3\textwidth]{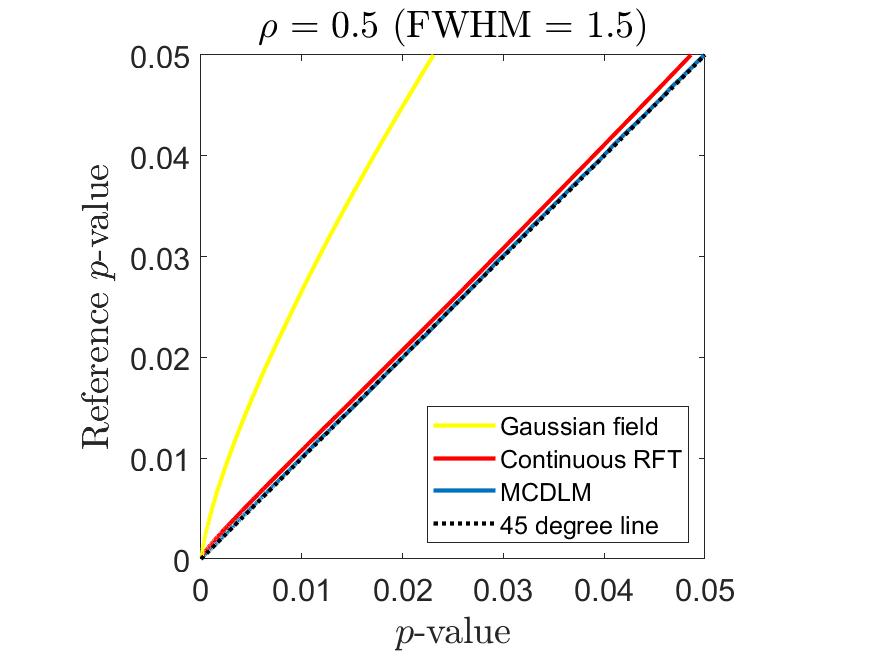}
\includegraphics[trim=70 5 100 5, clip,width=0.3\textwidth]{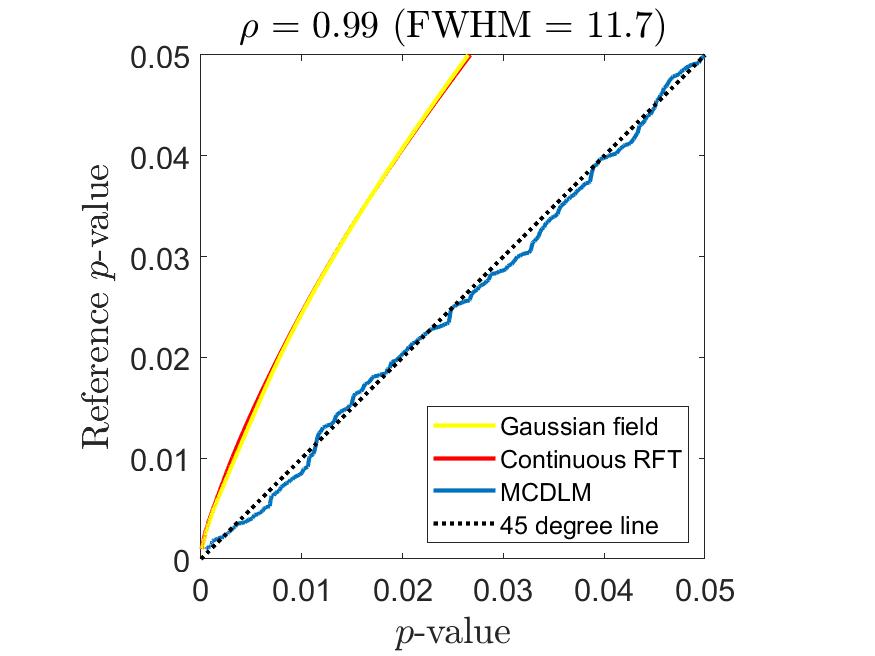}

\begin{sideways}
\phantom{------------------}$\nu = 200$
\end{sideways}
\includegraphics[trim=70 5 100 5, clip,width=0.3\textwidth]{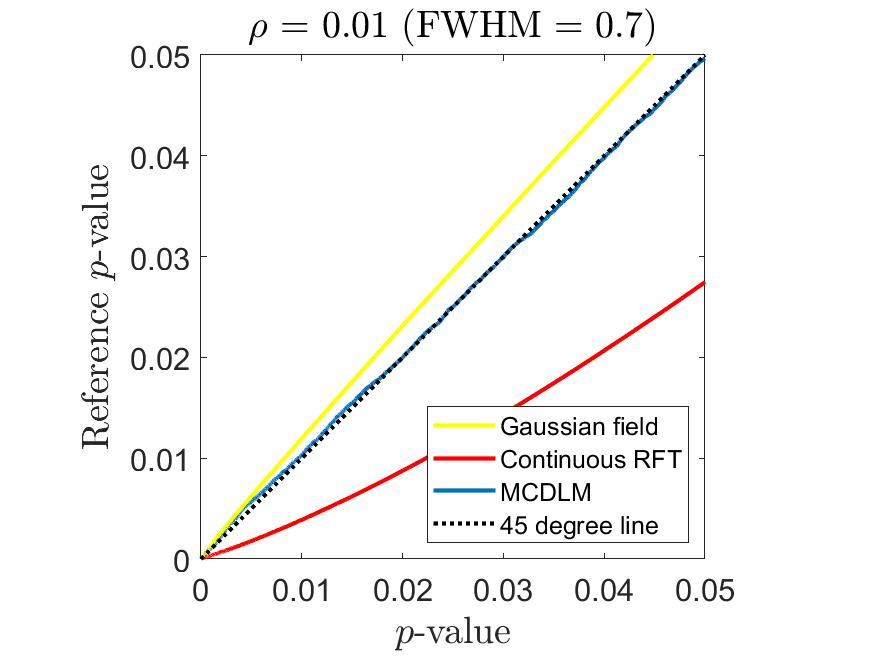}
\includegraphics[trim=70 5 100 5, clip,width=0.3\textwidth]{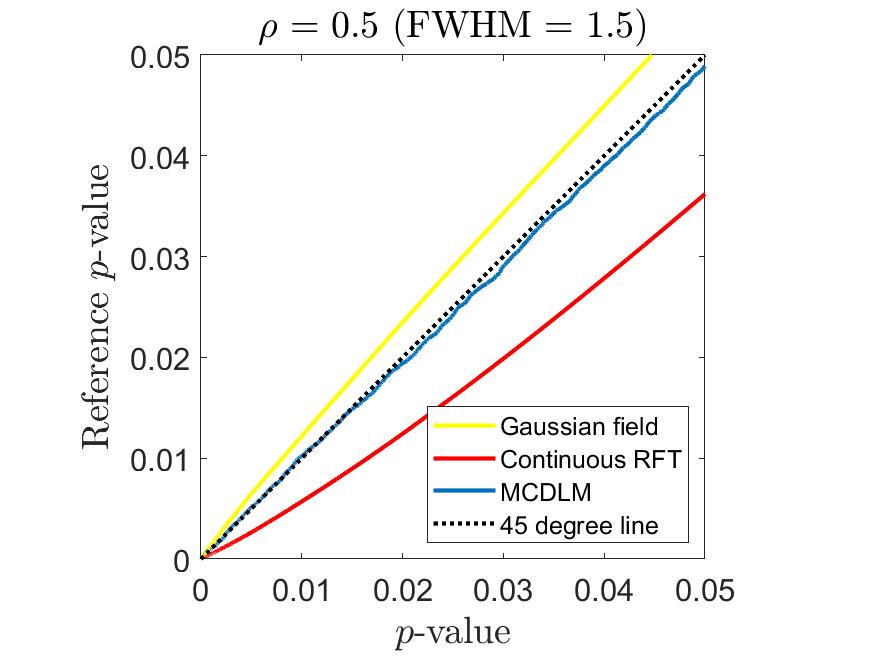}
\includegraphics[trim=70 5 100 5, clip,width=0.3\textwidth]{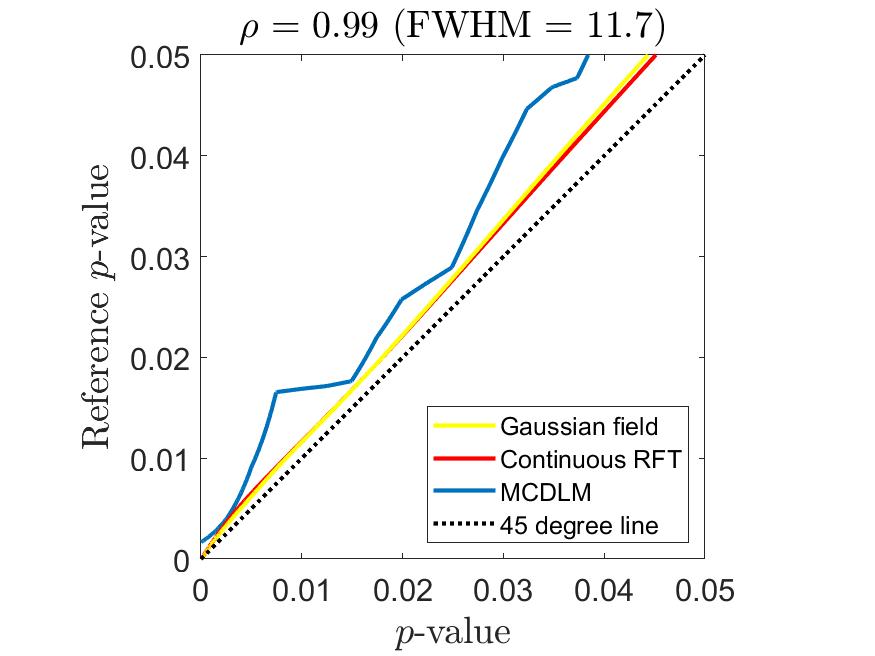}
\caption{Comparing methods for calculating the peak height distribution of a 3D $t$-field with $\nu$ degrees of freedom. From left to right: spatial correlation $\rho = 0.01, 0.5, 0.99$. From top to bottom: $\nu = 20, 50, 200$. The figure is generated based on the comparison of the $p$-values calculated using continuous RFT, MCDLM and Gaussian random fields with the same distribution as the Gaussian fields used to generate the $t$-fields. The reference for computing $p$-values is the true peak height distribution generated from the $t$-fields. \label{fig12}}
\end{figure}

\subsection{Stationary Gaussian fields with known nonseparable covariance} \label{sec.nonseparable}

Figure \ref{figure.nonsep} shows the results of MCDLM method with the theoretical neighborhood covariance when applied to stationary Gaussian fields with nonseparable covariance. In both simulations the MCDLM performs very well. These results validate that the MCDLM can work for a general mean-zero stationary field, even when the field is non-isotropic and the covariance structure is nonseparable. Under such conditions, the continuous RFT method fails, as it is only limited to isotropic fields. 

\begin{figure}[!htp]
\centering
\includegraphics[trim=100 5 100 5, clip,width=0.35\textwidth]{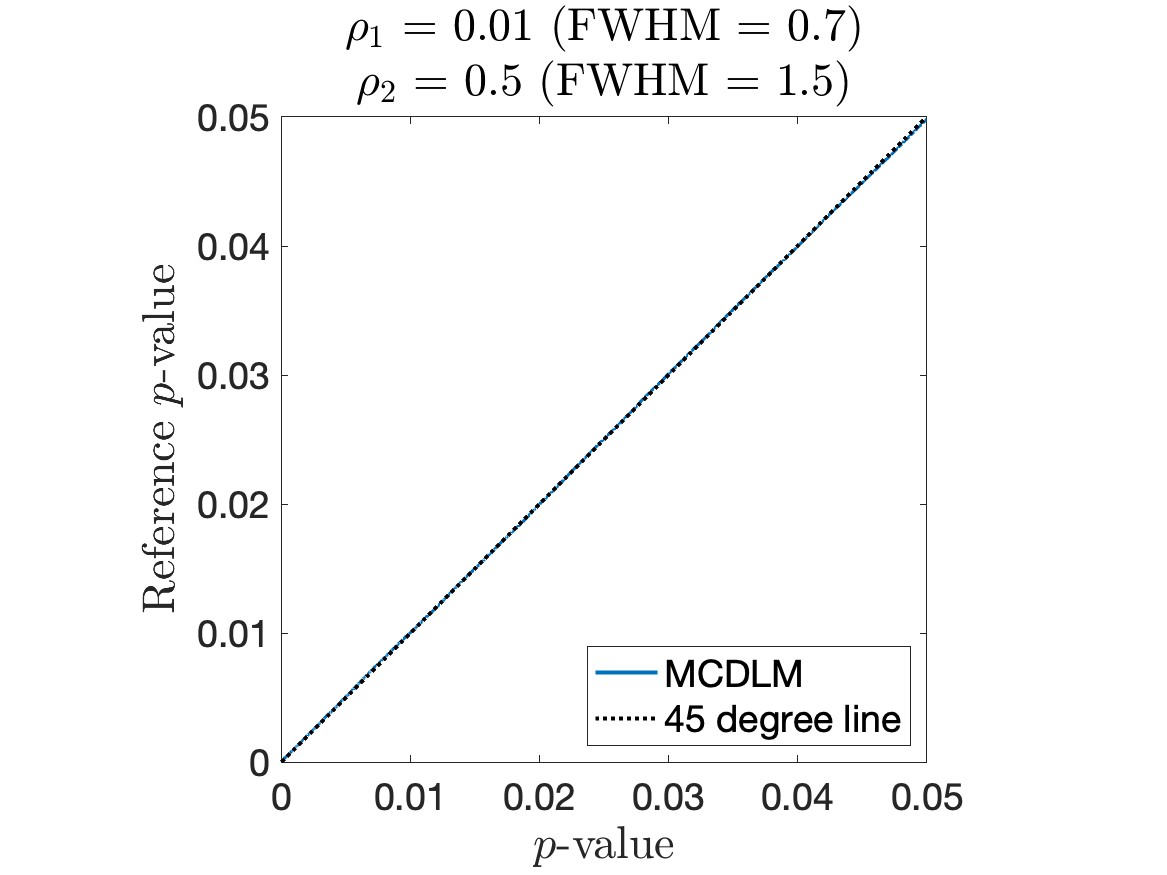}
\includegraphics[trim=100 5 100 5, clip,width=0.35\textwidth]{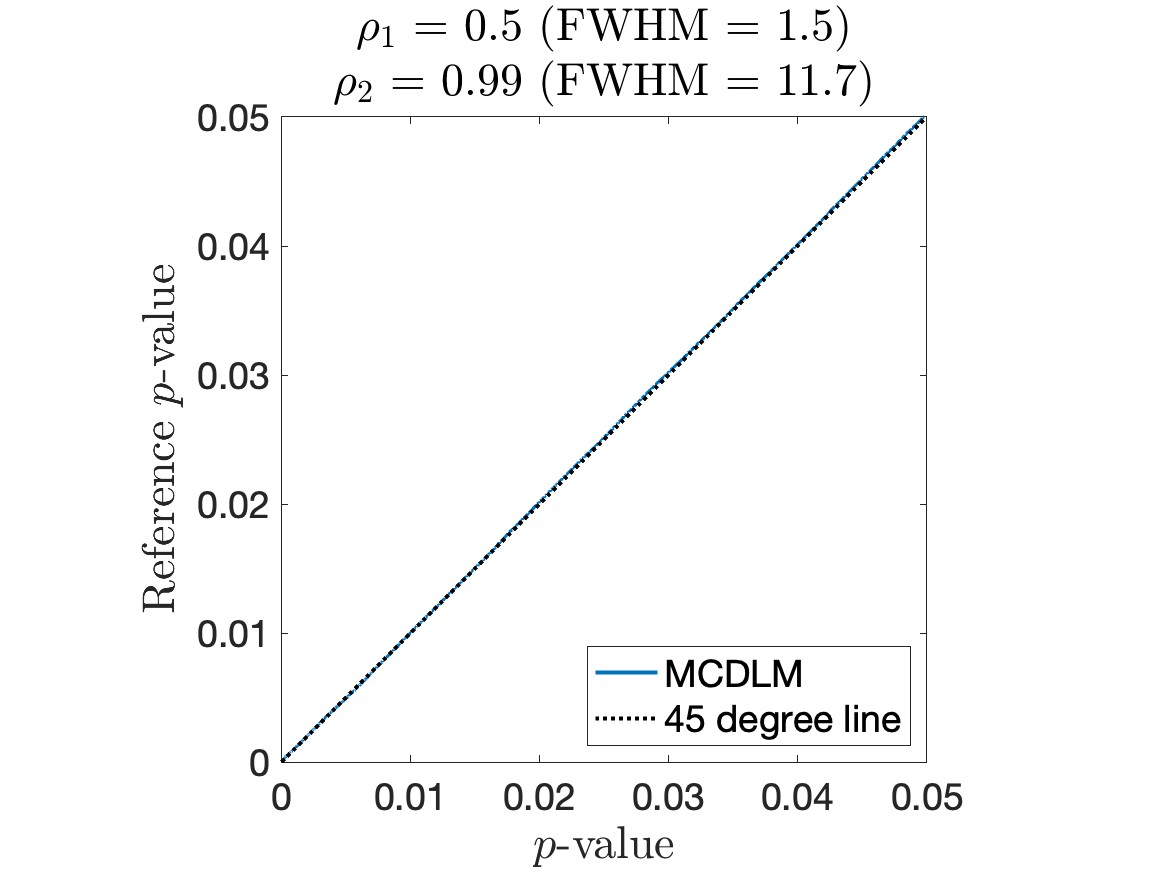}
\caption{Peak height distribution calculated from using MCDLM method for 2D stationary Gaussian fields with known nonseparable covariance. The left: $\rho_1 = 0.01$ and $\rho_2 = 0.5$ and the right: $\rho_1 = 0.5$ and $\rho_2 = 0.99$}
\label{figure.nonsep}
\end{figure}

\subsection{Stationary Gaussian fields with unknown covariance}
\label{sec4.4}

In Figure \ref{fig13}, we compare the peak height distribution calculated from the MCDLM method using both \ntb{true} neighborhood covariance (\ref{eqn2.7}) and estimated neighborhood covariance (\ref{eqn5.1}). Figure \ref{fig13} shows that MCDLM with the estimated neighborhood covariance essentially performs as well as MCDLM with the theoretical neighborhood covariance when $\rho = 0.01$ and $0.5$. When $\rho$ increases to 0.99, the MCDLM method with estimated covariance function requires a large number of simulated peaks before it \ntb{gives an accurate approximation}. This number decreases with the number of voxels in the image (which is why the performance of the 3D simulations are substantially better than the 2D ones), even when a very large sample size is used to estimate it. Since with $\rho = 0.99$, even with $1000$ instances to estimate the neighborhood covariance the MCDLM method still performs poorly, we investigate additional scenarios in which $\rho = 0.9, 0.93, 0.95$. The detailed results are included in Appendix \ref{appendix.d5}. Based on the results, we recommend using the MCDLM method with estimated covariance function when $\rho < 0.97$, or FWHM $< 6.7$ in practice.

The results for the second set of (non-isotropic) simulations discussed in Section \ref{sec3.3} are shown in Figure \ref{fig15}. From this figure we see that the estimated version works well when $\rho_1 = 0.01$ and $\rho_2 = 0.5$, and requires a larger number of realizations to converge when $\rho_1 = 0.5$ and $\rho_2 = 0.99$.

\begin{figure}[!htp]
\centering
\begin{sideways}
\phantom{------------------}2D
\end{sideways}
\includegraphics[trim=80 5 80 5, clip,width=0.3\textwidth]{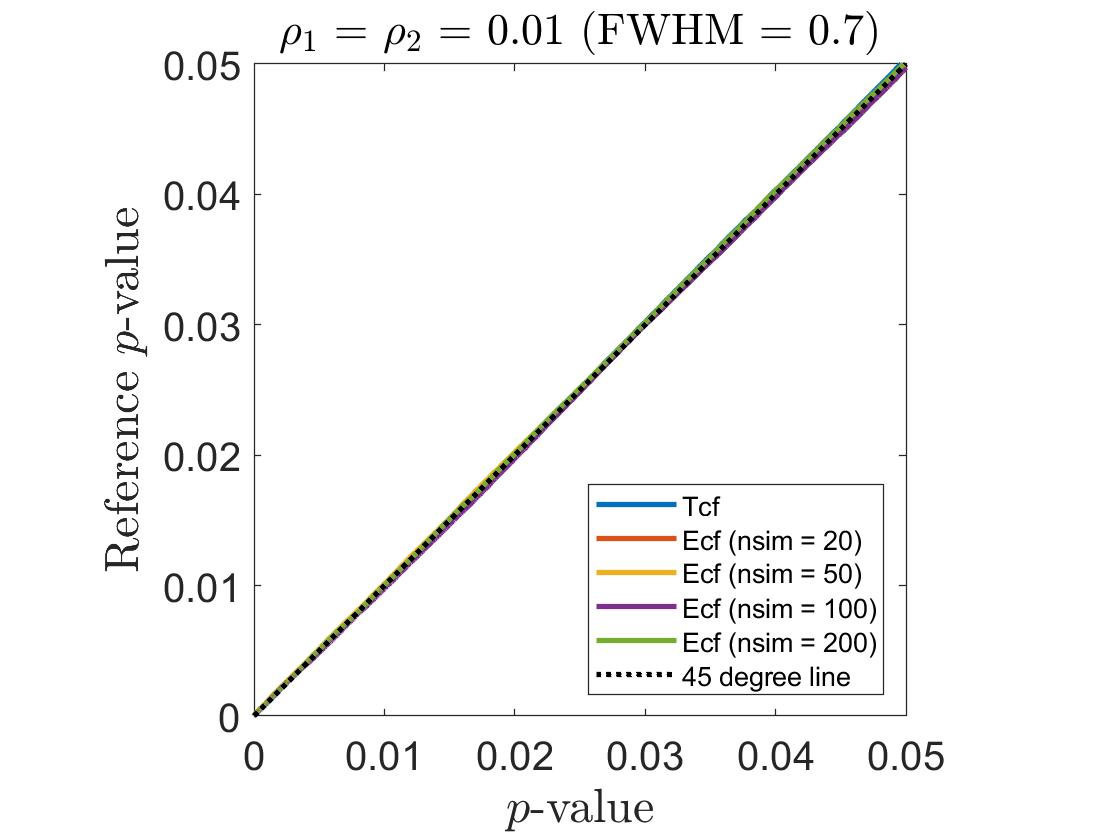}
\includegraphics[trim=80 5 80 5, clip,width=0.3\textwidth]{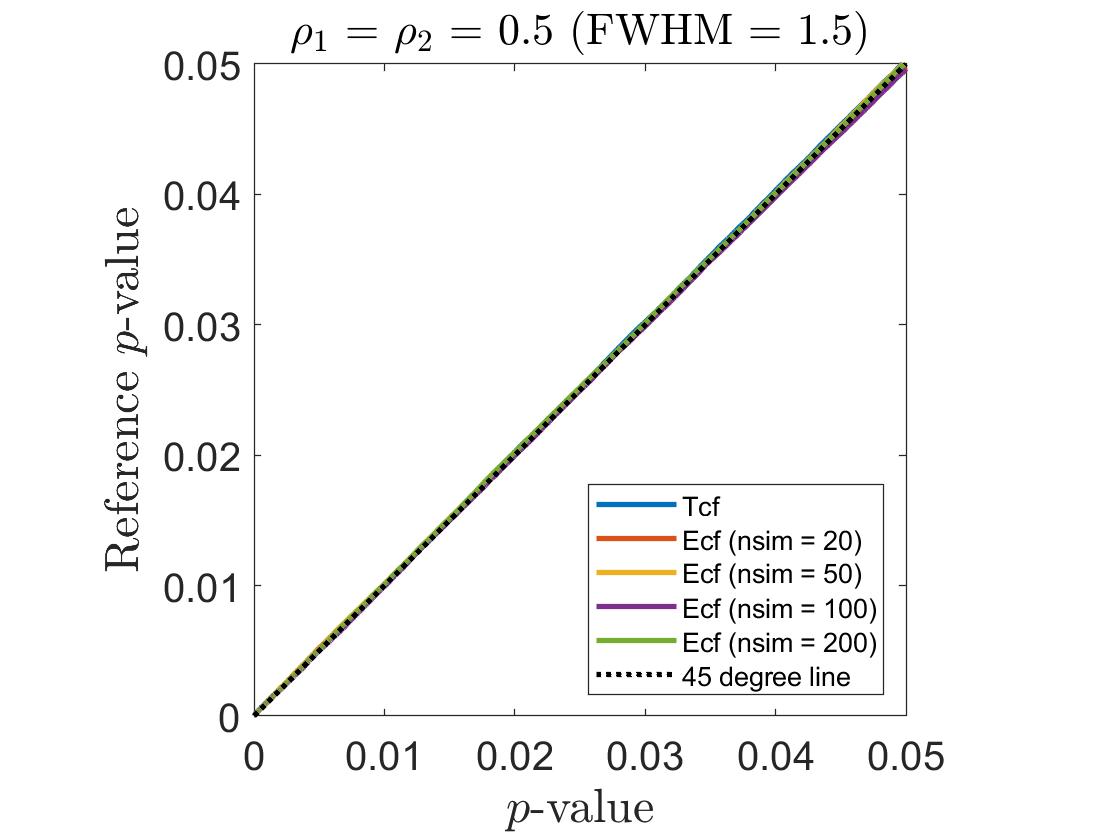}
\includegraphics[trim=80 5 80 5, clip,width=0.3\textwidth]{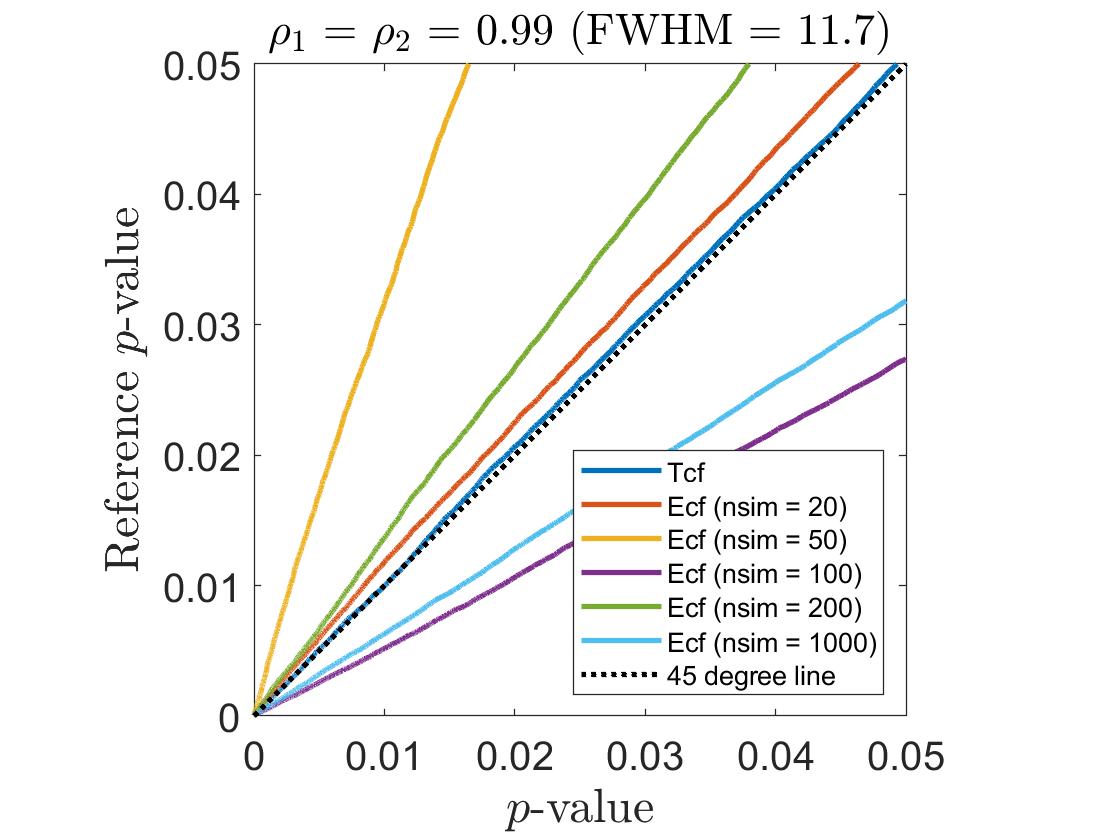}

\begin{sideways}
\phantom{------------------}3D
\end{sideways}
\includegraphics[trim=80 5 80 5, clip,width=0.3\textwidth]{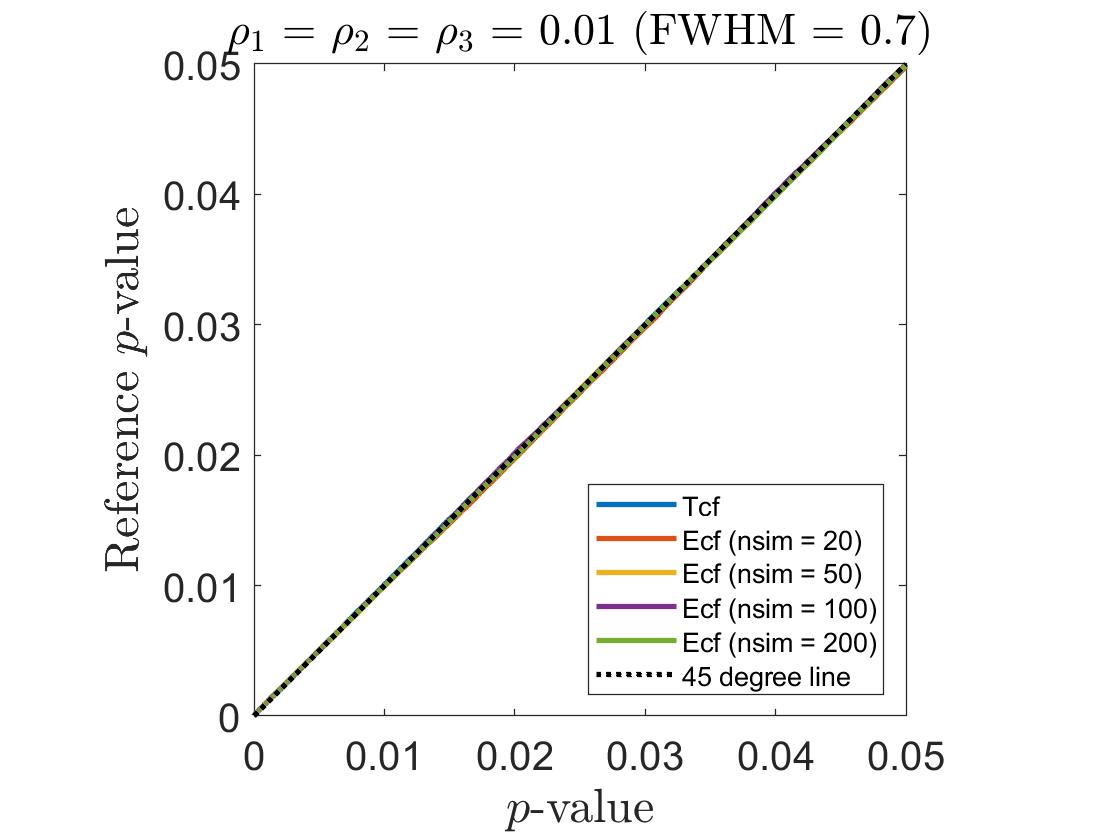}
\includegraphics[trim=80 5 80 5, clip,width=0.3\textwidth]{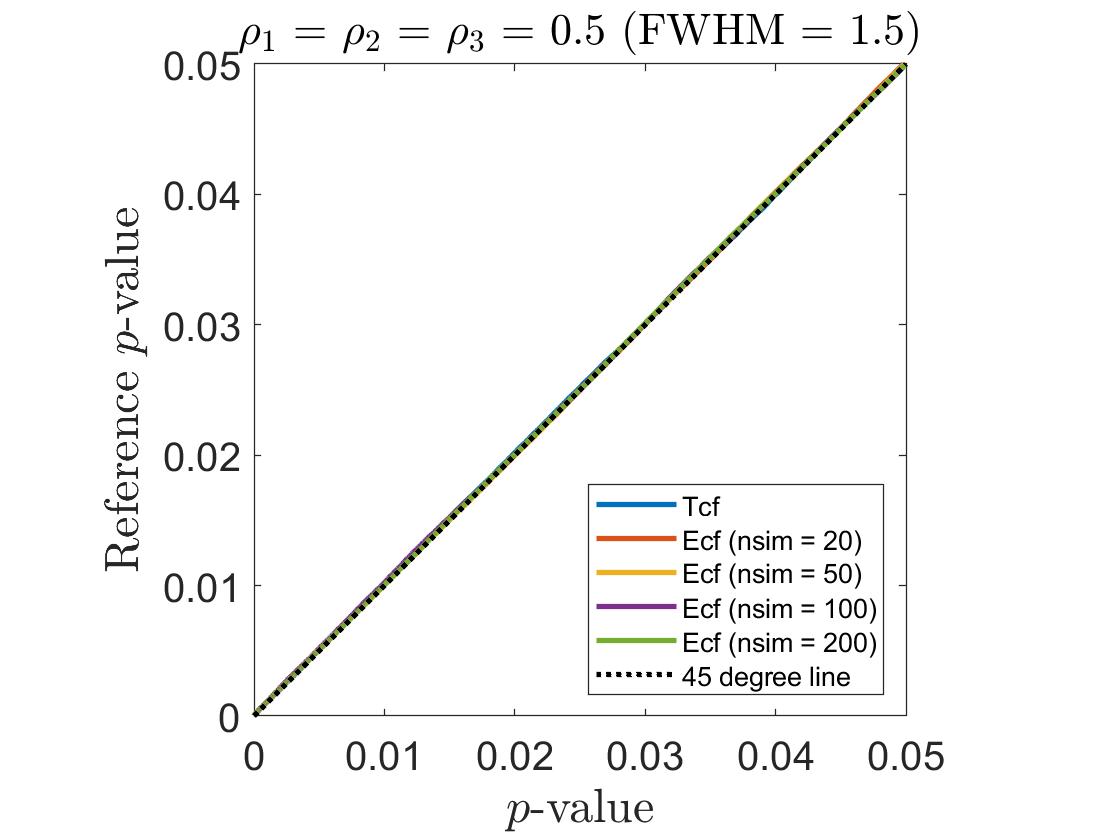}
\includegraphics[trim=80 5 80 5, clip,width=0.3\textwidth]{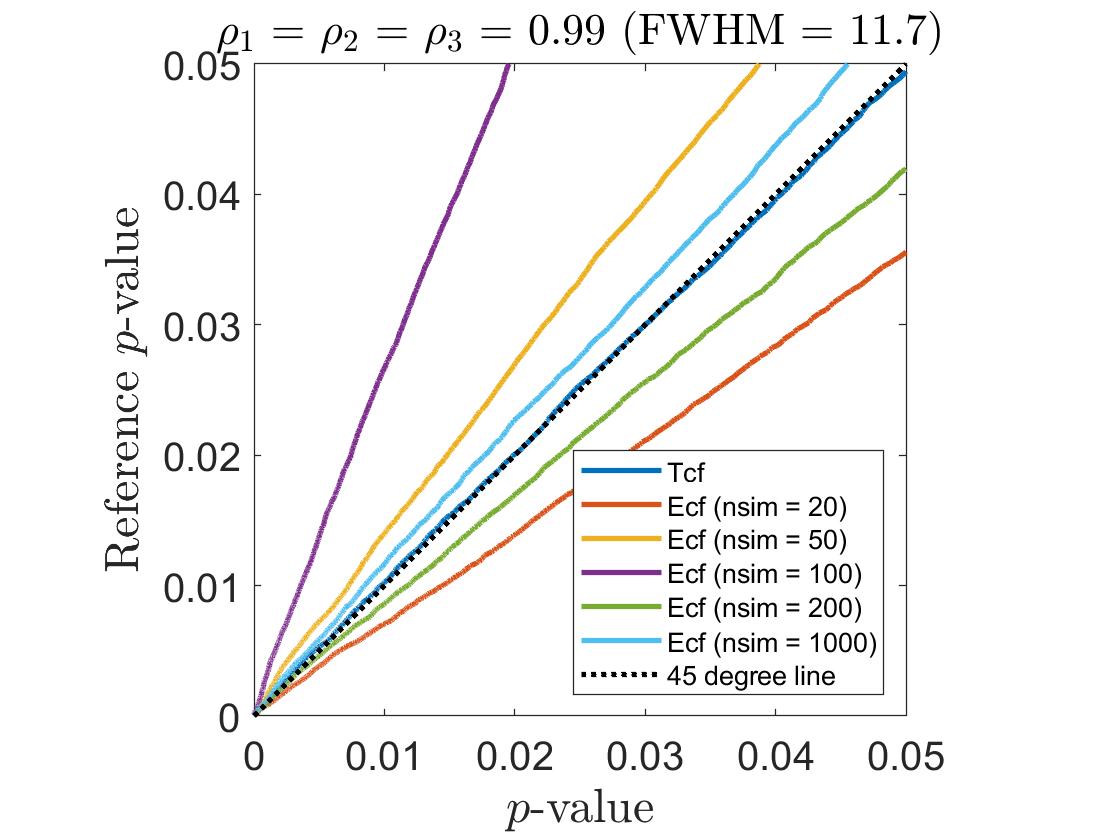}
\caption{Comparison of the peak height distribution calculated from using MCDLM with different neighborhood covariance for 2D and 3D isotropic Gaussian fields. The covariance functions used here are \ntb{true} covariance function (Tcf) and empirically estimated covariance function (Ecf). The number of random fields used to estimate the covariance function is denoted "{\sffamily nsim}". From left to right: $\rho = 0.01, 0.5, 0.99$. \label{fig13}}
\end{figure}

\begin{figure}[!htp]
\centering
\includegraphics[trim=100 5 100 5, clip,width=0.35\textwidth]{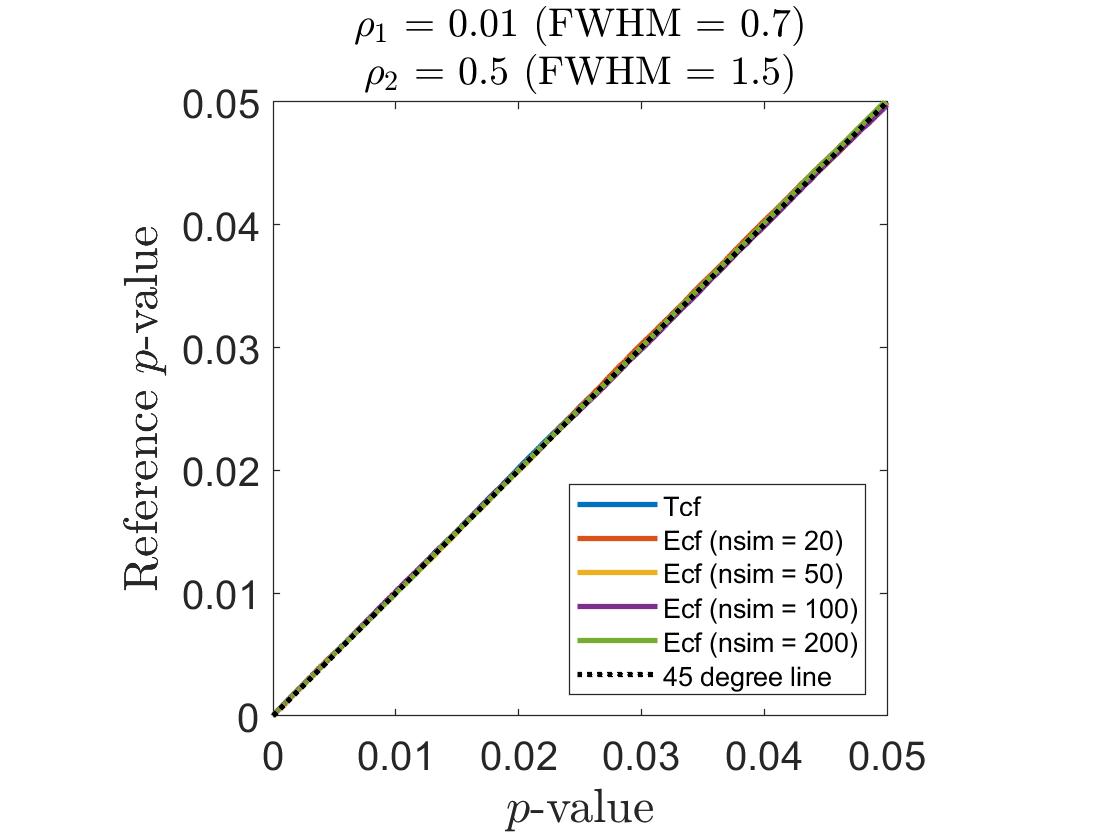}
\includegraphics[trim=100 5 100 5, clip,width=0.35\textwidth]{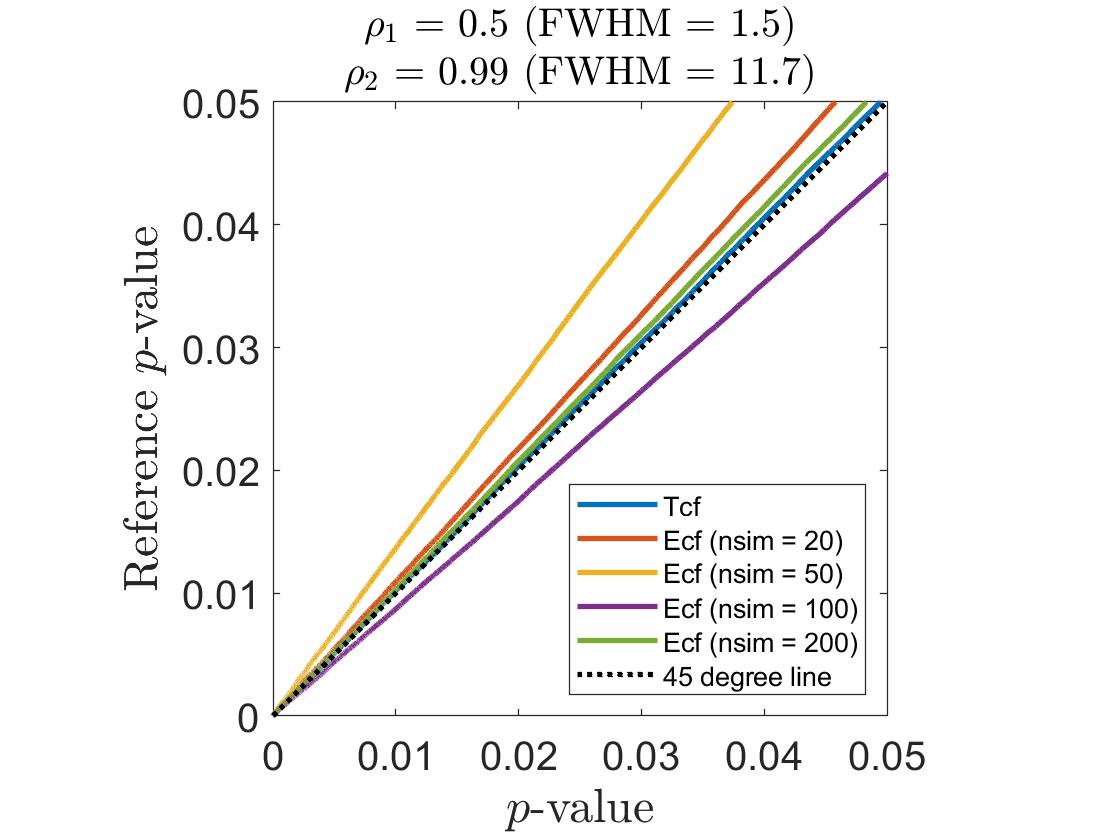}
\caption{Comparison of the peak height distribution calculated from using MCDLM method with different covariance functions for 2D anisotropic stationary Gaussian fields. The left: $\rho_1 = 0.01$ and $\rho_2 = 0.5$ and the right: $\rho_1 = 0.5$ and $\rho_2 = 0.99$. \label{fig15}}
\end{figure}

\section{HCP Task fMRI Results}
\label{sec5}
We evaluate the methods on 80 unrelated subjects of the Human Connectome Project (HCP) by testing one of the working memory contrasts. Specifically, we focus on the average contrasts of 2-back and 0-back tasks \citep{davenport2020selective}. Since N-back task is commonly used to measure working memory, and N can be adjusted for task difficulty, such contrasts help to identify the brain regions supporting working memory \citep{kirchner1958age}.  \nt{To ensure that the assumption of stationarity is reasonable we restrict to a mask of the gray matter on which the smoothness can be assumed to be constant (personal communication with Thomas E. Nichols), see e.g. Supplementary Figure 19 of \cite{eklund2016cluster}}. 

We perform the peak inference by computing the one-sample $t$-statistics at voxel level and applying MCDLM for $t$-fields to calculate $p$-values for 1101 peaks that we found in image. We use both MCDLM based on multivariate $t$-distribution and Gaussianization transformation method described in Section \ref{sec2.4} for the calculation. The isotropy assumption is too strict as we observe three different correlations, 0.86 (FWHM = 3.06), 0.88 (FWHM = 3.27) and 0.85 (FWHM = 2.87) in the horizontal, vertical and longitudinal directions, respectively. \nt{Based on our simulations we would expect continuous RFT to be conservative in this setting due to the low smoothness of this data and since it does not work well for $t$-statistics. Instead this setting is ideal for the use of MCDLM with the estimated neighborhood covariance matrix. We compare the results of both approaches applied to this dataset.}

The largest peak has a $t$-statistic value of 13.59. \nt{Figure \ref{fig.hcp} shows slices in different directions through one-sample $t$-statistic corresponding to this location.} This peak is located within the medial prefrontal cortex, which is an area commonly associated with working memory \citep{euston2012role,perlstein2002dissociation}. \nt{Using the one-sample $t$ version of MCDLM defined in Section \ref{sec2.4} we calculate a $p$-value of \num{4.57e-7}. Instead using the faster but slightly less accurate Gaussianized version of MCDLM we obtain a $p$-value of \num{9.42e-8}. } It is worth noting that, as MCDLM is a computation-based method, there is no closed-form function to calculate $p$-values across the entire real domain. When the $t$-statistic is extremely large, determining the exact $p$-value requires a large number of samples, making the process computationally intensive. In such case, we conclude that the $p$-value is smaller than the minimum value obtained from the empirical distribution in our Monte Carlo experiments. 


\begin{figure}[!htp]
	\centering
	\includegraphics[width = \textwidth]{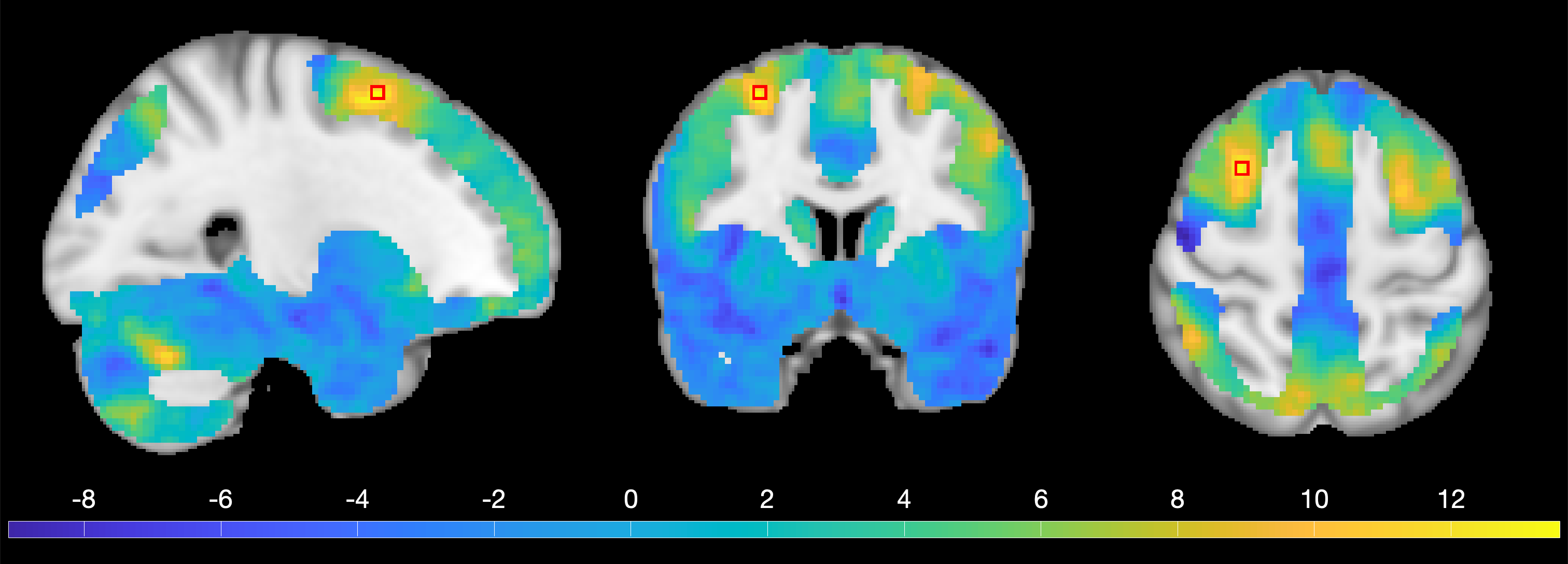}
	\caption{\nt{Slices in different directions through the one-sample $t$-statistic - masked to the gray matter - at the location of the largest peak from the 2-back minus 0-back contrast from the Human Connectome Project. The colorbar indicates the magnitude of the test-statistic. A red box indicates the location of the largest peak.}\label{fig.hcp}}
\end{figure}

The number of significant peaks ($p$-value < 0.05) out of 1101 peaks from MCDLM based on multivariate $t$-distribution is 359, the MCDLM based on Gaussianization transformation method is 341, and the continuous RFT method is 325. We also implement the Benjamini-Hochberg procedure \citep{benjamini1995controlling} to control the false discovery rate (FDR) to a level $\alpha = 0.05$. After multiple comparisons adjustment, there are 293 significant peaks from MCDLM, 276 from MCDLM with Gaussianization, 270 from continuous RFT method. These results demonstrate that continuous RFT is overly conservative in this setting as expected from our simulations. In Table \ref{tab.hcp} we illustrate the results for 10 peaks which have $p$-values close to 0.05. MCDLM and continuous RFT method give different conclusions for all these 10 peaks, i.e., while all these 10 peaks are found to be signfiicant using MCDLM, they are not significant when continuous RFT is used, illustrating the increased sensitivity of our approach. 

\begin{table}[!htp]
\caption{The $t$-statistics and $p$-values from MCDLM based on multivariate $t$-distribution, MCDLM with Gaussianization and Continuous RFT method for 10 selected peaks that have $p$-values close to 0.05. The location indices are from the natural indexing within the 1 to 91, 1 to 109 and 1 to 91 image. \label{tab.hcp}}
\centering
\begin{tabular}{lllll}
\hline
$t$-statistics & Location & MCDLM & MCDLM w/ Gaussianization & Continuous RFT\\ \hline
3.151 & (49,70,49) & 0.033 & 0.045 & 0.060 \\ 
3.146 & (51,62,37) & 0.033 & 0.045 & 0.061 \\ 
3.144 & (56,48,22) & 0.033 & 0.045 & 0.061 \\ 
3.130 & (55,24,15) & 0.034 & 0.047 & 0.063 \\ 
3.124 & (41,37,22) & 0.035 & 0.047 & 0.063 \\ 
3.108 & (44,57,24) & 0.036 & 0.049 & 0.066 \\ 
3.104 & (76,40,54) & 0.036 & 0.049 & 0.066 \\ 
3.088 & (48,21,46) & 0.038 & 0.051 & 0.069 \\ 
3.081 & (47,57,52) & 0.038 & 0.052 & 0.070 \\ 
3.069 & (56,46,22) & 0.039 & 0.053 & 0.071 \\ \hline
\end{tabular}
\end{table}

\section{Discussion}
\label{sec6}
In this paper, we have proposed a new Monte Carlo method to calculate the distribution of the height of a peak of a discrete Gaussian random field which works under minimal assumptions. When inferring on the heights of the peaks of Gaussian fields, MCDLM performed well compared to other approaches. Historically, continuous RFT has been used to calculate the distribution of the height of local maxima in a continuous random field. However, in practice we observe data on a lattice. As shown in \cite{schwartzman2019peak} and the simulation studies in this paper, when the data is sufficiently smooth (FWHM $\ge 7$ in \cite{schwartzman2019peak} and FWHM $\geq 6.7$ in this paper), the continuous formulae provide a good approximation to the height of local maxima. \nt{Nevertheless, in many realistic situations (FWHM $< 6.7$) the data is not smooth enough and using the continuous formulae can lead to conservative inference. Furthermore, the continuous formulae only work for an isotropic field or a field that can be deformed to an isotropic field, which is a highly restrictive assumption and is often not reasonable in practice. When applied to an isotropic $t$-field, continuous RFT also relies heavily on convergence to Gaussianity, which is only valid when the degrees of freedom is large. One further point is that the peak height distribution will be different for points on the boundary of the domain which is difficult to account for using continuous RFT but easy to do so using MCDLM by changing the neighborhood structure. }

\nt{In our simulations we showed that continuous RFT was conservative while ADLM was liberal in a wide variety of settings. The conservativeness of continuous RFT is due to a failure of the good lattice assumption and is the same phenomenon observed in \cite{telschow2023riding} and \cite{davenport2023robust}. Instead the anti-conservativeness of ADLM is due to the failure of its rather restrictive assumptions on the covariance and neighbourhood structure. MCDLM instead provided exact inference in most of the settings considered. We thus recommend using MCDLM to infer on peak height at low to medium levels of smoothness. However, when the data is very smooth and it is reasonable to assume the data is isotropic, there may not be much gain relative to continuous RFT. Since the latter is very efficient we recommend it in that setting as it provides a precise formula for the peak height distribution which can be quickly and accurately calculated. At high levels of smoothness the covariance matrix is nearly singular making it difficult to estimate, which causes problems when applying MCDLM.} A detailed running-time table for all three approaches, under different scenarios, is provided in Appendix \ref{appendix.g}.


The proposed MCDLM method also works for $t$-fields, but it takes a long time to implement this approach when the number of degrees of freedom is large. To improve the computational efficiency, we recommend using a Gaussianization transformation and then applying MCDLM to the Gaussianized field. In this setting continuous RFT works well when both smoothness and degrees of freedom are high, but even when degrees of freedom increases to $200$, it is still outperformed by MCDLM. Moreover MCDLM can be easily extended to obtain the peak height distribution of two-sample $t$-statistic and $F$-fields observed on a lattice \nt{by changing line 3 of Algorithm \ref{alg1} appropriately.} The proposed method is limited to stationary Gaussian or Gaussian-derived random fields. However extensions to locally stationary and non-stationary fields are possible and are an interesting avenue for future research.

\nt{\section{Acknowledgements}
SD and AS were partially supported by NIH grant R01EB026859.  Data were provided in part by the Human Connectome Project, WU-Minn Consortium(Principal Investigators: David Van Essen and Kamil Ugurbil;1U54MH091657) funded by the 16 NIH Institutes and Centers that support the NIHBlueprint for Neuroscience Research; and by the McDonnell Center for Systems Neuroscienceat Washington University.}

\newpage
\bibliographystyle{apa}
\bibliography{references}

\newpage
\appendix
\label{sec:headings}

\section{Theory for ADLM}
\label{appendix.a}
\subsection{Proof of Proposition \ref{prop1}\label{appendix.a1}}
\begin{proof}
Without loss of generality, we assume that $Z$ is mean zero and unit variance. For $s \in \mathcal{L}$, taking $\mathcal{N}(s)$ to be the partially connected neighborhood of $s$, for $s_a, s_b \in \mathcal{N}(s)$, we have
\begin{align*}
    \begin{pmatrix} Z(s) \\ Z(s_a) \\ Z(s_b) \end{pmatrix} \sim N\left(\mathbf{0}, \begin{pmatrix} \Sigma_{AA} & \Sigma_{AB} \\ \Sigma_{BA} & \Sigma_{BB} \end{pmatrix}\right),
\end{align*}
where $\Sigma_{AA} = \Var(Z(s)) = 1$, $\Sigma_{AB} = \Sigma_{BA}^\top = \left[\Cov(Z(s), Z(s_a))\  \Cov(Z(s), Z(s_b))\right] = \left[\rho(s,s_a)\ \rho(s, s_b)\right]$ and $\Sigma_{BB} = \Var\left[\begin{pmatrix} Z(s_a) \\ Z(s_b) \end{pmatrix}\right] = \begin{pmatrix}
    1 & \rho(s_a, s_b) \\ \rho(s_a, s_b) & 1
    \end{pmatrix} $.

For $z \in \mathbb{R}$, the covariance of $Z(s_a)$ and $Z(s_b)$ conditional on $Z(s)=z$ is 
\begin{align*}
    \Var\left[\begin{pmatrix} Z(s_a) \\ Z(s_b) \end{pmatrix}\Bigg|Z(s) = z\right] &= \Sigma_{BB} - \Sigma_{BA}\Sigma_{AA}^{-1}\Sigma_{AB}\\ &= \Sigma_{BB} - \Sigma_{BA}\Sigma_{AB} \\ &= \begin{pmatrix}
    1 & \rho(s_a, s_b) \\ \rho(s_a, s_b) & 1
    \end{pmatrix} - \begin{pmatrix}
    \rho(s, s_a)^2 & \rho(s, s_a)\rho(s, s_b) \\ \rho(s, s_a)\rho(s, s_b) & \rho(s, s_b)^2
    \end{pmatrix}.
\end{align*}

Due to the form of the covariance function, the off-diagonal entries are 
\begin{align*}
    \rho(s_a, s_b) - \rho(s, s_a)\rho(s, s_b) = \rho^{||s_a-s_b||^2} - \rho^{||s_a-s||^2}\rho^{||s_b-s||^2} = \rho^{||s_a-s_b||^2} - \rho^{||s_a-s||^2 + ||s_b-s||^2},
\end{align*}
where $\rho$ is the correlation between two adjacent voxels and $||\cdot||$ denotes the Euclidean norm. Thus, if $||s_a-s_b||^2 = ||s_a-s||^2 + ||s_b-s||^2$, then $\Cov(Z(s_a), Z(s_b)|Z(s)) = 0$. Now $(Z(s_a), Z(s_b))$ follows a bivariate Gaussian distribution so it follows that $Z(s_a)$ and $Z(s_b)$ is independent conditional on $Z(s_5)$. Taking $s_a = s\pm e_{d_1}$ and $s_b = s\pm e_{d_2}$, where $d_1$ and $d_2$ denote different lattice directions, $||s_a-s_b||^2 = ||e_{d_1} \pm e_{d_2}||^2$ which equals $||s_a-s||^2 + ||s_b-s||^2 = ||e_{d_1}||^2 + ||e_{d_2}||^2$ and so the result follows. 
\end{proof}

\subsection{Theoretical derivation of the probability density function of ADLM method\label{appendix.a2}}
\nt{In the result below we derive a closed form for peak height distribution for the ADLM approach, recalling that $\mathcal{N}_{PC}$ is the partially connected neighbourhood and defining $Q(\rho_d,z)$ as in \eqref{sdfsdf}.}
\begin{Proposition}\label{prop2}
\nt{Under the conditions of Proposition \ref{prop1}, given $s \in \mathcal{L}$ such that $\mathcal{N}_{PC}(s) \subseteq \mathcal{L}$, for all $u \in \mathbb{R}$ we have}
\begin{align*}
    P [Z(s) > u|Z(t) < Z(s), \forall t \in \mathcal{N}_{PC}(s)] = \frac{\int_u^\infty\left(\prod_{d=1}^DQ(\rho_d,z)(z)\right)\phi(z)dz}{\int_{-\infty}^\infty\left(\prod_{d=1}^DQ(\rho_d,z)(z)\right)\phi(z)dz}
\end{align*}
\nt{and in particular
\begin{align*}
	f_{\text{DLM}}(z) = \frac{\prod_{d=1}^DQ(\rho_d,z)\phi(z)}{\int_{-\infty}^\infty\bigg(\prod_{d=1}^DQ(\rho_d,z)\bigg)\phi(z)dz}.
\end{align*}}
\end{Proposition}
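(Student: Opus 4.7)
The plan is to apply the definition of conditional probability, then factor the resulting expression along axis directions using Proposition \ref{prop1}, and finally reduce each factor to $Q(\rho_d,z)$ via a bivariate Gaussian orthant-probability calculation. First I would write
\begin{align*}
P[Z(s)>u \mid Z(t)<Z(s),\forall t\in\mathcal{N}_{PC}(s)]
= \frac{P[Z(s)>u,\ Z(t)<Z(s),\forall t\in\mathcal{N}_{PC}(s)]}{P[Z(t)<Z(s),\forall t\in\mathcal{N}_{PC}(s)]},
\end{align*}
and condition on $Z(s)=z$ in both numerator and denominator. Since $Z(s)\sim N(0,1)$, each probability can be written as $\int_I P[Z(t)<z,\forall t\in\mathcal{N}_{PC}(s)\mid Z(s)=z]\phi(z)\,dz$ with $I=(u,\infty)$ for the numerator and $I=\mathbb{R}$ for the denominator. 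This already produces the outer ratio-of-integrals structure of the claim.

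Next I would invoke Proposition \ref{prop1}: conditional on $Z(s)=z$, the pairs $(Z(s-v_de_d),Z(s+v_de_d))$ are mutually independent across $d$, so
\begin{align*}
P[Z(t)<z,\forall t\in\mathcal{N}_{PC}(s)\mid Z(s)=z]
= \prod_{d=1}^D P[Z(s-v_de_d)<z,\ Z(s+v_de_d)<z \mid Z(s)=z].
\end{align*}
The proposition then reduces to showing each factor equals $Q(\rho_d,z)$. Using the separable correlation \eqref{eqn2.3}, the unconditional correlation between the two axis-neighbors is $\rho_d^4$, while each has correlation $\rho_d$ with $Z(s)$. Standard Gaussian conditioning gives conditional means $\rho_d z$, conditional variances $1-\rho_d^2$, and conditional covariance $\rho_d^4-\rho_d^2=-\rho_d^2(1-\rho_d^2)$. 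Standardising by $\sqrt{1-\rho_d^2}$, the threshold $z$ becomes $h_dz$ with $h_d=\sqrt{(1-\rho_d)/(1+\rho_d)}$, and the target probability becomes $P[U_1<h_dz,\ U_2<h_dz]$, where $(U_1,U_2)$ is standard bivariate Gaussian with correlation $-\rho_d^2$.

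The main obstacle is then reducing this bivariate Gaussian orthant probability to the stated closed form. I would apply Sheppard's identity, which expresses the orthant probability as $\Phi(h_dz)^2$ plus an integral of the form $\tfrac{1}{2\pi}\int_0^{\arcsin(-\rho_d^2)}\exp(-h_d^2z^2/(1+\sin\theta))\,d\theta$, and then change variables using $1+\sin\theta=2\sin^2(\theta/2+\pi/4)$ together with the identity $\cos(2\alpha_d)=\rho_d^2$ (which follows from $\sin^2\alpha_d=(1-\rho_d^2)/2$) to rewrite the integrand in the $\exp(-h_d^2z^2/(2\sin^2\theta))$ form appearing in \eqref{sdfsdf}. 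Expanding $\Phi(h_dz)^2=1-2\Phi(-h_dz)\cdot\ldots$ and handling the sign of $z$ (using $\Phi(a)^2 + \Phi(-a)^2 - 2\Phi(a)\Phi(-a)$-type simplifications, or equivalently reflecting the event $\{U_1<h_dz,U_2<h_dz\}$ when $z<0$) produces the $z^+=\max(z,0)$ truncation in the $-2\Phi(h_dz^+)$ term; this case-split is the delicate part of the calculation.

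Once each factor is identified with $Q(\rho_d,z)$, the product form plugged back into the integrand gives the ratio stated in the proposition, and since the denominator is a constant in $z$ while the numerator integrates from $u$ to $\infty$, reading off the integrand yields the density
\begin{align*}
f_{\text{DLM}}(z) = \frac{\prod_{d=1}^D Q(\rho_d,z)\,\phi(z)}{\int_{-\infty}^\infty \prod_{d=1}^D Q(\rho_d,z)\,\phi(z)\,dz}.
\end{align*}
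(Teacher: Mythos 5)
Your proposal is correct and follows essentially the same route as the paper: condition on $Z(s)=z$ via iterated expectation, factor the conditional probability across axis directions using Proposition \ref{prop1}, and take the ratio via Bayes' rule. The only difference is that the paper's proof stops at defining $Q(\rho_d,z)=P[Z(t)<z,\ \forall t=s\pm v_de_d\mid Z(s)=z]$ and takes the closed form \eqref{sdfsdf} from \cite{taylor2007maxima}, whereas you additionally sketch the bivariate orthant-probability reduction (conditional correlation $-\rho_d^2$, threshold $h_dz$, Sheppard/Craig identities); that extra derivation is consistent with the cited formula but is not part of the paper's argument.
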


\begin{proof}
First, by the law of iterated expectations,
\begin{align*}
P\left[\{Z(s) > u\}\cap_{t\in\mathcal{N}_{PC}(s)}\{Z(t) < Z(s)\}\right]
&=\int_{-\infty}^{\infty}P\left[\{z>u\}\cap\{Z(t)<z, \forall t\in\mathcal{N}_{PC}(s)|Z(s)=z\}\right]\phi(z)dz\\
&=\int_{-\infty}^{\infty}\mathbbm{1}_{\{z>u\}}\cdot P\left[\{Z(t)<z, \forall t\in\mathcal{N}_{PC}(s)|Z(s)=z\}\right]\phi(z)dz\\
&=\int_{u}^{\infty}P\left[\{Z(t)<z, \forall t\in\mathcal{N}_{PC}(s)|Z(s)=z\}\right]\phi(z)dz.
\end{align*}

Next, by applying Proposition \ref{prop1}, 
\begin{align*}
P\left[\{Z(t)<z, \forall t\in\mathcal{N}_{PC}(s)|Z(s)=z\}\right] &= P\bigg[\bigcap_{d=1}^{D}(Z(t)<z,\ \forall t=s\pm v_de_d|Z(s)=z) \bigg]\nonumber = \prod_{d=1}^{D}Q(\rho_d,z),
\end{align*}
where
\begin{align*}
Q(\rho_d,z)=P\{Z(t)<z,\ \forall t=s\pm v_de_d |Z(s)=z\}.
\end{align*}

Thus we have
\begin{align*}
P\left[\{Z(s) > u\}\cap_{t\in\mathcal{N}_{PC}(s)}\{Z(t) < Z(s)\}\right]= \int_u^\infty\bigg(\prod_{d=1}^DQ(\rho_d,z)\bigg)\phi(z)dz.
\end{align*}

By Bayes Rule,
\begin{align*}
P [Z(s) > u|Z(t) < Z(s), \forall t \in \mathcal{N}_{PC}(s)] &= \frac{P[\{Z(s)> u\} \cap \{Z(t) < Z(s), \forall t \in \mathcal{N}_{PC}(s)\}]}{P[Z(t) < Z(s), \forall t \in \mathcal{N}_{PC}(s)]} \\&= \frac{\int_{u}^{\infty}P\big[(Z(t)<z, \forall t \in\mathcal{N}_{PC}(s)|Z(x)=z)\big]\phi(z)dz}{\int_{-\infty}^{\infty}P\big[(Z(t)<z, \forall t\in\mathcal{N}_{PC}(s)|Z(x)=z)\big]\phi(z)dz} \\&=\frac{\int_u^\infty\left(\prod_{d=1}^DQ(\rho_d,z)\right)\phi(z)dz}{\int_{-\infty}^\infty\left(\prod_{d=1}^DQ(\rho_d,z)\right)\phi(z)dz}.
\end{align*}
\nt{The result for the peak height density follows accordingly.}
\end{proof}

\section{The neighbourhood covariance matrix in the fully connected setting\label{appendix.dd}}
\subsection{Theoretical derivation of the neighbourhood covariance for the integral convolution field}\label{appendix.B.1}
\nt{Before proving the result we need to first establish some notation. Given $r \in \mathcal{L}$, we first define an indexing of $\mathcal{N}_{FC}(r)$. We index the elements of the neighbourhood with vectors taking values in $\lbrace0, 1, 2\rbrace^D$. Under this indexing $r$ is assigned to $1_D$ (the $D$-dimensional vector of 1s) and $r + \sum_{i = 1}^D a_i e_i$ (for $a_i \in \lbrace-1,0,1\rbrace$) is assigned to the vector $1_D + \sum_{i = 1}^D a_i e_i$. To help understand this indexing in practice, a 2D example of this is as follows.
\begin{align*}
	\begin{matrix}
		(0,0) & (0,1) & (0,2)\\
		(1,0) & (1,1) & (1,2)\\
		(2,0) & (2,1) & (2,2).
	\end{matrix}
\end{align*}
We are now ready to state and prove our result.}
\begin{theorem}\label{thm:thm}
	\nt{Let $Z$ be a $D$-dimensional isotropic Gaussian random field on a discrete lattice $\mathcal{L}$ with mean zero and unit variance. Assume that $Z$ is derived by integrating continuous white noise against a Gaussian kernel. Given $r \in \mathcal{L}$, suppose that $s, t$ are two elements of $\mathcal{N}_{FC}(r)$. And write these elements as $s = (s_1, \dots, s_D)$ and $t = (t_1, \dots, t_D)$ under the indexing system defined above.} Then
	\begin{align}\label{eqn234}
		\mathrm{Cov}(Z(s),Z(t)) = \left[A\otimes A \otimes... \otimes A\right]_{m,n},
	\end{align}
	where
	\begin{align*}
		A = \begin{pmatrix}
			1 & \rho & \rho^4\\
			\rho & 1 & \rho\\
			\rho^4 & \rho & 1
		\end{pmatrix},
	\end{align*}
	$\rho$ is the correlation between adjacent voxels, and $m = \sum_{i=1}^D 3^{i-1}s_{i}+1$ and $n = \sum_{i=1}^D 3^{i-1}t_{i}+1$.
\end{theorem}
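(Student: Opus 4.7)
The plan is to reduce the statement to a direct computation that exploits the factorization of the isotropic Gaussian covariance kernel across coordinate axes, and then to match the resulting product form to the entries of the Kronecker power $A^{\otimes D}$ via the standard mixed-radix indexing identity.

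First I would recall the correlation function of a field obtained by convolving continuous white noise against an isotropic Gaussian kernel of bandwidth $\eta$. As already noted in Section \ref{sec2.1} (see \eqref{eqn2.3} and \eqref{eqn2.32}), this yields $\mathrm{Cov}(Z(a), Z(b)) = \exp(-\|a - b\|^2/(4\eta^2)) = \rho^{\|a - b\|^2}$ for any $a, b \in \mathcal{L}$, where $\rho = \exp(-1/(4\eta^2))$ is the adjacent-voxel correlation. Crucially, because the squared Euclidean norm separates across axes, this expression factorizes as $\prod_{i=1}^D \rho^{(a_i - b_i)^2}$.

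Next I would unwind the multi-index bookkeeping introduced just above the theorem. Under the stated labeling, the neighborhood point tagged $s = (s_1, \dots, s_D) \in \{0, 1, 2\}^D$ sits at the lattice location $r + \sum_{i = 1}^D (s_i - 1) e_i$. Substituting into the factorized covariance, the constant shift by $-1$ cancels, yielding $\mathrm{Cov}(Z(s), Z(t)) = \prod_{i=1}^D \rho^{(s_i - t_i)^2}$. The matrix $A$ has been chosen precisely so that, using $1$-based indexing, $A_{s_i + 1,\, t_i + 1} = \rho^{(s_i - t_i)^2}$ for every $s_i, t_i \in \{0, 1, 2\}$, since in that regime $(s_i - t_i)^2 \in \{0, 1, 4\}$ and these are exactly the three exponents appearing in $A$.

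The final step is to identify the product above with the claimed Kronecker product entry via the standard identity $[A_1 \otimes \cdots \otimes A_D]_{m, n} = \prod_{i=1}^D (A_i)_{m_i, n_i}$, where the scalar indices $m, n$ are related to the multi-indices through a mixed-radix expansion. Matching conventions, the theorem's formula $m = 1 + \sum_{i=1}^D 3^{i-1} s_i$ corresponds to taking $s_1$ as the least significant digit; since all factors in the Kronecker product are the same matrix $A$, the ordering of the factors is immaterial to the value of the entry, and the identity collapses the right-hand side of \eqref{eqn234} to $\prod_{i=1}^D A_{s_i + 1,\, t_i + 1}$. Combining with the previous paragraph completes the proof.

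The main obstacle is not a mathematical one but a notational one: one must keep straight the distinction between physical offsets in $\{-1, 0, 1\}$ and the $\{0, 1, 2\}$-valued labels, and verify that the mixed-radix convention used to flatten $s$ into $m$ agrees with the Kronecker-product convention being invoked. Once the bookkeeping is pinned down, the content of the theorem is essentially the observation that the isotropic Gaussian correlation is a product of univariate Gaussian correlations along each axis, which is precisely what the tensor structure $A^{\otimes D}$ encodes.
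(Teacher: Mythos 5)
Your proof is correct and follows essentially the same route as the paper's: both reduce the claim to the identity $\mathrm{Cov}(Z(s),Z(t)) = \rho^{\|s-t\|^2} = \prod_{i=1}^D \rho^{(s_i-t_i)^2}$ and then match this factorization to the entrywise product formula for the Kronecker power $A^{\otimes D}$. Your treatment is somewhat more careful about the offset-versus-label bookkeeping and the mixed-radix indexing convention, but the mathematical content is the same.
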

\begin{proof} Because of the form of $Z$, $\Cov(Z(s), Z(t)) = \rho^{||s-t||^2}$. We have that $\left[A\otimes A \otimes... \otimes A\right]_{m,n} = \prod_{i=1}^D A_{s_it_i}$. If $||s_i-t_i|| = 0$, $A_{s_it_i} = 1 = \rho^0$; if $||s_i-t_i|| = 1$, $A_{s_it_i} = \rho$ and if $||s_i-t_i|| = 2$, $A_{s_it_i} = \rho^4$. Thus, 
\begin{align*}
    A_{s_it_i} = \rho^{||s_i-t_i||^2} ,
\end{align*}
and hence
\begin{align*}
\prod_{i=1}^D A_{s_it_i} = \rho^{\sum_{i=1}^D||s_i-t_i||^2} = \rho^{||s-t||^2}.
\end{align*}
As such both sides of (\ref{eqn234}) match and the result follows.
\end{proof}
This proof relies on a special indexing of $\mathbf{Z}$. However, this is in practice not a restriction as a re-indexing of $\mathbf{Z}$ can be treated as a linear transformation, i.e. $\mathbf{Z} = T \mathbf{Z}_o$, where $\mathbf{Z}_o$ is the vector with the special indexing. In that case $\Sigma$ then can be calculated as
\begin{align*}
	\Sigma = T \mathrm{Cov}(\mathbf{Z}_o) T^\top.
\end{align*}

\subsection{Neighbourhood covariance matrix for a 3D stationary Gaussian random field}\label{SS:cov}
Figure \ref{fig17} shows the theoretical covariance function for a 3D stationary field. The construction and indexing of this $27 \times 27$ matrix follow the logic from Section \ref{appendix.B.1}. The numbers $r0,r1,...,r61$ denote all $62$ distinct values of covariance between two voxels. We use the same color to denote all $3\times 3$ matrices with the same values. One useful conclusion from this figure is this covariance matrix is a block Toeplitz matrix with $9$ blocks, and each block is still a block Toeplitz matrix with $9$ sub-blocks. In addition, each sub-block is a Toeplitz matrix. 

\begin{figure}[!htp]
\centering
\includegraphics[width = 8cm]{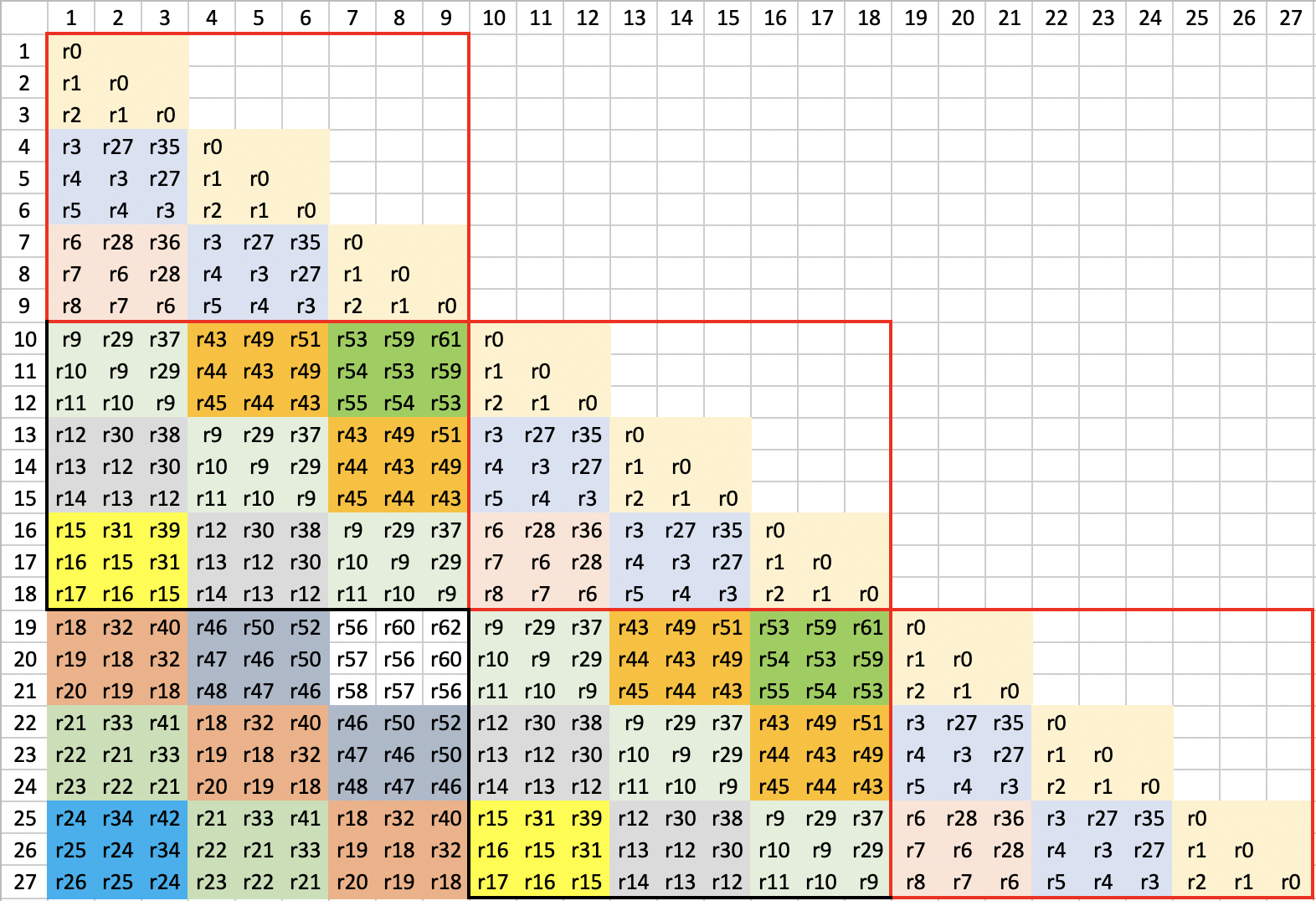}
\caption{The theoretical \nt{neighbourhood covariance matrix} for a 3D stationary field.\label{fig17}}
\end{figure}

\section{Formally defining the $pp$ plot\label{appendix.d}}

The $pp$ plots are used throughout Section \ref{sec3} and \ref{sec4}. \nt{In this section, we will define these plots formally. The first step is to define the reference $p$-value, which was obtain via simulation. To do so, we generate $N$ i.i.d. random fields as described in each setting in Section \ref{sec3}.} Let $n$ be the number of obtained local maxima across all fields, where the local maxima are selected based on the criteria that their height values are larger than all their neighbours in the specified neighbourhood (we will consider both the fully connected and partially connected neighbourhoods).  Let $g_1, \dots, g_n$ be the heights of the recorded local maxima. For each peak this allows us to compute a reference $p$-value as
\begin{align}
    p_i = \frac{1}{n} \sum_{j = 1}^n \mathbbm{1} [ g_j > g_i],\quad 1 \leq i \leq n,  \label{eqn3.2}
\end{align}
where $\mathbbm{1}[\cdot]$ denotes the indicator function, $p_i$ is the $p$-value when observed value is $g_i$. As $n\rightarrow \infty$, $p_i$ calculated by (\ref{eqn3.2}) converges to the true tail probability. Moreover for each peak, we calculate a $p$-value for each of the three approaches. Next we plot the reference $p$-values against the $p$-values obtained using each method. Since the reference distribution converges to the true peak height distribution as the number of instances converges to infinity, the closer these plots are to the identity function, the closer the approximation to the true distribution. We use these $pp$ plots to compare the performance of the three approaches in all of our simulation studies. 

The idea of this $pp$ plot is similar to the one used in \cite{schwartzman2019peak}. Although the two plotting mechanisms look different, the logic behind is the same, as we justify below. 

Our two plotting mechanisms are
\begin{itemize}
    \item Plot $p(i)$ vs. $i/n$ and $q(i)$ vs. $i/n$ (\cite{schwartzman2019peak}),
    \item Plot $p(i)$ vs. $p(i)$ and $q(i)$ vs. $p(i)$ ($pp$ plot).
\end{itemize}

Let $F(z)$ be true cdf of $z$, $G(z)$ be one of the other cdfs of $z$ used for comparison purpose. Next, we define
\begin{align*}
    p &= 1-F(z)\\
    q &= 1-G(z)
\end{align*}
Suppose that we now generate $n$ $p$-values for both the true distribution and the distribution for comparison purpose, i.e., we generate $p_1,...,p_n$ and $q_1,...,q_n$ as in (\ref{eqn3.2}). Denote $p(i)$ and $q(i)$ as the order statistics of $p_1,...,p_n$ and $q_1,...,q_n$. Then under monotonicity, 
\begin{align*}
    \text{ecdf}(p(i)) &= \frac{1}{n}\sum_{i = 1}^n I[p_i \leq p(i)] = \frac{i}{n}\\
    \text{ecdf}(q(i)) &= \frac{1}{n}\sum_{i = 1}^n I[q_i \leq q(i)] = \frac{i}{n}
\end{align*}

If $z$ is distributed according to $F$, $p$-values are uniformly distributed on $[0,1]$. Thus, by the LLN,
\begin{align*}
    \text{ecdf}(p) = \frac{1}{n}\sum_{i=1}^nI(p_i\leq p)\rightarrow_p P(p_i\leq p) = p
\end{align*}

Thus, when $p$ is from the true distribution and $n$ is large enough, we expect the two plotting mechanisms provide similar plots. The comparison of two method is shown below in Figure \ref{fig18}.

\begin{figure}[!htp]
\centering
\includegraphics[trim=80 5 80 5, clip,width=0.3\textwidth]{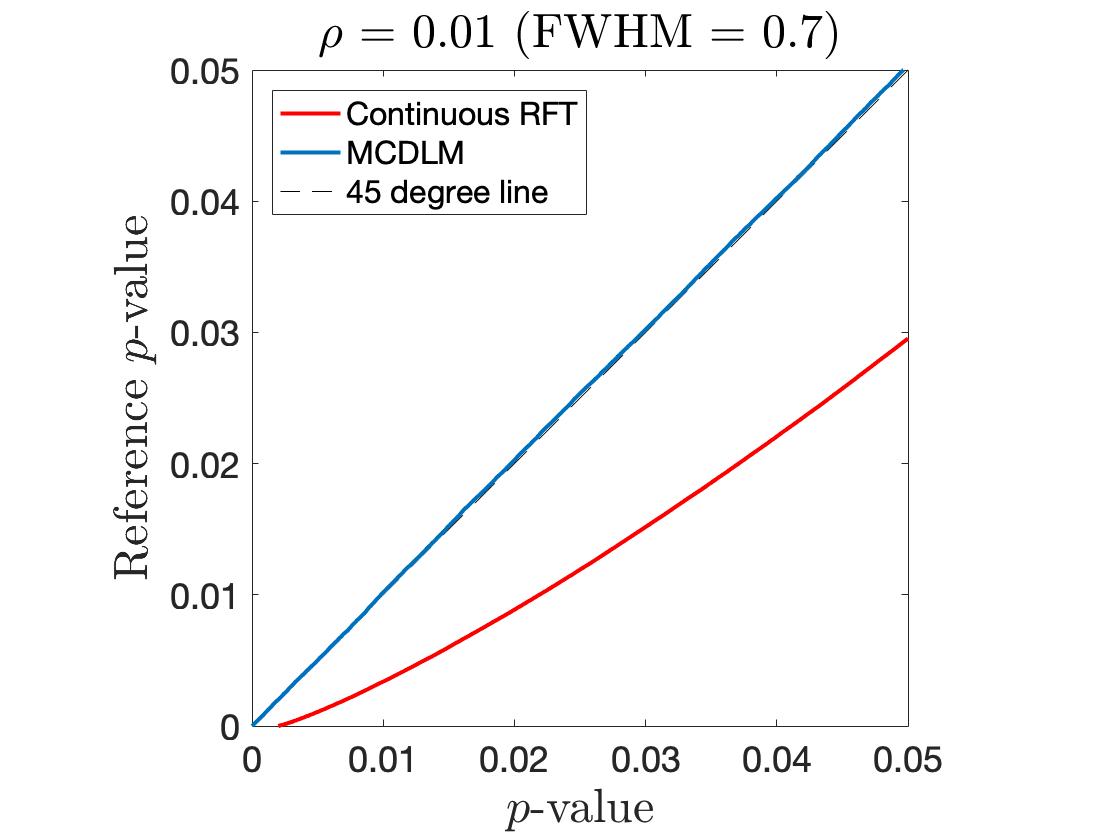}
\includegraphics[trim=80 5 80 5, clip,width=0.3\textwidth]{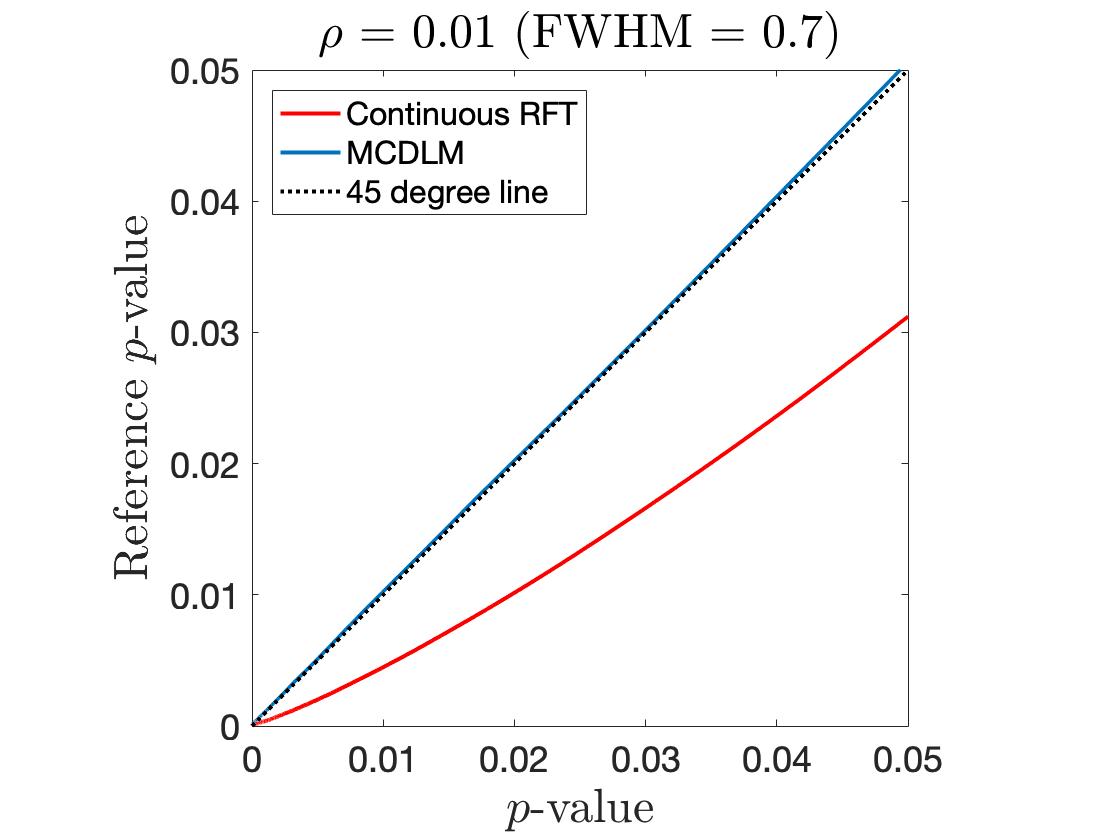}
\caption{Comparison of the two considered plotting mechanisms obtained using the height distribution generated from 2D isotropic Gaussian field with spatial correlation $\rho = 0.01$. Left is from the plotting mechanism in \cite{schwartzman2019peak} and right is from the plotting mechanism used in this paper. \nt{As can be seen from the plots they are essentially identical.}\label{fig18}}
\end{figure}

\section{Additional results}
\label{appendix.e}
\subsection{Partial connectivity case}
\label{appendix.d1}
\nt{In Section \ref{sec2} we discussed some of the disadvantages of ADLM. In particular Proposition \ref{prop1} does not apply to diagonal neighbours and so ADLM cannot be used for peaks of the fully connected neighbourhood which is why it performed badly in Figure \ref{fig7}. However (when its assumptions hold) it can be used to provide inference  for the partially connected neighbourhood defined in (\eqref{eqn2.22}). We illustrate this in Figure \ref{fig8} within the same simulation setting of Section \ref{sec3.1}.}

\begin{figure}[!htp]
\centering
\begin{sideways}
\phantom{------------------}2D
\end{sideways}
\includegraphics[trim=80 5 80 5, clip,width=0.3\textwidth]{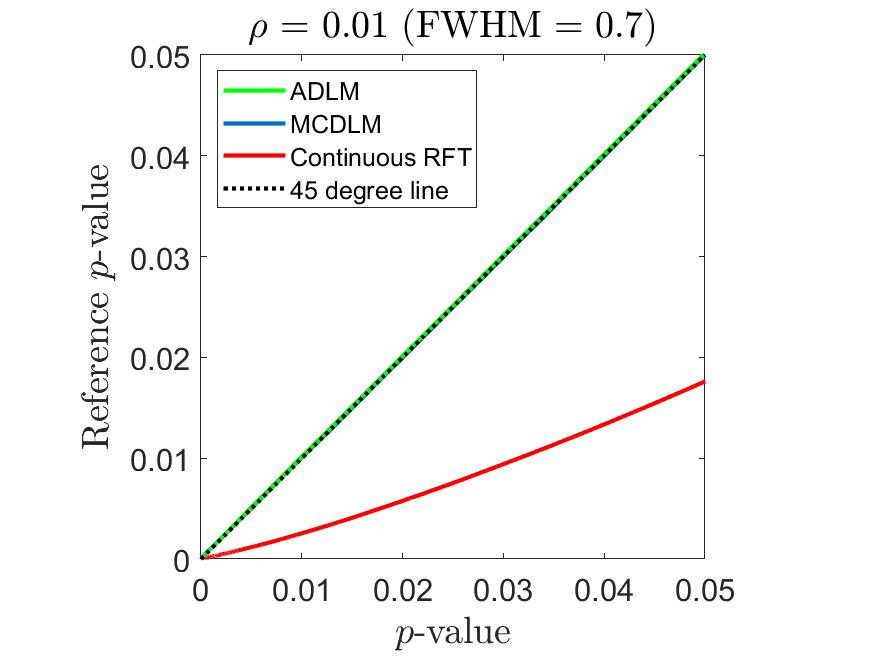}
\includegraphics[trim=80 5 80 5, clip,width=0.3\textwidth]{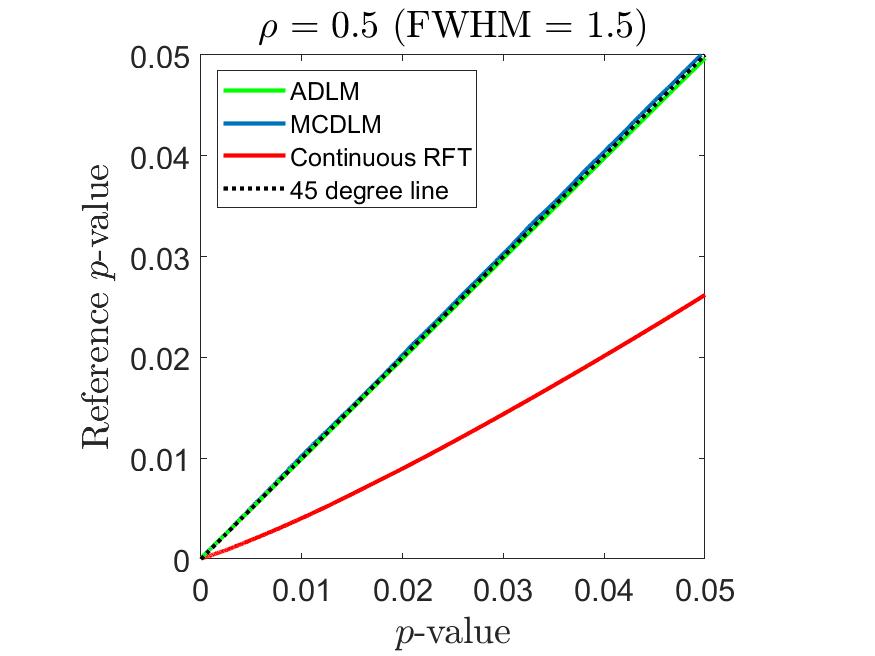}
\includegraphics[trim=80 5 80 5, clip,width=0.3\textwidth]{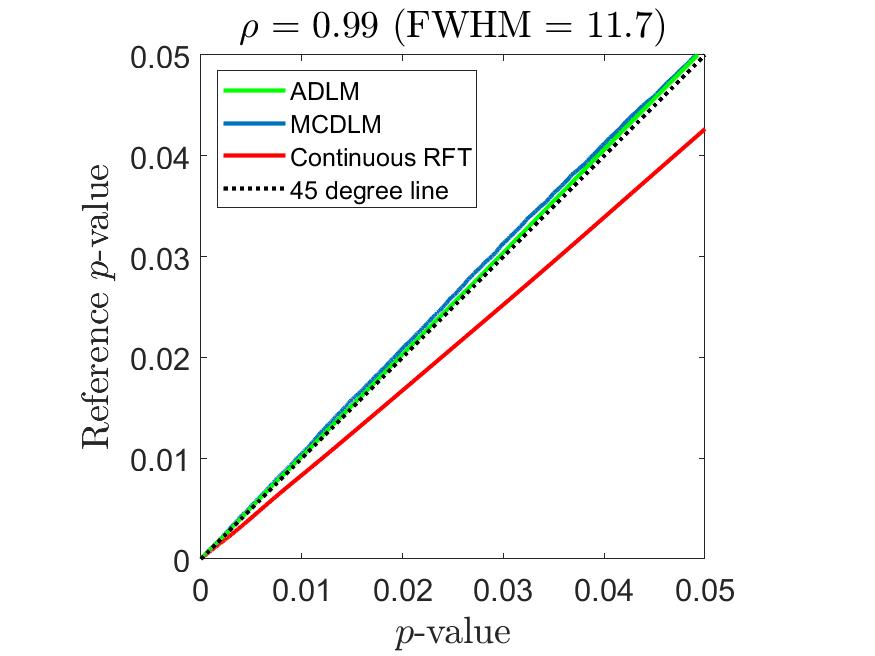}

\begin{sideways}
\phantom{------------------}3D
\end{sideways}
\includegraphics[trim=80 5 80 5, clip,width=0.3\textwidth]{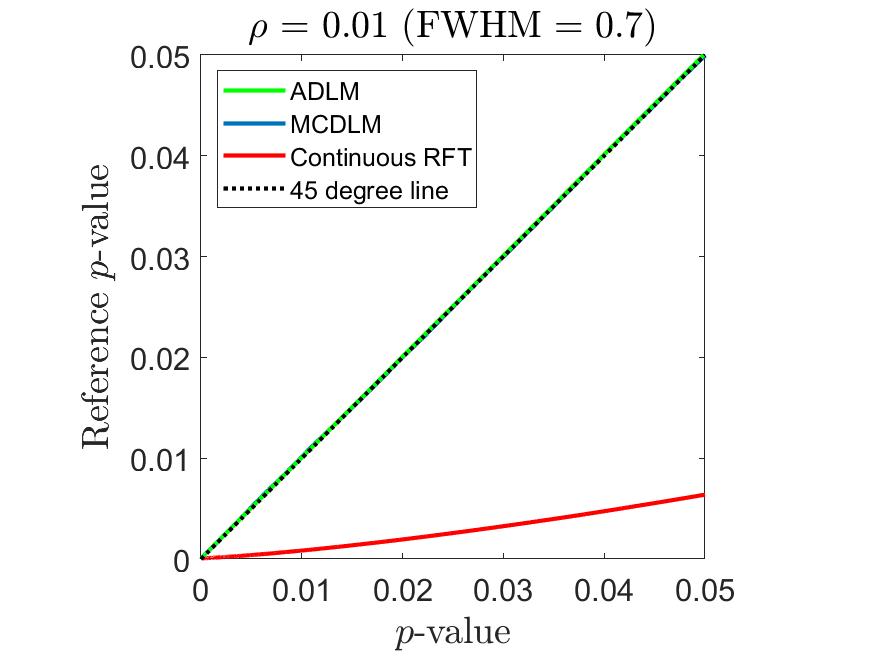}
\includegraphics[trim=80 5 80 5, clip,width=0.3\textwidth]{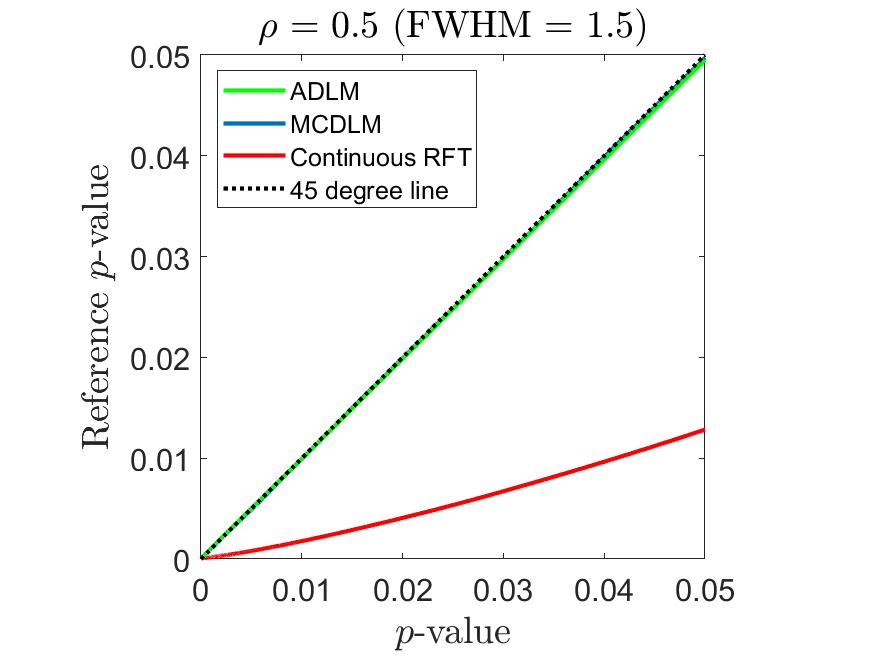}
\includegraphics[trim=80 5 80 5, clip,width=0.3\textwidth]{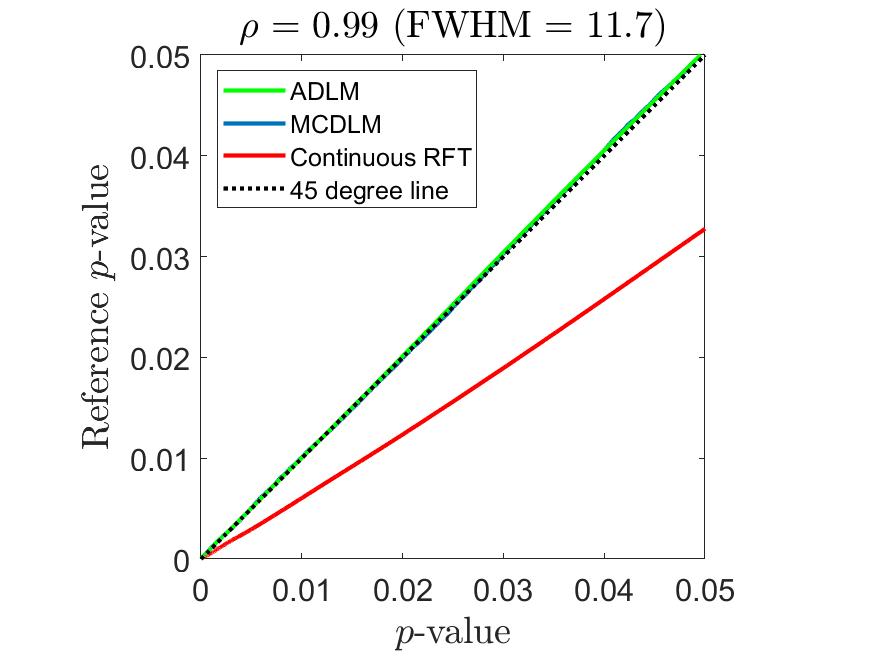}
\caption{Comparison of the peak height distribution for peaks in a partially connected neighbourhood calculated via the different methods for 2D and 3D isotropic Gaussian fields. \label{fig8}}
\end{figure}

Figure \ref{fig8} shows the comparison between MCDLM, ADLM and the continuous RFT method in 2D and 3D. In all three scenarios, the $p$-value distribution of ADLM and MCDLM match and are close to a uniform distribution. As in the main text the continuous RFT approach is conservative.

\subsection{Applying the neighborhood covariance function in \eqref{eqn2.32}}
\label{appendix.d2}
In this section we perform the same simulations as in Section \ref{sec3.1} but with the neighbourhood covariance of \cite{worsley2005improved} (given in \eqref{eqn2.32}) instead of the actual neighbourhood covariance (which we derived in \eqref{eqn2.8}. 

The results are shown in Figure \ref{fig9}. They are similar to those of Figure \ref{fig7}. However, as exemplified in the $\rho = 0.5$ case the MCDLM approach is incorrect when this covariance function is used. This is because it is not in fact the correct neighbourhood covariance. 
\begin{figure}[!htp]
\centering
\begin{sideways}
\phantom{------------------}2D
\end{sideways}
\includegraphics[trim=80 5 80 5, clip,width=0.3\textwidth]{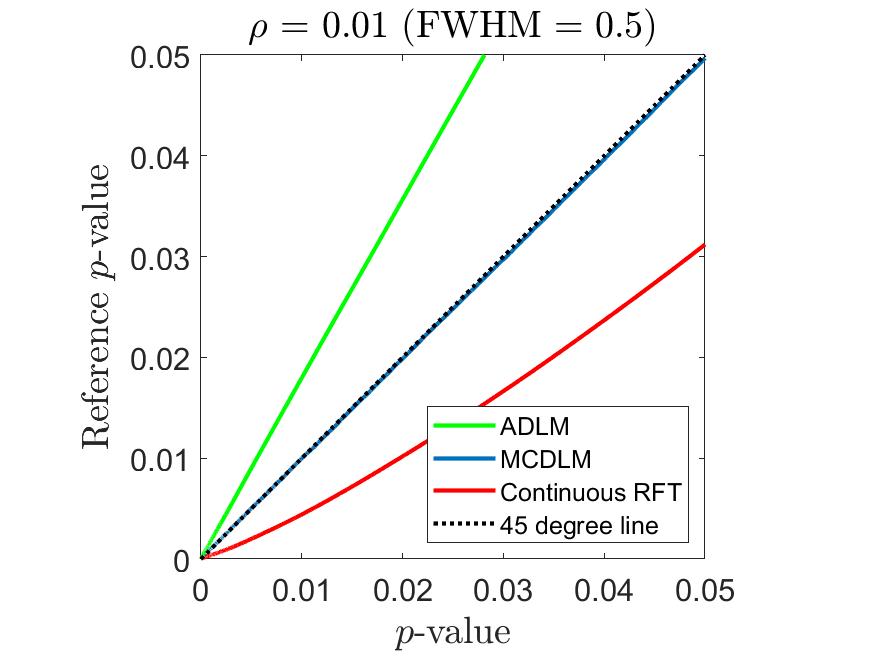}
\includegraphics[trim=80 5 80 5, clip,width=0.3\textwidth]{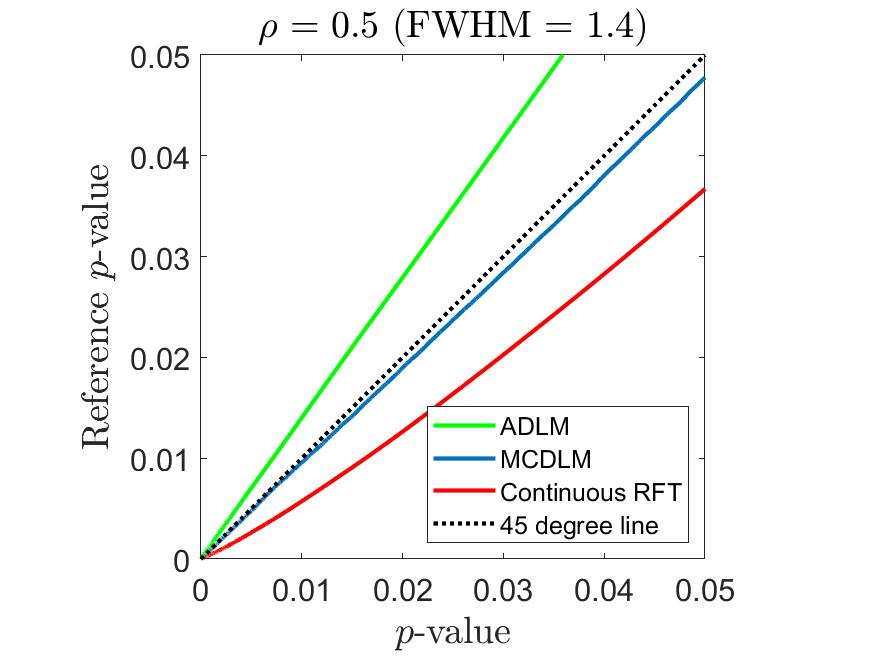}
\includegraphics[trim=80 5 80 5, clip,width=0.3\textwidth]{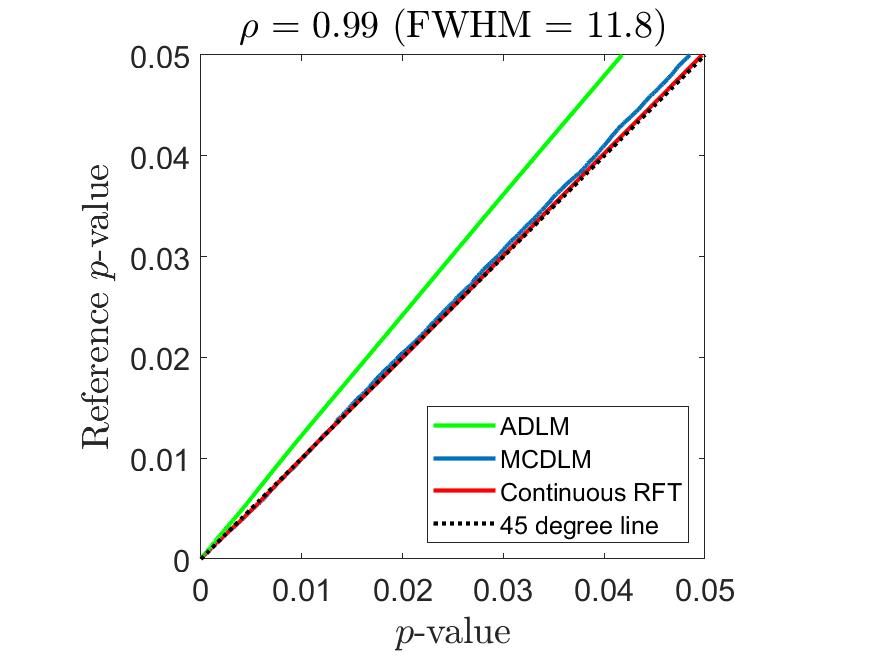}

\begin{sideways}
\phantom{------------------}3D
\end{sideways}
\includegraphics[trim=80 5 80 5, clip,width=0.3\textwidth]{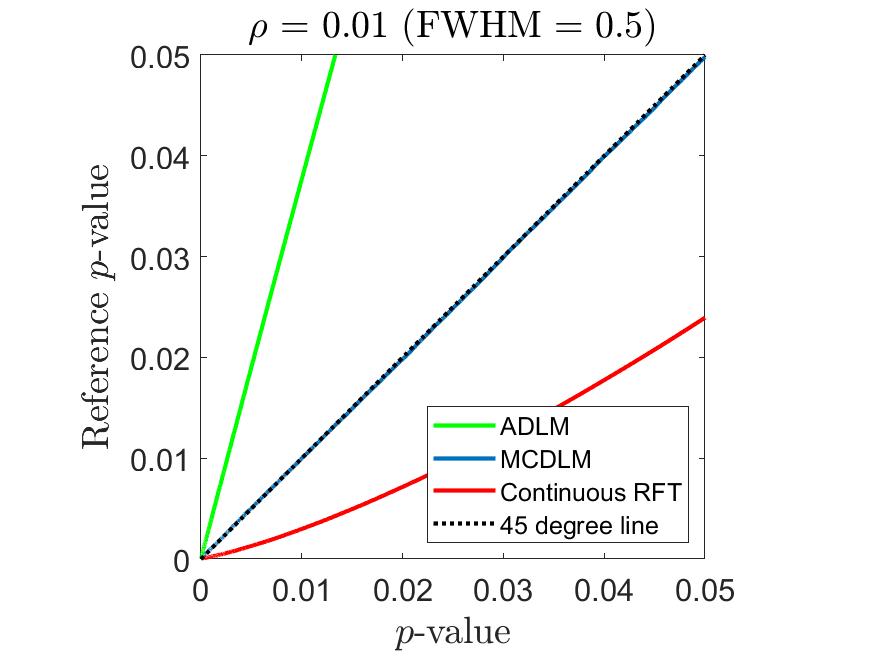}
\includegraphics[trim=80 5 80 5, clip,width=0.3\textwidth]{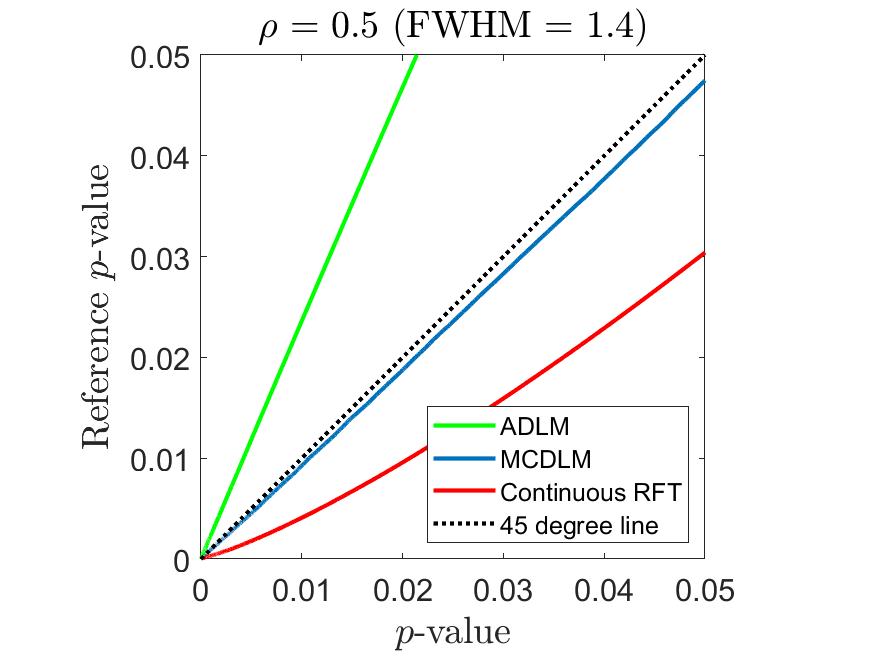}
\includegraphics[trim=80 5 80 5, clip,width=0.3\textwidth]{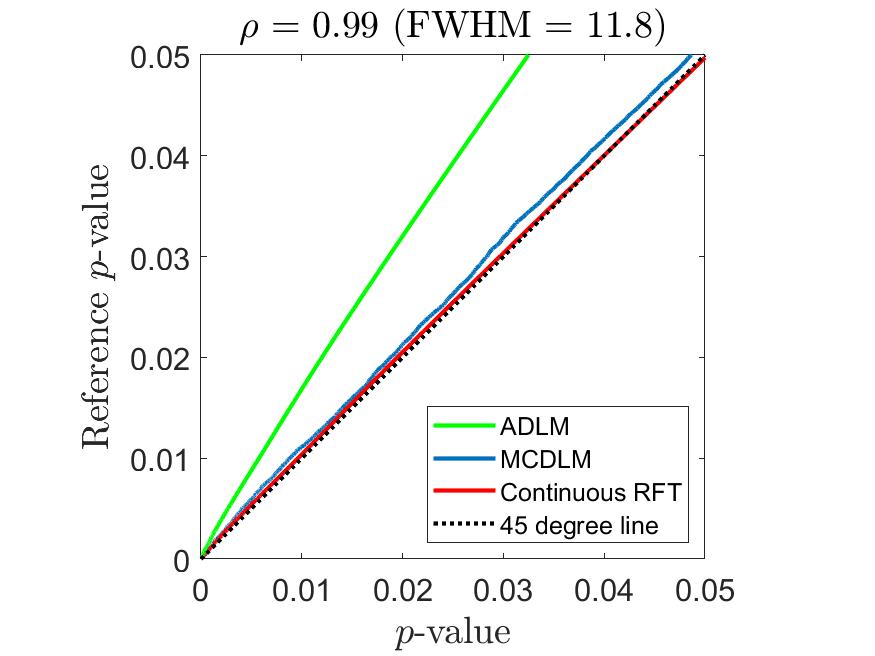}
\caption{Comparison of the peak height distribution calculated using different methods for 2D and 3D isotropic Gaussian field using  neighborhood covariance function in \eqref{eqn2.32}.\label{fig9}}
\end{figure}

\subsection{RMSE Results for comparing $p$-value curves} \label{sec.rmse}
In this section, we calculate RMSE between the $p$-value curves from each of the MCDLM, ADLM and continuous RFT method and identity line. From Table \ref{tab01}, we obtain the same conclusion as Table \ref{tab0} (when using mean ratio) that MCDLM outperforms ADLM and continuous RFT when FWHM $< 6.7$, but continuous RFT is better when FWHM $ = 6.7, 8.3 \text{ and } 11.7$. 
\begin{table}[!htp]
\caption{RMSE results from 2D isotropic Gaussian random fields for comparing the $p$-values from MCDLM, ADLM and continuous RFT approaches. The smallest value in each row is highlighted in red color. \label{tab01}}
\centering
\begin{tabular}{llll}
\hline
 & MCDLM & ADLM & Continuous RFT \\ \hline
$\rho = 0.01$ (FWHM $ = 0.7$) & \color{red}\num{1.71e-04} & \num{5.77e-03} & \num{8.18e-03} \\ 
$\rho = 0.1\ $ \ (FWHM $ = 1$) & \color{red}\num{1.91e-04} & \num{5.48e-03} & \num{7.75e-03} \\ 
$\rho = 0.3\ $ \ (FWHM $ = 1.2$) & \color{red}\num{6.16e-05} & \num{4.83e-03} & \num{6.53e-03} \\ 
$\rho = 0.5\ $ \ (FWHM $ = 1.5$) & \color{red}\num{6.34e-05} & \num{4.16e-03} & \num{4.99e-03} \\ 
$\rho = 0.7\ $ \ (FWHM $ = 2$) & \color{red}\num{1.53e-04} & \num{3.58e-03} & \num{3.01e-03} \\ 
$\rho = 0.9\ $ \ (FWHM $ = 3.6$) & \color{red}\num{1.40e-04} & \num{2.91e-03} & \num{8.97e-04} \\ 
$\rho = 0.95$ (FWHM $ = 5.2$) & \color{red}\num{1.37e-04} & \num{2.56e-03} & \num{5.57e-04} \\ 
$\rho = 0.96$ (FWHM $ = 5.8$) & \color{red}\num{8.66e-05} & \num{2.57e-03} & \num{3.38e-04} \\ 
$\rho = 0.97$ (FWHM $ = 6.7$) & \num{3.56e-04} & \num{2.73e-03} & \color{red}\num{7.84e-05} \\ 
$\rho = 0.98$ (FWHM $ = 8.3$) & \num{2.22e-04} & \num{2.61e-03} & \color{red}\num{1.38e-04} \\ 
$\rho = 0.99$ (FWHM $ = 11.7$) & \num{2.87e-04} & \num{2.54e-03} & \color{red}\num{1.29e-04} \\ \hline
\end{tabular}
\end{table}


\subsection{Calculating the look-up table} 
\label{appendix.d3}
\nt{Since our method is Monte Carlo based, it is desirable to reduce the computation time where possible. To do so, under the isotropic setting described in Section \ref{SScorrfn} we pre-record the results of $10^5$ possible local maxima values at different values of $\rho$. We vary $\rho$ form $0.01$ to $0.99$ with increments of $0.01$ and calculate a look-up table as follows.}
\begin{enumerate}
    \item Loop through the array of different values of $\rho$ and obtain $10^5$ local maxima for each $\rho$.
    \item From the union obtained in step 1, sample $10^5$ local maxima, which consist a set of local maxima, $u$, that we want to evaluate the CDF, $F(\cdot)$, at.
    \item Loop through the array of different values of $\rho$ again. For each value of $\rho$, interpolate $F(\cdot)$ at each of $u$ we get in step 2. Record all the $F(u;\rho)$ in a matrix of look-up table with row the $\rho$ and column the $u$.
\end{enumerate}
To evaluate the $p$-value at a given threshold $u$ and correlation $\rho$, we interpolate $F(\cdot)$ at $u$ and $\rho$ through pre-recorded look-up table, and the $p$-value is calculated by $1-F(u;\rho)$.

After generating a look-up table, we apply the cubic spline smoothing to smooth the noisy look-up table. The procedure of our smoothing is as follow:
\begin{enumerate}
\item Use the Cubic spline smoothing to smooth the matrix across $\rho$;
\item Use the Cubic spline smoothing to smooth the matrix we smoothed in step 1 across $u$.
\end{enumerate}

In doing so, we aim to reduce the violation of monotonicity across $u$ but also retain the smoothness. The smoothing parameters in Cubic spline smoothing is selected by 5-fold cross validation in each scenario separately.

\begin{figure}[!htp]
\centering
\includegraphics[width=6cm]{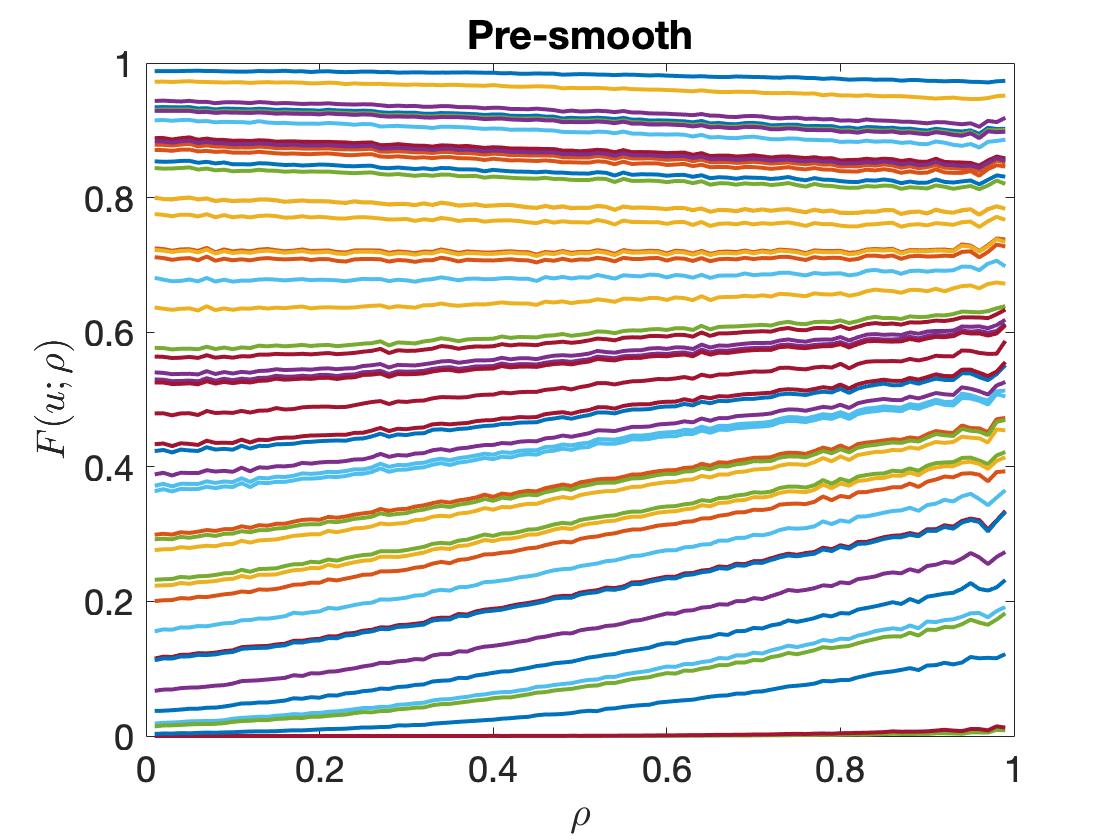}
\includegraphics[width=6cm]{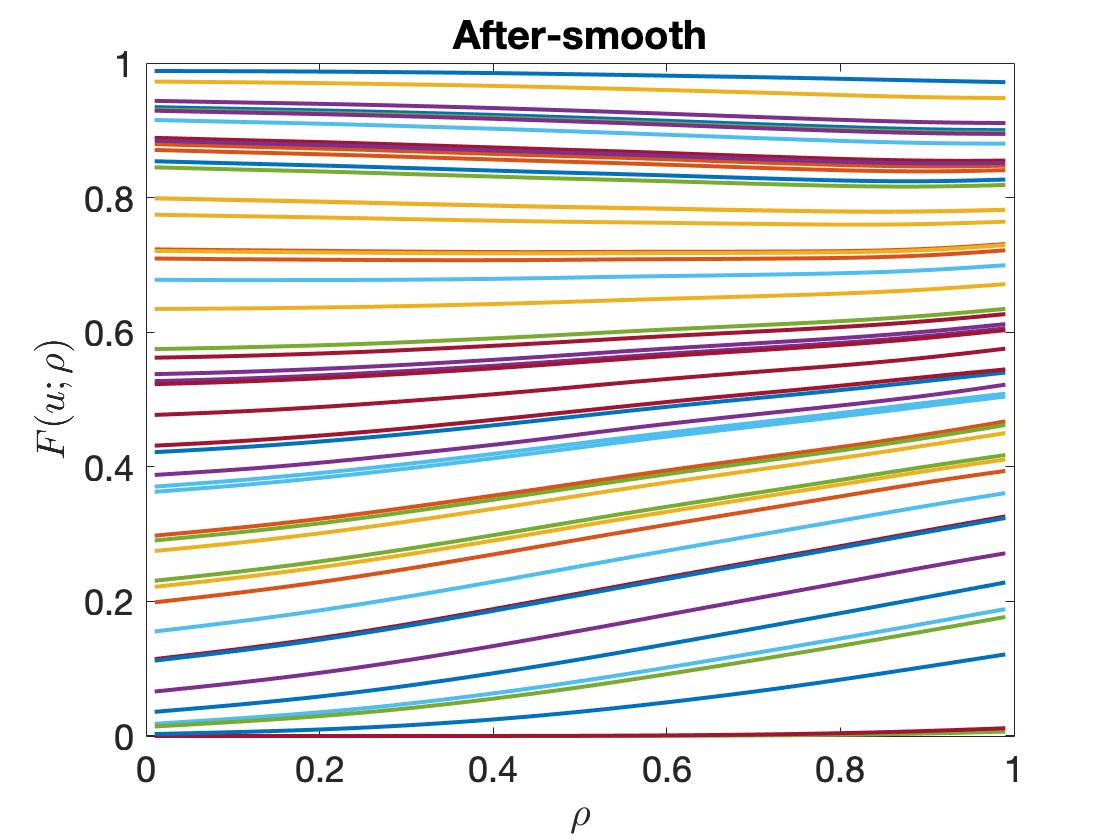}
\caption{This figure shows $F(u;\rho)$ of selected 50 samples across $\rho$ from pre-smooth table (left) and after-smooth table (right). The same color is used for the same sample before and after-smoothing.\label{fig4}}
\end{figure}

\begin{figure}[!htp]
\centering
\includegraphics[width=5cm]{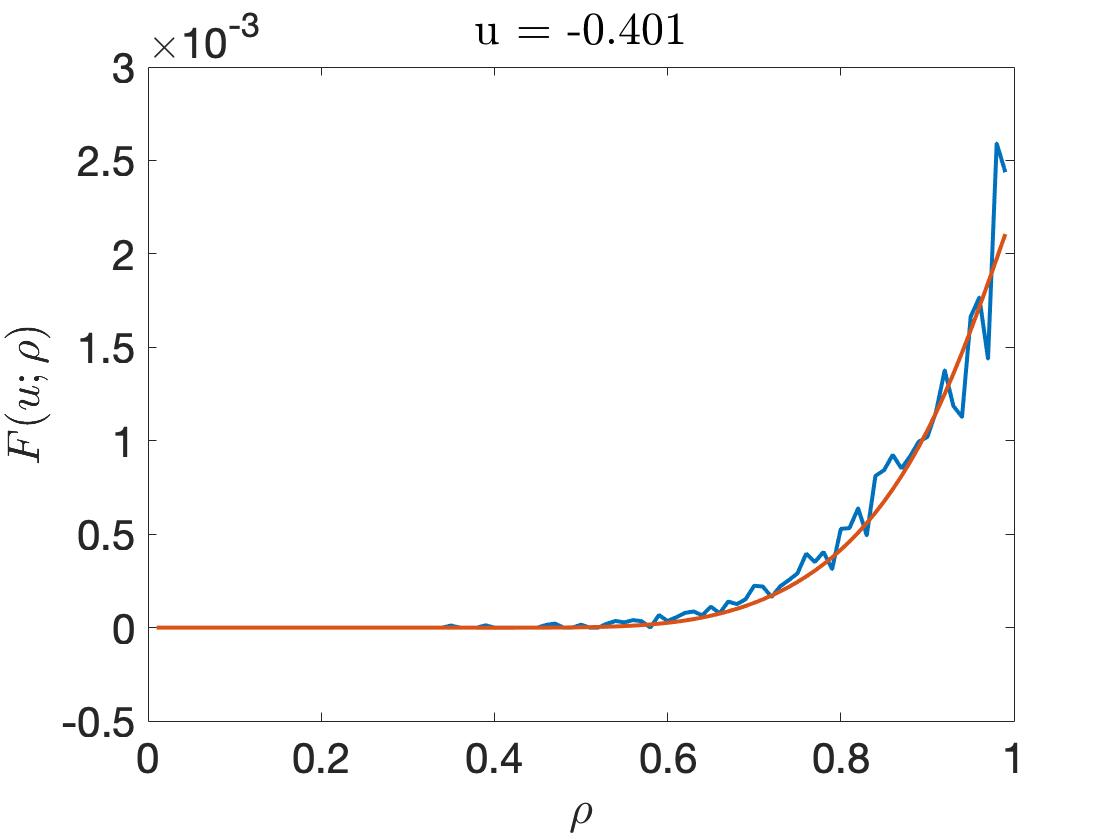}
\includegraphics[width=5cm]{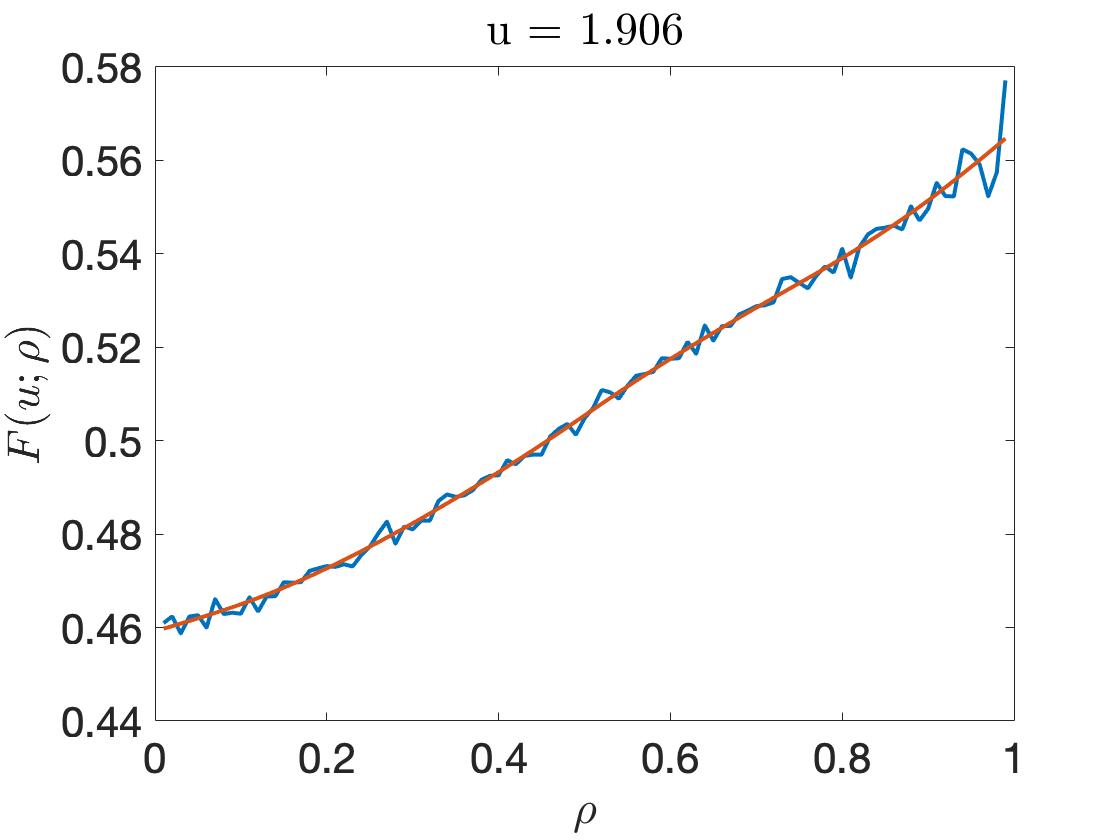}
\includegraphics[width=5cm]{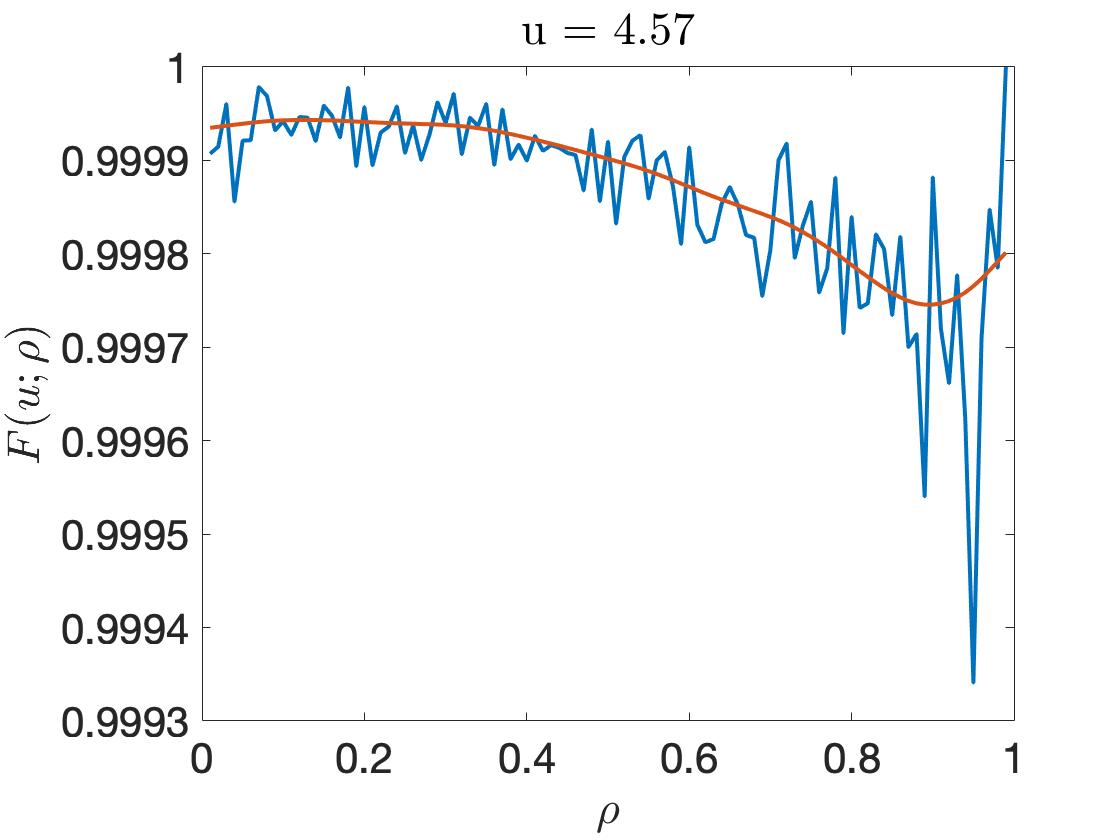}
\caption{From left to right are the 10, 50000 and 99990 columns of the pre-smooth table (red) and after-smooth table (blue). \label{fig5}}
\end{figure}

In Figure \ref{fig4}, we select 50 columns of both pre-smooth and after-smooth look-up table generated from 3D Gaussian random field and plot the CDF across different $\rho$ values. Different samples selected within one look-up table are denoted by different colors, while same color means same sample between two tables. The pre-smooth plot shows that the look-up table we generated is very noisy across $\rho$, especially at some large values of $\rho$. This will cause big problems in computing $p$-values for different values of $\rho$. For example, if two are interested in calculating the $p$-values for a $\rho$ close to 1, a slightly difference in picking the $\rho$ will bring a significant change in $p$-values, and such ambiguity could finally result in an inconsistent interpretation of the scientific findings. From the after-smooth plot, we observe that the look-up table is smooth enough to provide consistent results. 

In addition, we explore to what extent the smoothing works in Figure \ref{fig5}. In this figure, we select three columns of both pre-smooth table and after-smooth table and then compare. The $u$ that we select are -0.401, 1.906 and 4.57, which are in both ends and middle of the support. From these three plots, the noise before the smoothing is obvious, but the cubic spline smoothing fits the curve perfectly, in the sense that it preserves the general shape of the curve yet removes the noise.

\subsection{Applying Gaussianization transformation of the $t$-fields}\label{SS:gauss}
As discussed in Section \ref{sec3.2}, to improve the computation efficiency for the height distribution of peaks of $t$-fields, we consider using Gaussianization transformation of the $t$-fields. In this section we perform the same simulations as in Section \ref{sec3.2} but the simulated $t$-fields were Gaussianized \nt{as in \eqref{eqn.t2Gauss} and peak height distributions were obtained using MCDLM and continuous RFT after the transformation.}

The results are shown in Figure \ref{fig.t2gauss2D} (2D) and Figure \ref{fig.t2gauss3D} (3D). MCDLM works well when degrees of freedom is large and $\rho$ is small. At high smoothness, MCDLM and continuous RFT work similarly. They are only correct when degrees of freedom is large since in this case the $t$-fields can approximate Gaussian fields.

\begin{figure}[!htp]
\centering
\begin{sideways}
\phantom{------------------}$\nu = 20$
\end{sideways}
\includegraphics[trim=70 5 100 5, clip,width=0.3\textwidth]{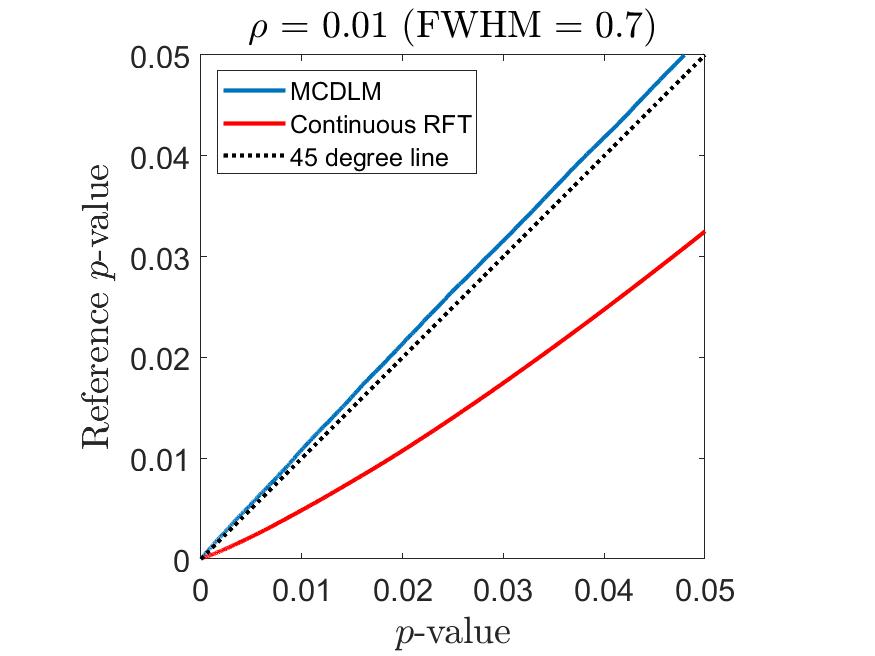}
\includegraphics[trim=70 5 100 5, clip,width=0.3\textwidth]{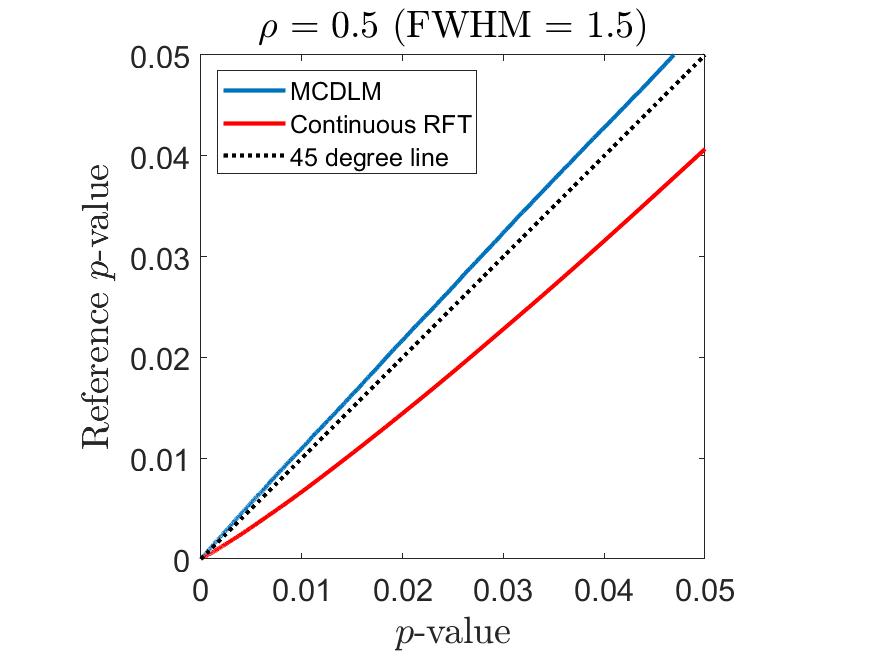}
\includegraphics[trim=70 5 100 5, clip,width=0.3\textwidth]{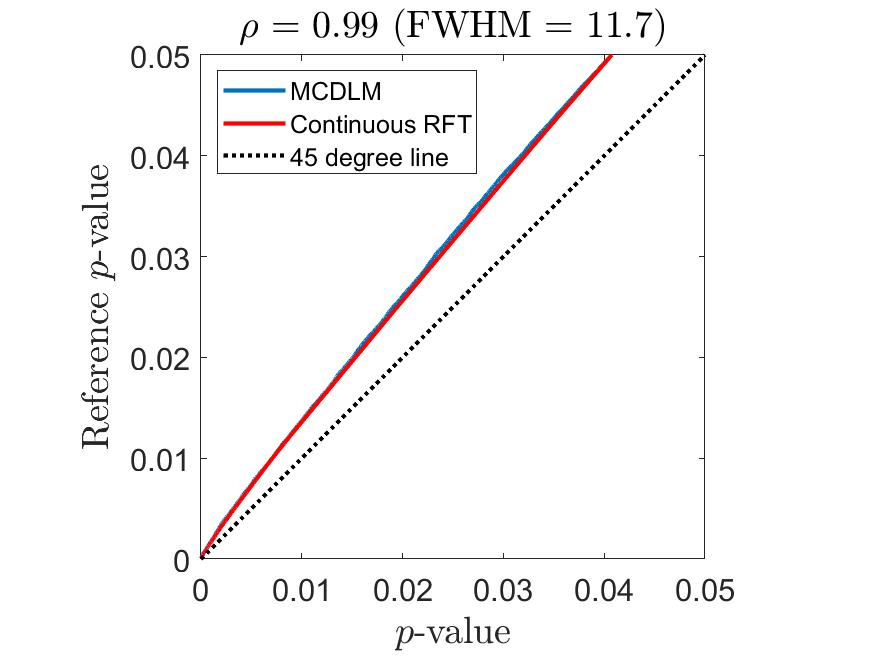}

\begin{sideways}
\phantom{------------------}$\nu = 50$
\end{sideways}
\includegraphics[trim=70 5 100 5, clip,width=0.3\textwidth]{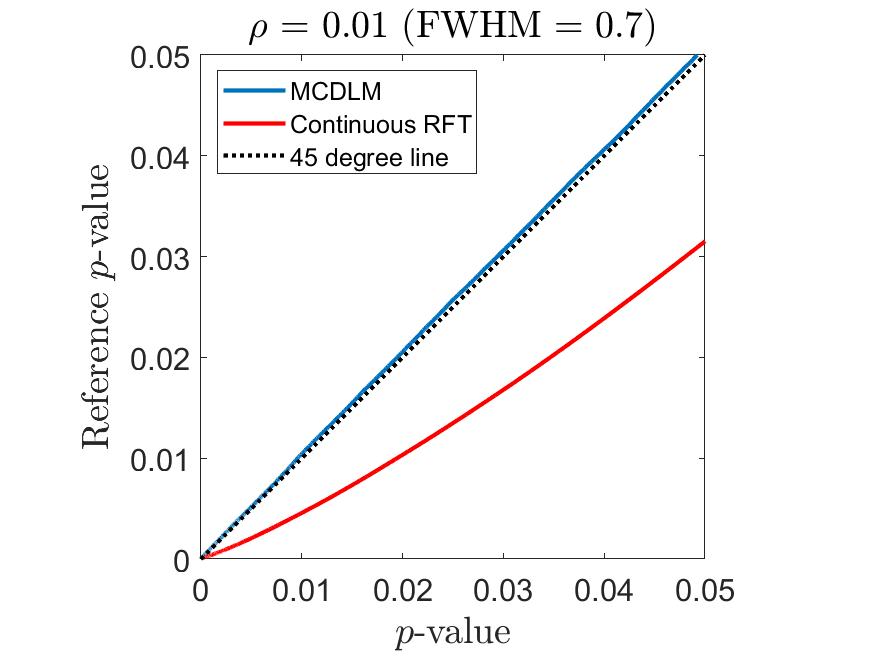}
\includegraphics[trim=70 5 100 5, clip,width=0.3\textwidth]{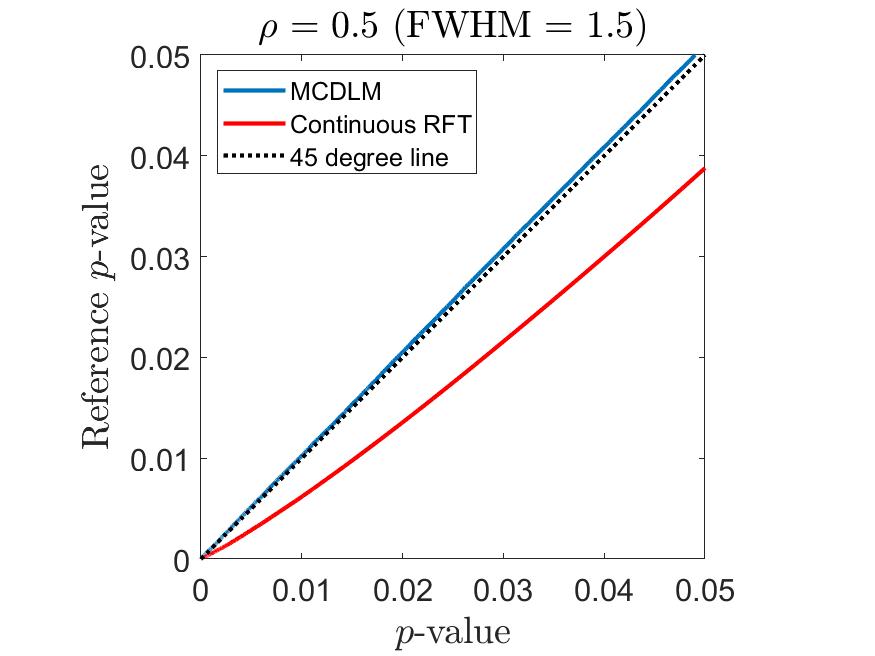}
\includegraphics[trim=70 5 100 5, clip,width=0.3\textwidth]{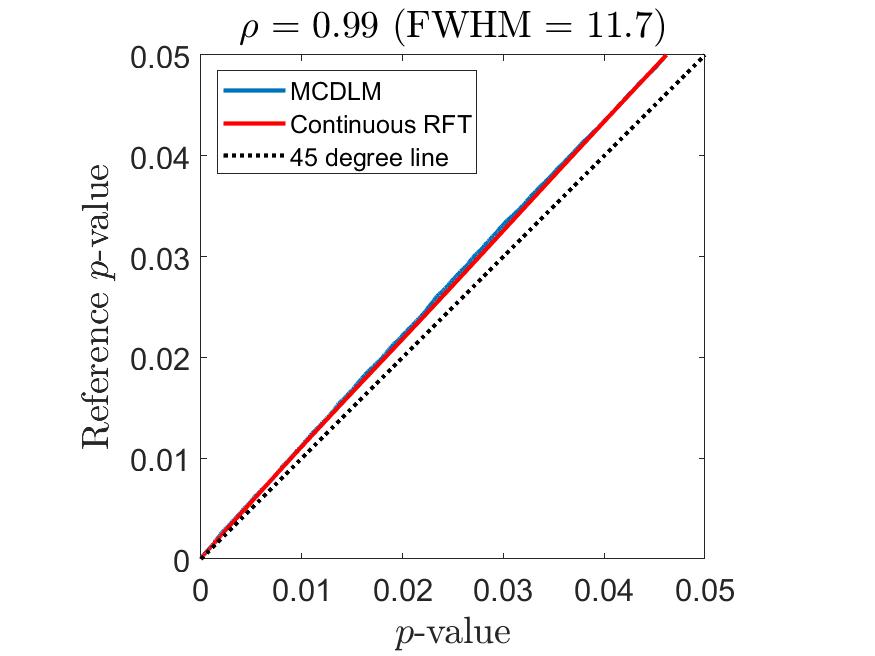}

\begin{sideways}
\phantom{------------------}$\nu = 200$
\end{sideways}
\includegraphics[trim=70 5 100 5, clip,width=0.3\textwidth]{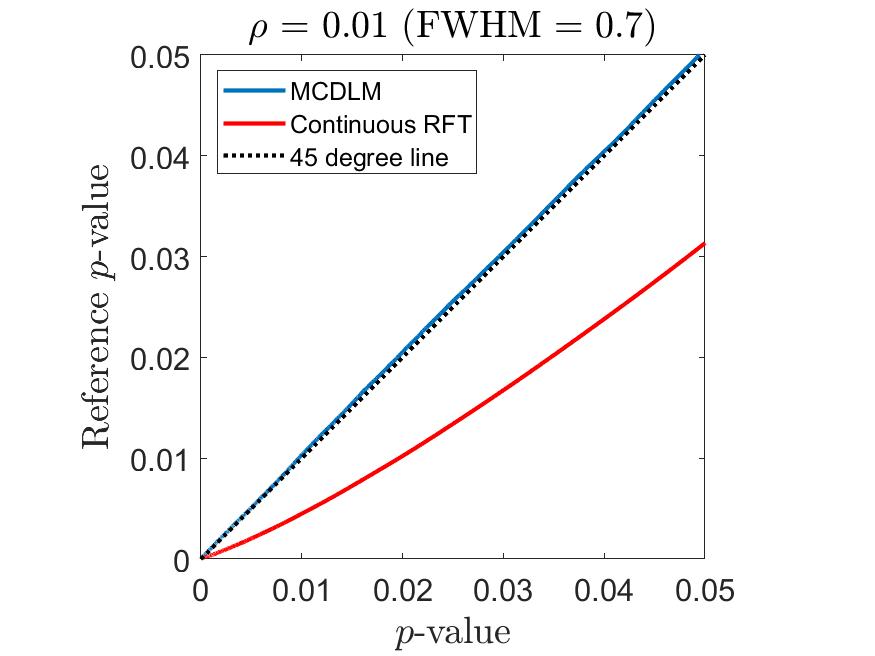}
\includegraphics[trim=70 5 100 5, clip,width=0.3\textwidth]{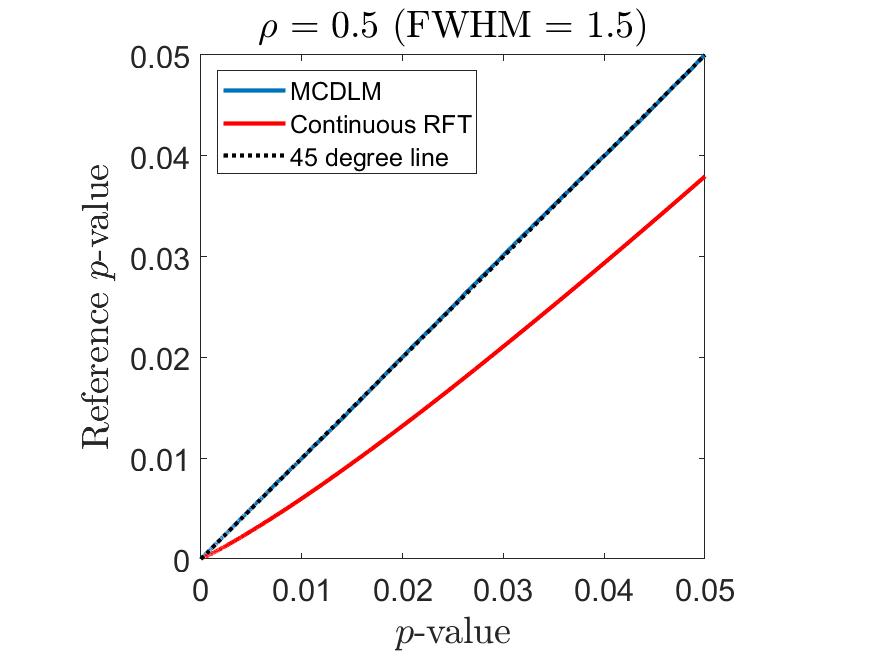}
\includegraphics[trim=70 5 100 5, clip,width=0.3\textwidth]{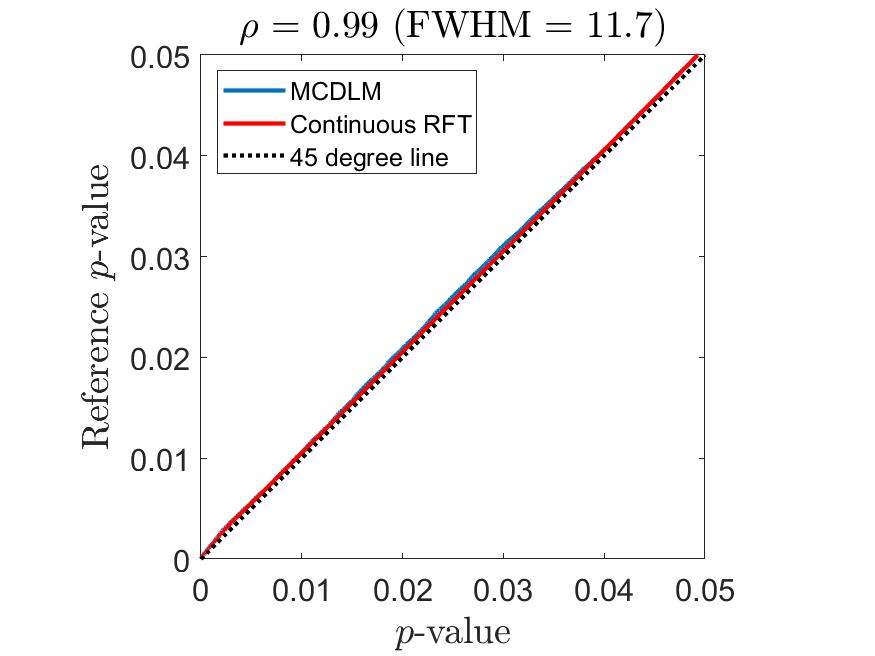}
\caption{Comparison of methods for calculating the peak height distribution of a Gaussianized 2D $t$-field with $\nu$ degrees of freedom. \label{fig.t2gauss2D}}
\end{figure}

\begin{figure}[!htp]
\centering
\begin{sideways}
\phantom{------------------}$\nu = 20$
\end{sideways}
\includegraphics[trim=70 5 100 5, clip,width=0.3\textwidth]{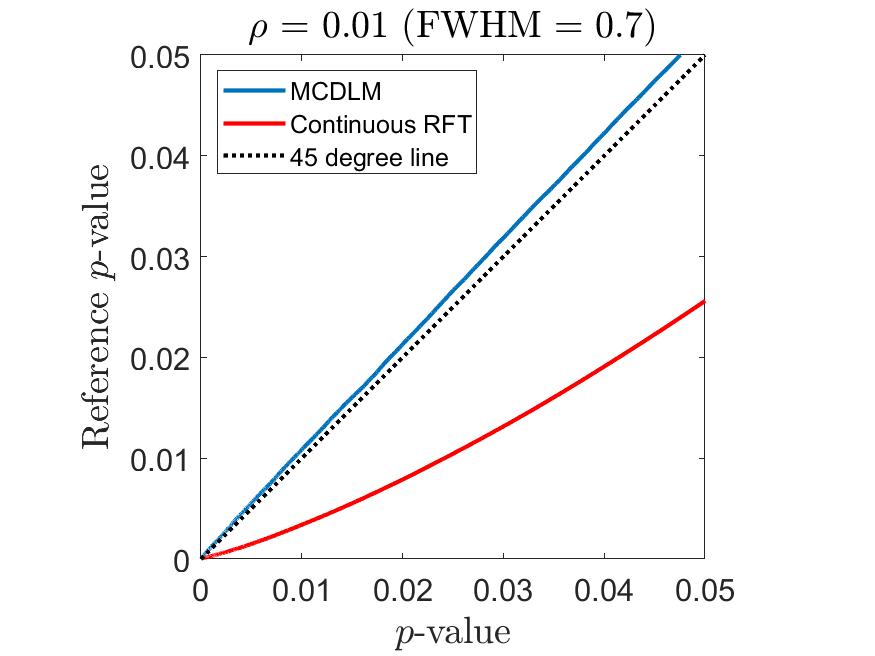}
\includegraphics[trim=70 5 100 5, clip,width=0.3\textwidth]{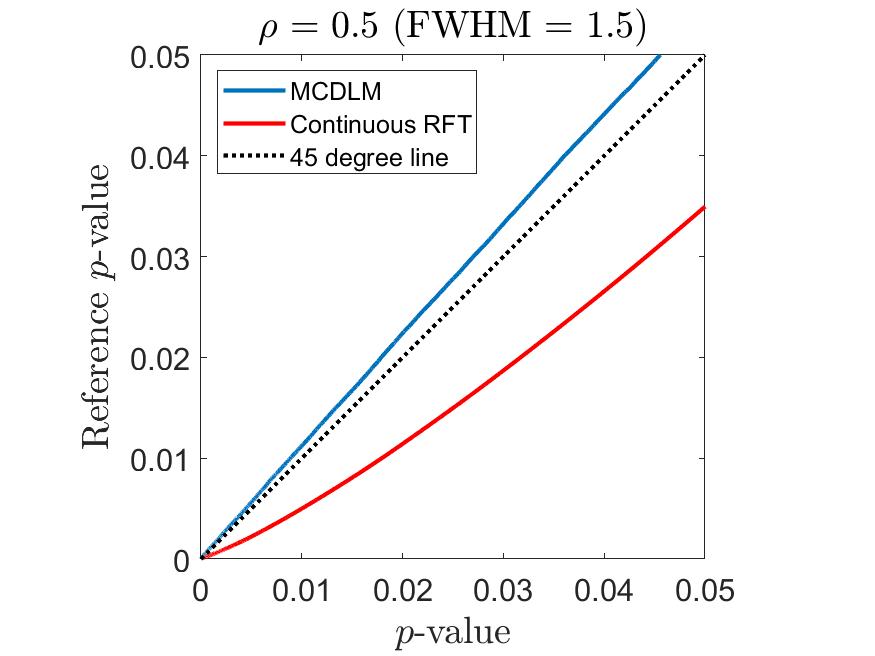}
\includegraphics[trim=70 5 100 5, clip,width=0.3\textwidth]{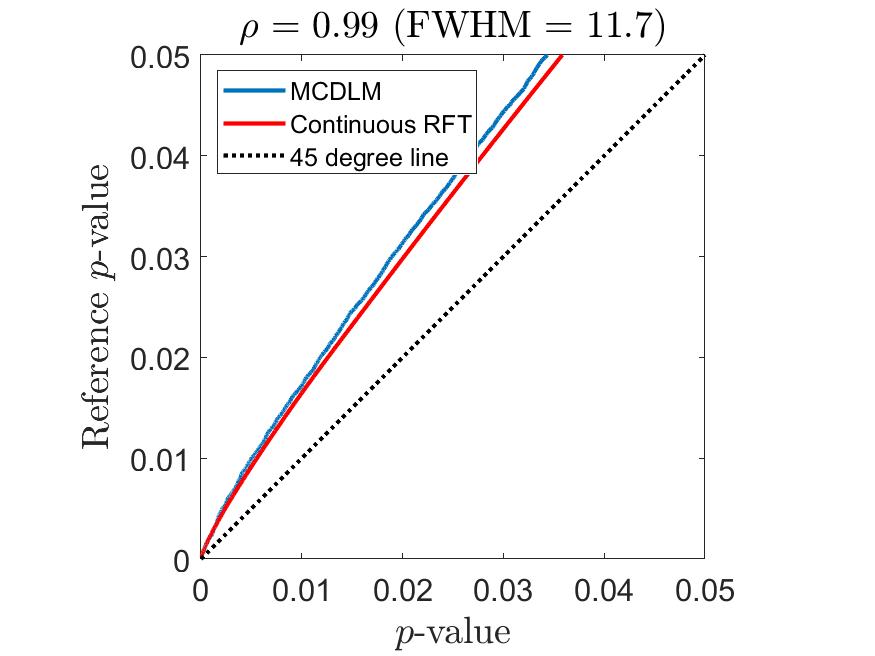}

\begin{sideways}
\phantom{------------------}$\nu = 50$
\end{sideways}
\includegraphics[trim=70 5 100 5, clip,width=0.3\textwidth]{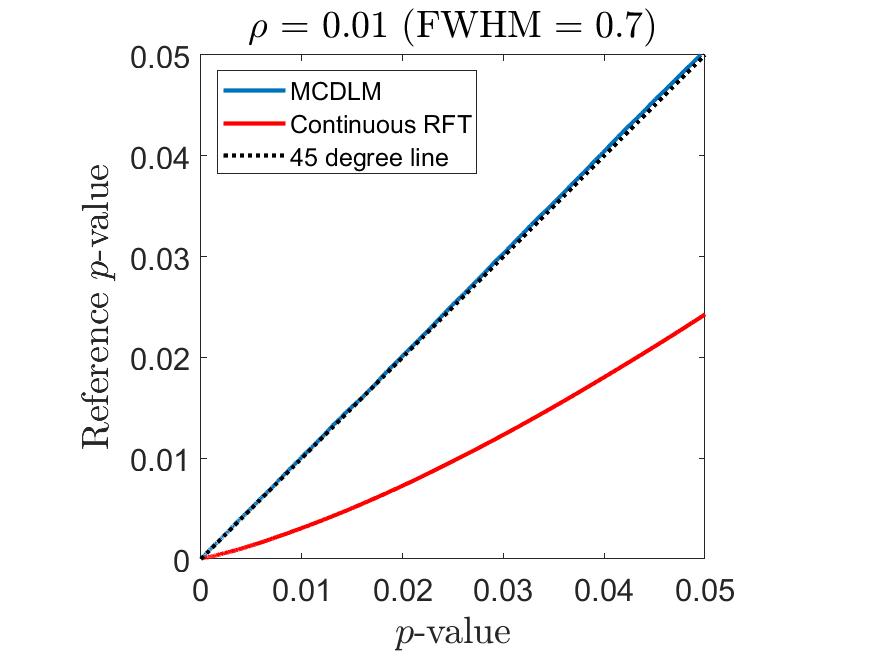}
\includegraphics[trim=70 5 100 5, clip,width=0.3\textwidth]{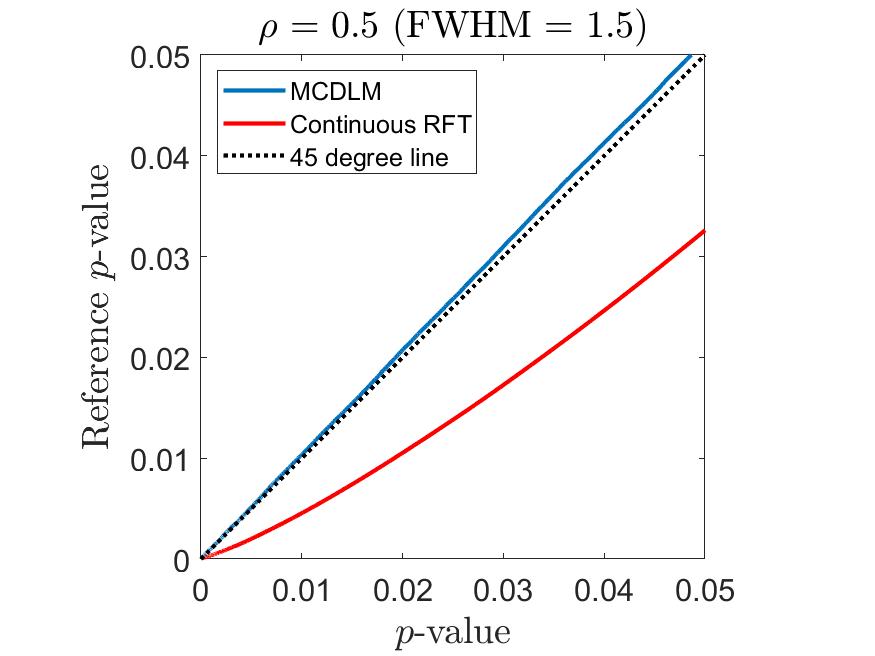}
\includegraphics[trim=70 5 100 5, clip,width=0.3\textwidth]{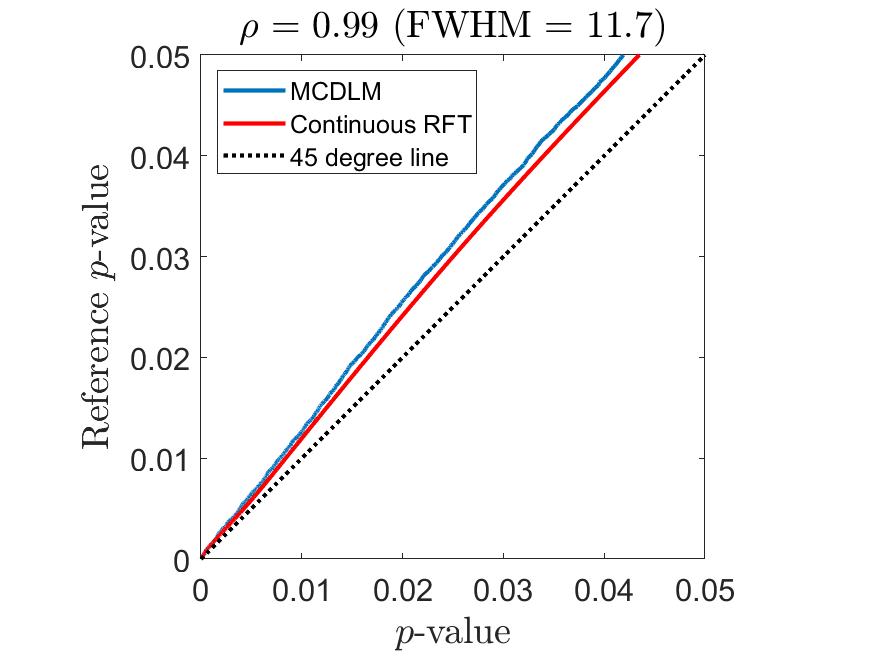}

\begin{sideways}
\phantom{------------------}$\nu = 200$
\end{sideways}
\includegraphics[trim=70 5 100 5, clip,width=0.3\textwidth]{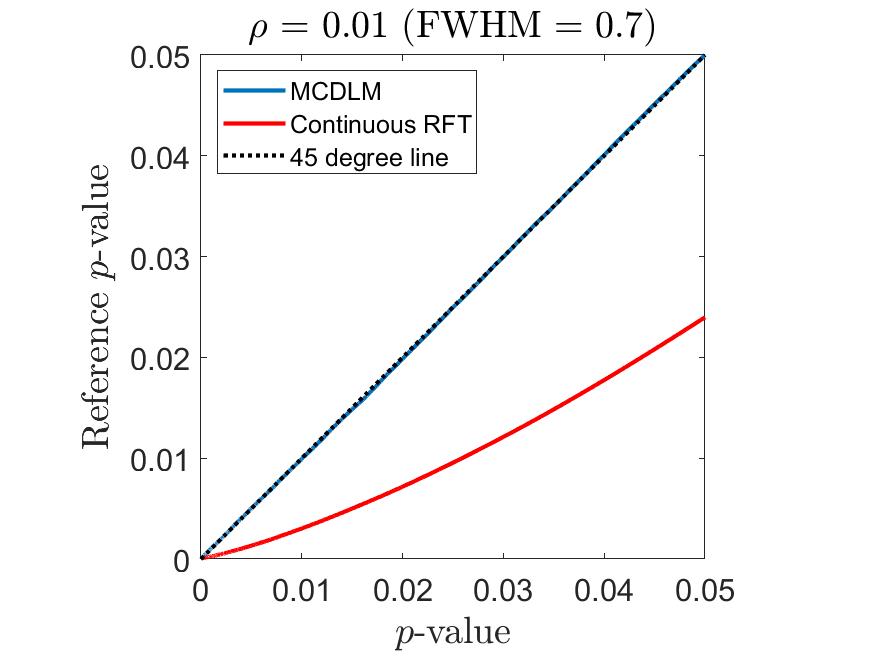}
\includegraphics[trim=70 5 100 5, clip,width=0.3\textwidth]{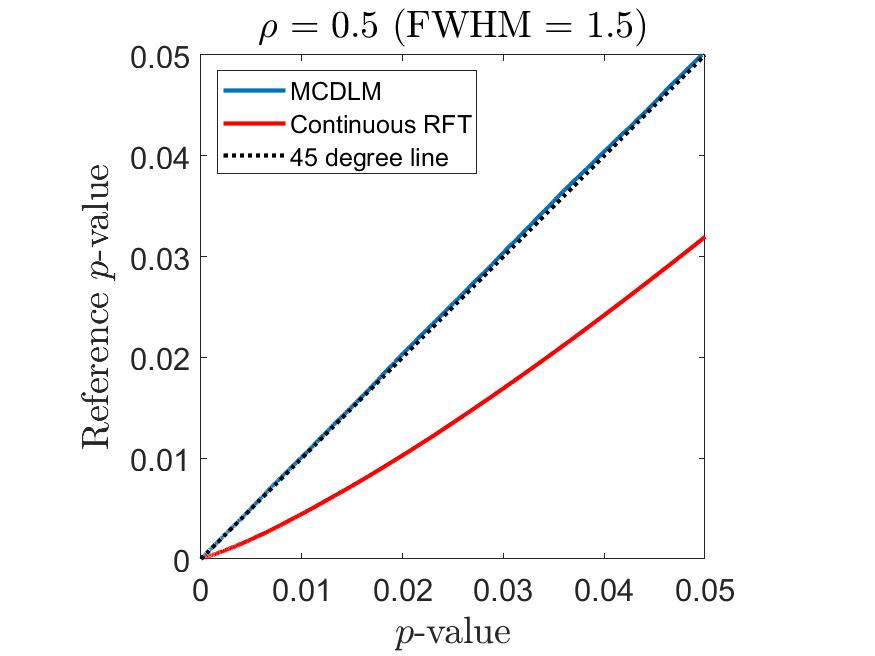}
\includegraphics[trim=70 5 100 5, clip,width=0.3\textwidth]{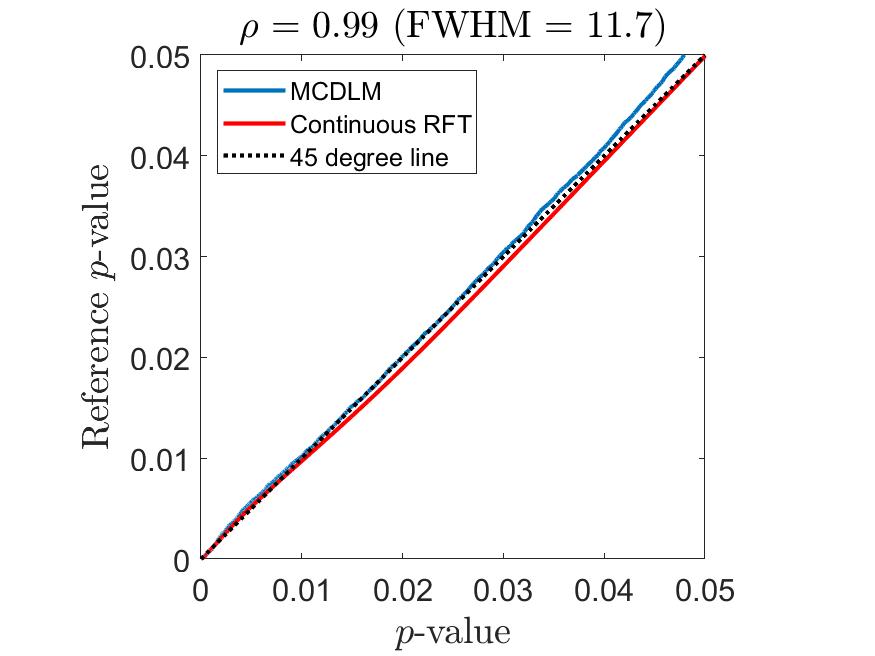}
\caption{Comparison of methods for calculating the peak height distribution of a Gaussianized 3D $t$-field with $\nu$ degrees of freedom.\label{fig.t2gauss3D}}
\end{figure}

\subsection{Additional simulation results for MCDLM based on estimated neighborhood covariance \label{appendix.d5}}
In Section \ref{sec4.4} we observed that the performance of MCDLM with the estimated covariance worsened at very high values of $\rho$. Here we consider additional values in order to better understand the performance in practice in particular we consider $\rho = 0.9, 0.93, 0.95$. In Figure \ref{fig.appendix.d5}, when $\rho < 0.97$, or FWHM $<6.7$, MCDLM with estimated covariance works well for both 2D and 3D cases given sufficiently many random fields available to estimate the covariance function (with nsim $\geq 50$ sufficing in most cases). By comparing the neighborhood covariance matrices \eqref{cov.theory} (theoretical covariance) and \eqref{cov.empirical} (estimated covariance) for $\rho = 0.99$ shown below in 2D, the estimate is very close to the true covariance. However, the covariance matrix in this case is nearly singular, which causes problems when simulating using Algorithm \ref{alg1}. 
\begin{equation}
\label{cov.theory}
\begin{bmatrix}
1.0000 &   0.9900 &   0.9606  &  0.9900   & 0.9801  &  0.9510 &   0.9606  &  0.9510  &  0.9227\\
    0.9900  &  1.0000  &  0.9900  &  0.9801  &  0.9900  &  0.9801 &   0.9510  &  0.9606  &  0.9510\\
    0.9606 &   0.9900  &  1.0000  &  0.9510   & 0.9801 &   0.9900  &  0.9227   & 0.9510  &  0.9606\\
    0.9900  &  0.9801 &   0.9510   & 1.0000   & 0.9900  &  0.9606  &  0.9900  &  0.9801 &   0.9510\\
    0.9801  &  0.9900  &  0.9801 &   0.9900  &  1.0000  &  0.9900   & 0.9801   & 0.9900  &  0.9801\\
    0.9510  &  0.9801 &   0.9900  &  0.9606  &  0.9900  &  1.0000 &   0.9510  &  0.9801 &   0.9900\\
    0.9606  &  0.9510  &  0.9227  &  0.9900  &  0.9801  &  0.9510  &  1.0000  &  0.9900 &   0.9606\\
    0.9510  &  0.9606  &  0.9510  &  0.9801  &  0.9900  &  0.9801  &  0.9900  &  1.0000 &   0.9900\\
0.9227   & 0.9510  &  0.9606  &  0.9510  &  0.9801  &  0.9900 &   0.9606 &   0.9900  & 1.0000

\end{bmatrix}
\end{equation}

\begin{equation}
\label{cov.empirical}
\begin{bmatrix}
1.0000  &  0.9906  &  0.9612  &  0.9903  &  0.9809  &  0.9516  &  0.9606  &  0.9513  &  0.9228\\
    0.9906  &  1.0000 &   0.9906  &  0.9811  &  0.9903  &  0.9809  &  0.9518  &  0.9606  &  0.9513 \\
    0.9612  &   0.9906  &   1.0000& 0.9521 &    0.9811 &    0.9903  &   0.9237  &   0.9518 &    0.9606 \\
    0.9903 &    0.9811 &    0.9521  &   1.0000  &   0.9906  &   0.9612  &   0.9903  &   0.9809  &   0.9516 \\
    0.9809  &   0.9903  &   0.9811  &   0.9906  &   1.0000  &   0.9906 &    0.9811  &   0.9903  &   0.9809 \\
    0.9516  &   0.9809 &    0.9903 &   0.9612  &   0.9906 &    1.0000 &    0.9521  &   0.9811&     0.9903 \\
    0.9606  &   0.9518  &   0.9237  &   0.9903  &   0.9811 &    0.9521 &   1.0000  &   0.9906 &    0.9612 \\
    0.9513   &  0.9606  &   0.9518 &    0.9809 &    0.9903  &   0.9811  &   0.9906  &   1.0000   &  0.9906 \\
    0.9228  &   0.9513 &    0.9606  &   0.9516  &   0.9809  &   0.9903  &   0.9612  &   0.9906  &   1.0000 \\
\end{bmatrix} 
\end{equation}

\begin{figure}[!htp]
\centering
\begin{sideways}
\phantom{------------------}2D
\end{sideways}
\includegraphics[trim=80 5 80 5, clip,width=0.3\textwidth]{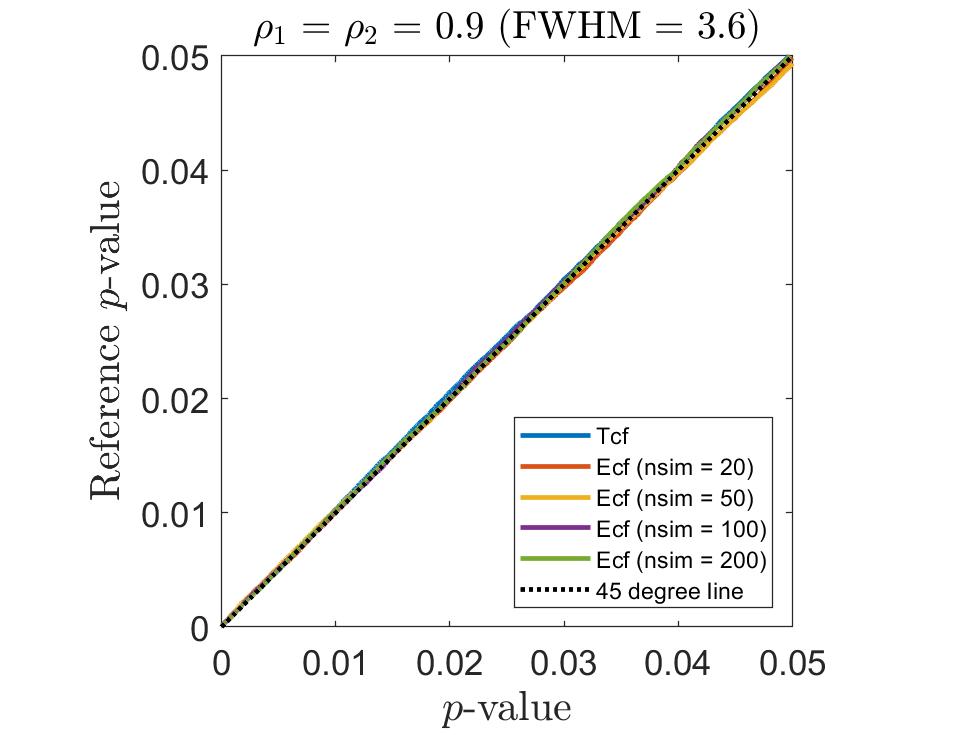}
\includegraphics[trim=80 5 80 5, clip,width=0.3\textwidth]{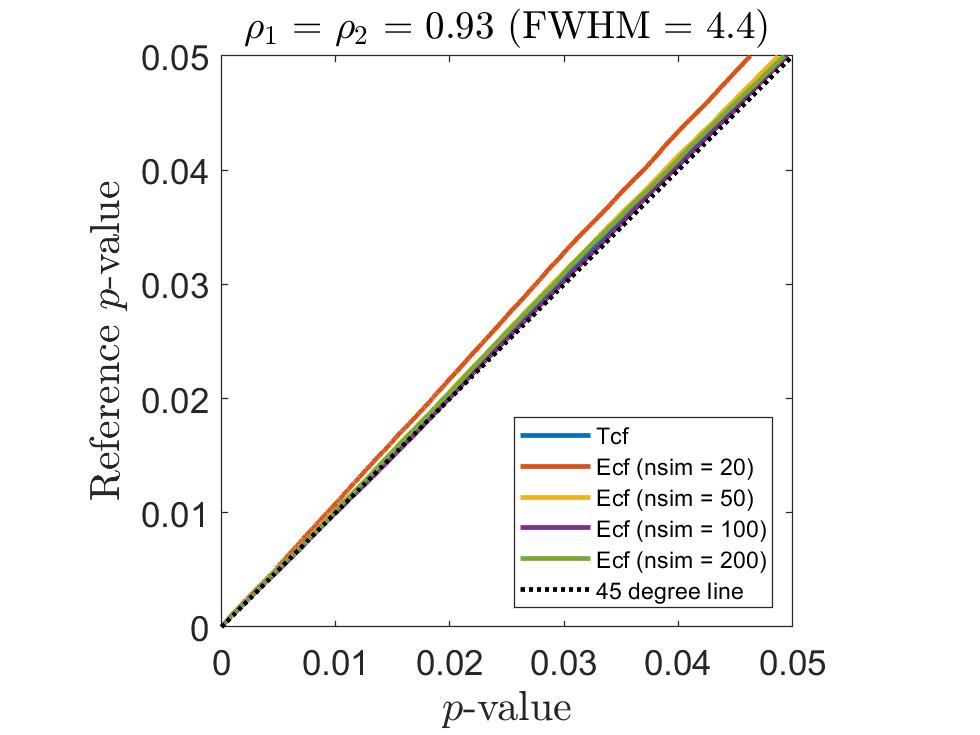}
\includegraphics[trim=80 5 80 5, clip,width=0.3\textwidth]{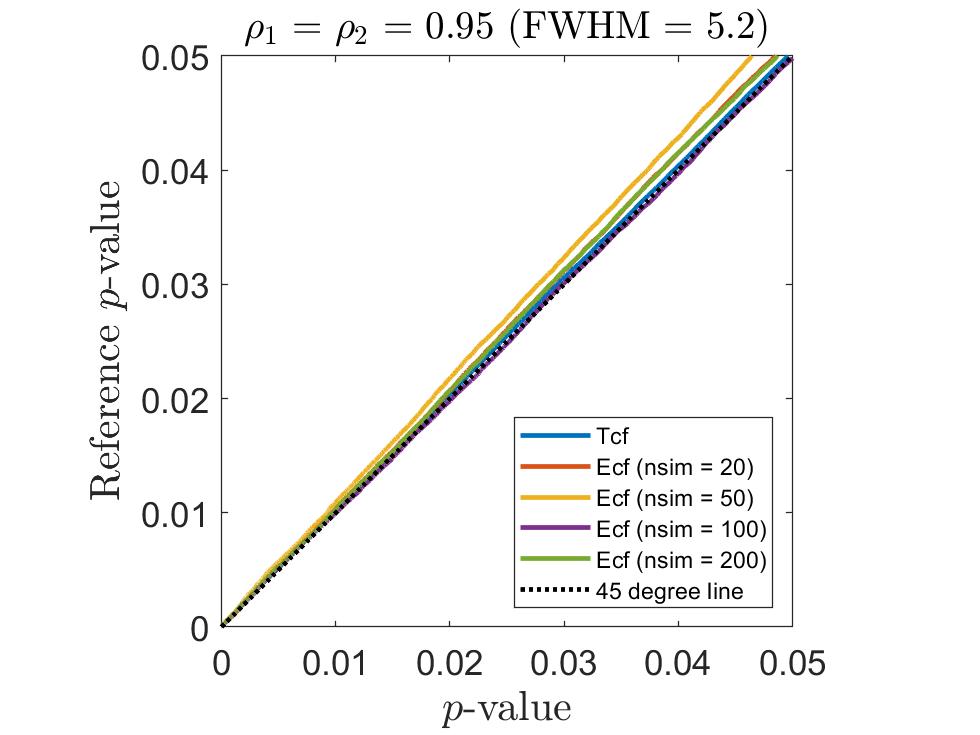}

\begin{sideways}
\phantom{------------------}3D
\end{sideways}
\includegraphics[trim=80 5 80 5, clip,width=0.3\textwidth]{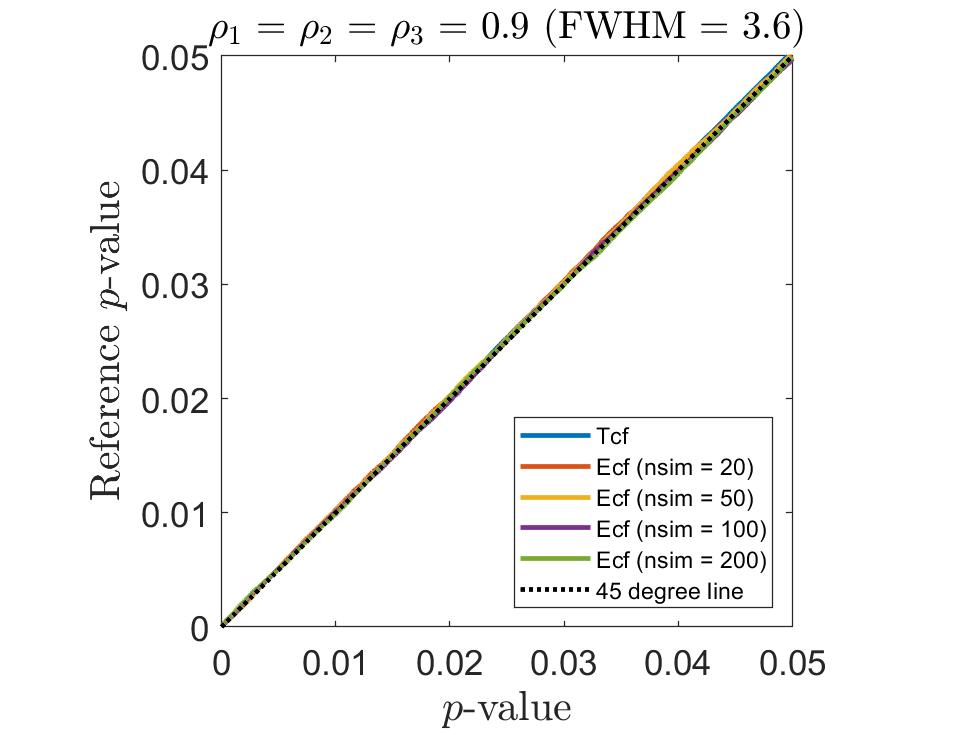}
\includegraphics[trim=80 5 80 5, clip,width=0.3\textwidth]{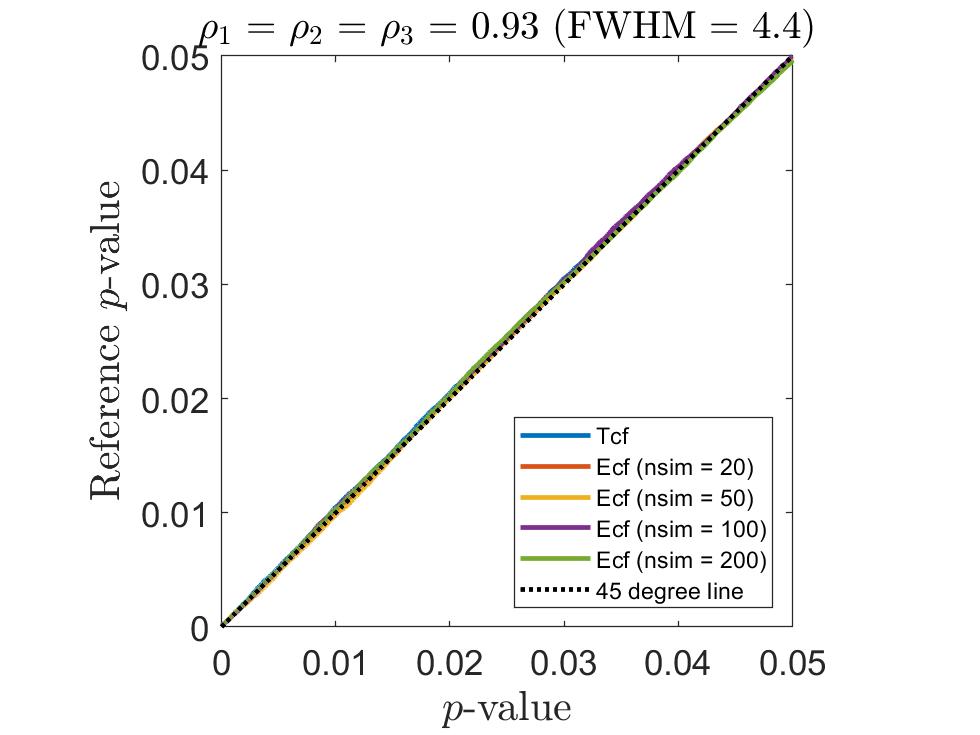}
\includegraphics[trim=80 5 80 5, clip,width=0.3\textwidth]{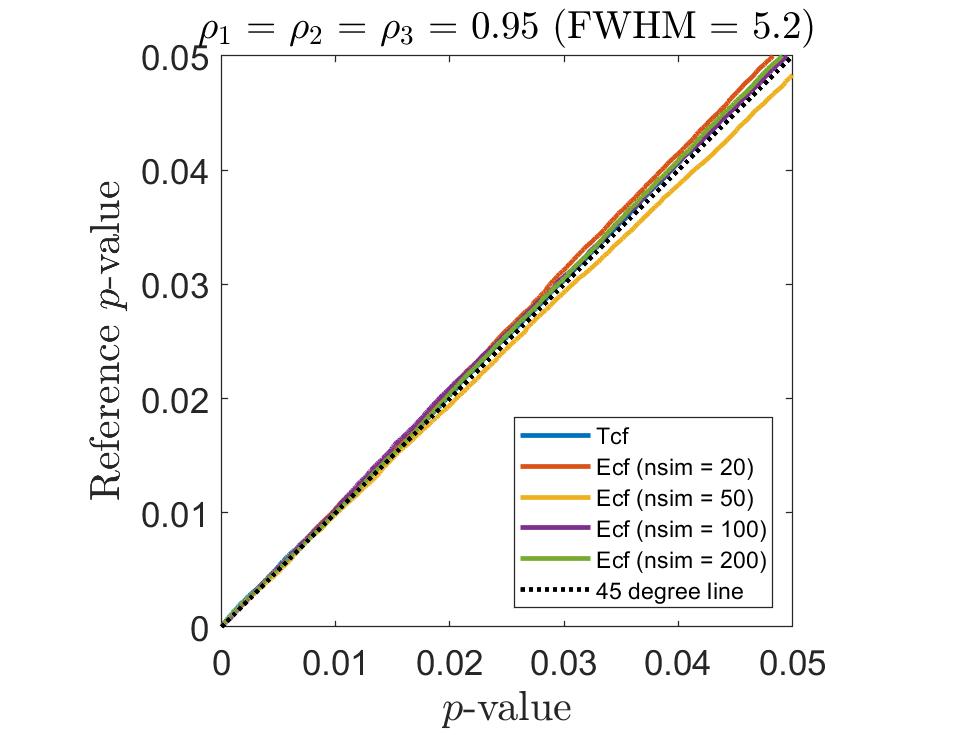}
\caption{Comparison of the peak height distribution obtained using MCDLM with different estimated neighborhood covariance for 2D and 3D isotropic Gaussian fields. The covariance functions used here are theoretical covariance function (Tcf) and empirically estimated covariance function (Ecf). The number of random fields used to estimate the covariance function is denoted using nsim. From left to right: $\rho = 0.9, 0.93, 0.95$. \label{fig.appendix.d5}}
\end{figure}

\section{Computation time table\label{appendix.g}}
The simulation times under different scenarios are shown in Table \ref{tab1}. In the case of 2D Gaussian fields, applying MCDLM with full or partial connectivity is comparable in terms of running time, but the look up table is much faster (between $5$ and $50$ times faster), becoming more efficient as the correlation increases. In the case of 3D Gaussian fields, running times are $5$ to $10$ times larger than in 2D. However, the look up table method is just as fast as in 2D, making its use very worthwhile from a computational standpoint. In the case of 2D $t$-field, the running times increase about 10 to 15 times from degrees of freedom equals $20$ to $200$. In the case of 3D $t$-field, the running times are $5$ to $10$ times larger than in 2D, making the code stop at a pre-specified threshold when degrees of freedom increases to $200$. The MCDLM with empirical covariance function has similar running times to MCDLM with theoretical covariance function, in the case of 2D Gaussian field with the number of fields to estimate the covariance small (200 or 1000). In 3D case, when the number of fields to estimate the covariance is 200, the running times are 2 times larger than the case of applying the theoretical covariance function. When the number of fields increases to 1000, the time to estimate the covariance further increases, leading to the running times 4-8 times larger than \nt{when using} the theoretical covariance function.

\begin{table}[!htp]
\small
\begin{center}
\caption{Running time (in seconds) of our MCDLM method under different scenarios. For $\rho = 0.01$ and $\rho = 0.05$,  $n = 1e6$ peak height values \nt{are simulated} and for $\rho = 0.99$, $n = 2e5$ peak height values. In some extreme cases, the code stops at a pre-specified threshold with the number of instances generated recorded in parentheses.\label{tab1}}
\begin{tabular}{|l|l|l|l|}
\hline & $\rho = 0.01$ ($n = 1e6$) & $\rho = 0.5$ ($n = 1e6$) & $\rho = 0.99$ ($n = 2e5$) \\ \hline
\multicolumn{4}{|c|}{2D Gaussian field}                                                                                          \\ \hline
Full connectivity (continuous covariance function)  & $9.81$                    & $13.83$                  & $106.41$                  \\ \hline
Full connectivity (discrete covariance function)    & $9.95$                    & $14.76$                  & $112.24$                  \\ \hline
Partial connectivity (discrete covariance function) & $7.43$                    & $12.15$                  & $110.86$                  \\ \hline
Full connectivity (look up table)                   & $1.67$                    & $1.77$                   & $1.88$                    \\ \hline
\multicolumn{4}{|c|}{2D $t$-field}                                                                                                 \\ \hline
$\nu = 20$                                          & $74.48$                   & $105.74$                 & $835.80$                  \\ \hline
$\nu = 50$                                          & $217.63$                  & $307.56$                 & $1992.58$                 \\ \hline
$\nu = 200$                                         & $1064.24$                 & $1356.55$                & $1646.86$ ($n = 37289$)     \\ \hline
\multicolumn{4}{|c|}{3D Gaussian field}                                                                                          \\ \hline
Full connectivity (continuous covariance function)  & $41.79$                   & $64.16$                  & $1395.14$ ($n = 1e5$)       \\ \hline
Full connectivity (discrete covariance function)    & $41.79$                   & $66.32$                  & $1377.69$ ($n = 1e5$)       \\ \hline
Partial connectivity (discrete covariance function) & $13.87$                   & $29.18$                  & $1753.71$                 \\ \hline
Full connectivity (look up table)                   & $1.56$                    & $2.04$                   & $1.76$                    \\ \hline
\multicolumn{4}{|c|}{3D $t$-field}                                                                                                 \\ \hline
$\nu = 20$                                          & $659.09$                  & $1043.29$                & $2199.84$ ($n = 11614$)     \\ \hline
$\nu = 50$                                          & $1719.52$                 & $2755.58$                & $2611.89$ ($n = 11244$)     \\ \hline
$\nu = 200$                                         & $5325.24$ ($n = 74403$)     & $5159.63$ ($n = 46409$)    & $11914.56$ ($n = 402$)      \\ \hline
\multicolumn{4}{|c|}{2D isotropic Gaussian field (empirical covariance case)}                                                    \\ \hline
number of fields = $200$                               & $9.56$                    & $15.13$                  & $208.33$                  \\ \hline
number of fields = $1000$                              & $9.91$                    & $14.19$                  & $145.04$                  \\ \hline
number of fields = $10{,}000$                             & $14.56$                   & $18.05$                  & $119.83$                  \\ \hline
\multicolumn{4}{|c|}{3D isotropic Gaussian field (empirical covariance case)}                                                    \\ \hline
number of fields = $200$                               & $82.94$                    & $105.66$                  & $1412.21$                  \\ \hline
number of fields = $1000$                              & $320.91$                    & $290.76$                  & $1722.92$                  \\ \hline
\end{tabular}
\end{center}
\end{table}

\end{document}